\newtheorem{corollary}{Corollary}
\newtheorem{definition}{Definition}
\newtheoremstyle{example}% ⟨name⟩
{0pt}%⟨Space above⟩
{2pt}%⟨Space below⟩
{\itshape}%⟨Body font⟩
{}%⟨Indent amount⟩
{\bfseries}% ⟨Theorem head font⟩
{.}%⟨Punctuation after theorem head⟩
{.5em}%⟨Space after theorem head⟩2
{}%⟨Theorem head spec (can be left empty, meaning ‘normal’)⟩
\theoremstyle{example}
\title{\vspace{-5ex}The Primal-Dual Greedy Algorithm for Weighted Covering Problems}
\date{\vspace{-5ex}}
\author[1]{Britta Peis}
\affil[1]{RWTH Aachen University, \{peis,wierz\}@oms.rwth-aachen.de}
\author[2]{José Verschae}
\affil[2]{Pontificia Universidad Católica de Chile, jverschae@uc.cl}
\author[1]{Andreas Wierz}
\newcommand{\ZZ}{\mathbb{Z}}
\newcommand{\RR}{\mathbb{R}}
\newlist{enumerate-A}{enumerate}{1}
\setlist[enumerate-A]{label=A\arabic*}
\newlist{enumerate-prop}{enumerate}{1}
\setlist[enumerate-prop]{label=(Prop\arabic*)}
\newlist{enumerate-prop'}{enumerate}{1}
\setlist[enumerate-prop']{label=(Prop'\arabic*)}
\newcommand{\sectionheadline}[1]{\paragraph{#1.}}
\begin{document}
\maketitle

\begin{abstract}
We present a general approximation framework for weighted integer covering problems.
In a weighted integer covering problem, the goal is to determine a non-negative integer solution $x$ to  system $\{ Ax \geq r \}$ minimizing a non-negative cost function $c^T x$ (of appropriate dimensions).
All coefficients in matrix $A$ are assumed to be non-negative.
We analyze the performance of a very simple primal-dual greedy algorithm and discuss conditions of  system $(A,r)$ that guarantee feasibility of the constructed solutions, and a bounded approximation factor.

We call  system $(A,r)$ a \emph{greedy system} if it satisfies certain properties introduced in this work. These properties highly rely on monotonicity and supermodularity conditions on $A$ and $r$, and can thus be seen as a far reaching generalization of contra-polymatroids.
Given a greedy system $(A,r)$, we carefully construct a truncated system
$(A',r)$
containing  the same integer feasible points.
We show that our primal-dual greedy algorithm when applied to the truncated system $(A',r)$ obtains a feasible solution to $(A,r)$ with approximation factor at most $2\delta + 1$, or $2\delta$ if $r$ is non-negative.
Here, $\delta$ is some characteristic of the truncated matrix $A'$ which is small in many applications.
The analysis is shown to be tight up to constant factors.

We also provide an approximation factor of $k (\delta + 1)$ if the greedy algorithm is applied to the intersection of multiple greedy systems.
The parameter $k$ is always bounded by the number of greedy systems but may be much smaller.
Again, we show that the dependency on $k$ is tight.

We conclude this paper with an exposition of classical approximation results based on primal-dual algorithms that are covered by our framework.
We match all of the known results. Additionally, we provide some new insight in a generalization of the flow cover on a line problem.
\end{abstract}

\section{Introduction}
\label{sec:introduction}

Throughout this paper, we discuss integer covering problems of the form
\begin{align*} \min_{x \in \ZZ_+^n}  \left\{ c^T x \mid Ax \geq r \right\} \tag{P} \label{LP:P} \end{align*}
where $A \in \RR^{m \times n}_+$, $r \in \RR^m$, and $c\in \RR_+^n$.
We denote the row index set by $\mathcal{L}$ and the column index set by $E$.
The entries of matrix $A$ are denoted by $a_{i,e}$ for row $i \in \mathcal{L}$ and column / item $e \in E$, and can be interpreted as the weight of element $e$ with respect to row / constraint $i$.
We call the constraints $\{a_i^T x \geq r(i),\ i\in \mathcal{L}\}$ \emph{weighted covering constraints} since they ensure that the sum of weights of the multiplicities of items selected into a solution $x\in \ZZ_+^{|E|}$ cover $r(i)$, for each $i\in \mathcal{L}$.
Throughout, we call $r(i)$  the \emph{rank of $i$}, and denote the support of $ i \in \mathcal{L}$ by $\text{supp}(i) = \{ e \in E : a_{i,e} > 0 \}$.
Note that the non-negativity assumption on the cost vector $c \in \RR_+^{|E|}$ is non-restrictive, as variables with non-positive objective value may be selected infinitely often, thus, either rendering constraints redundant or rendering the problem instance unbounded.
For ease of notation, we assume that there is a trivial row $i \in \mathcal{L}$ with $supp(i) = \emptyset$ and $r(i) \leq 0$.
And without loss of generality, we assume that there is some row $i' \in \mathcal{L}$ with $supp(i') = E$.

Many interesting combinatorial optimization problems can be formulated in this manner: subset cover, cut covering, optimization over contra-polymatroids, and the knapsack cover problem, to mention just a few.
Consider, for example, the special case where $A$ is the incidence matrix of the family of all subsets $S$ of a finite set $E$.
That is, $\mathcal{L} = 2^E$ and, for each subset $S \subseteq E$, $a_{S,e} = 1$ if $e \in S$, and $a_{S,e} = 0$, otherwise.
If $r: 2^E \to \RR$ satisfies the three conditions (i) $r(\emptyset) = 0$, (ii) $r(S) \leq r(T)$ whenever $S \subset T$, and (iii) $r(S) + r(T) \leq r(S \cap T) + r(S \cup T)$ for all $S, T\subseteq E$, the polytope $\{x\in \RR_+^{|E|}\mid Ax\ge r\}$ corresponding to system $(A,r)$ is called \emph{contra-polymatroid}.
The well-known primal-dual \emph{(contra-) polymatroid greedy algorithm} \cite{edmonds1970submodular} determines for each cost function $c$ an optimal integral solution for $(\ref{LP:P})$ and its dual linear program
\begin{align*} \max_{y \in \RR_+^{|\mathcal{L}|}}  \left\{ y^T r \mid \sum_{i \in \mathcal{L}} a_{i,e} y_i \leq c_e \; \forall e \in E \right\}. \tag{D} \label{LP:D} \end{align*}
The polymatroid greedy algorithm will be described below. Conditions (ii) and  (iii) are usually called \emph{monotonicity} and \emph{supermodularity}, respectively.
The goal of this paper is to develop and analyze an extension of the primal-dual polymatroid greedy algorithm towards more general systems $(A,r)$ which may consist of an arbitrary integral matrix $A \in \RR_+^{|\mathcal{L}| \times |E|}$ and an arbitrary rank function $r:\mathcal{L} \to \RR$.
Our primal-dual algorithm will return primal and dual candidate solutions, $x$ and $y$ with the properties: $y$ is a feasible solution to the dual problem (D), and $x$ is an integral vector.
We establish conditions on system $(A,r)$ which guarantee 1) feasibility of the primal solution $x$ and 2) a bounded performance guarantee.
We distinguish a primal and dual phase of the algorithm.
During the dual phase, the algorithm constructs a feasible dual solution $y^*$ together with a collection of \emph{bottleneck elements} $E^* \subseteq E$ corresponding to a set of tight dual constraints with respect to $y^*$.
The primal phase  assigns non-negative integral values to the bottleneck elements in such a way that the primal constraints are fulfilled at least on the support of $y^*$.

\sectionheadline{Dual phase}
In general, given a feasible dual solution $y$, we call  $i \in \mathcal{L}$ \emph{augmentable} if there exists some positive amount $\epsilon > 0$ such that $y + \epsilon \chi_i$ remains a feasible solution (here, as usual, $\chi_i\in \{0,1\}^{|\mathcal{L}|}$ is all-zero, except for component $i$, which is $1$).
Recall the dual phase of the polymatroid greedy algorithm:
Starting with the all-zero vector $y\equiv 0$, the algorithm iteratively selects an augmentable $i \in \mathcal{L}$ of largest rank
(as long as augmentable variables exist), and raises $y_i$ as far as possible, that is, until the dual constraint of some element $e \in E \setminus E^*$ becomes tight.
If we define the support of $i \in \mathcal{L}$ by $S_i = supp(a_i)$, the algorithm always selects an augmentable $i \in \mathcal{L}$ with inclusion-wise maximal set $S_i$.
A similar dual greedy approach can be applied to general packing problems of type (\ref{LP:D}) with arbitrary matrix $A \in \RR_+^{|\mathcal{L}| \times |E|}$.
In fact, it is probably the most naive approach one can think of:
Take some precedence rule $(\mathcal{L}, \preceq)$ on the row index set $\mathcal{L}$ such that the rank is monotone with respect to the precedence rule, that is, $i \preceq j$ implies $r(i) \leq r(j)$.
Apply the following iterative procedure.
\begin{enumerate}
\item Initially, let $y^* \equiv 0$.
\item While $\mathcal{L} \neq \emptyset$
\begin{enumerate}
\item Select $i \in \mathcal{L}$ with $r(i)$ maximum. If there are multiple choices, select one which is maximum with respect to $(\mathcal{L}, \preceq)$. \label{alg:dual-phase:select-row}
\item STOP if $r(i) \leq 0$. \label{alg:dual-phase:stop}
\item Raise $y^*_i$ until some element $e^* \in E \setminus E^*$ becomes tight.
% \item Add $e^*$ to $E^*$ and iterate with $\mathcal{L} = \mathcal{L} \setminus{\{e^*\}} = \{i \in \mathcal{L} : a_{i, e^*} = 0\}$.
\item Add $e^*$ to $E^*$ and iterate with $\mathcal{L} = \{i \in \mathcal{L} : a_{i, e^*} = 0\}$. \label{alg:dual-phase:lattice-subtract}
\end{enumerate}
\end{enumerate}
Certainly, this approach always returns a feasible dual solution $y^*$.
The performance of this algorithm can, however, be arbitrarily bad, even in case of binary matrices.
In order to get upper bounds on the performance guarantee of the algorithm, let us first extend the dual phase by an associated primal phase.

\sectionheadline{Primal phase}
Note that, for each problem of type (D), the dual greedy algorithm returns a feasible dual solution $y^* \in \RR_+^{|\mathcal{L}|}$ whose support forms a sequence
$$\text{supp}(y^*)=\{i_1, \dots, i_{\ell+1}\}$$ in the order in which variables were considered during the dual phase.
Here, we assume that $i_{\ell+1}$ is actually not raised but used in the STOP criterion \ref{alg:dual-phase:stop}).
Recall that we assumed that there is always some trivial row in $A$ with $r(i) \leq 0$ and $supp(i) = \emptyset$.
Thus, the STOP criterion is reached after at most $|E|$ iterations.
Moreover, the choice of variables to be increased during the dual phase implies $r(i_1) \geq \dots \geq r(i_\ell) > 0 \geq r(i_{\ell+1})$.
Let $E^*=\{e_1, \dots, e_\ell\} \subseteq E$ be the associated bottleneck-elements satisfying
$$e_j\in \text{supp}(a_{i_j}) \quad \mbox{ and } \quad a_{i_k, e_j}=0 \quad \forall 1 \leq j < k \leq \ell + 1.$$

In the special case where $(A,r)$ describes a contra-polymatroid, the sequence $\{i_1, \dots, i_{\ell+1}\}$ corresponds to a chain of sets $E_{\ell+1} \subset \dots \subset E_1$
with $E_1 = E$ and $E_{j + 1} = E_j \setminus {\{e_j\}}$ for $j = 1, \dots, \ell$.
Moreover, the chain satisfies $r(E_\ell) > 0 = r(E_{\ell+1})$.
The primal phase of the polymatroid greedy algorithm simply constructs a primal vector $x \in \ZZ_+^{|E|}$ by setting $x_{e_\ell} = r(E_\ell)$ and $x_{e_j} = r(E_j) - r(E_{j+1})$ for $j = \ell-1$ down to $j = 1$.

A natural extension of the polymatroid greedy algorithm towards more general systems $(A,r)$ is the following procedure:
Given the sequence $\{i_1, \dots, i_{\ell+1}\} \subseteq \mathcal{L}$ and $E^* = \{e_1, \dots, e_\ell\} \subseteq E$ as constructed during the dual phase, set
\begin{align}
x^*_{e_j} = \left\lceil \frac{r(i_j)^+ - r(i_{j+1})^+}{a_{i_j,e_j}} \right\rceil \quad j = 1, \dots, \ell. \label{alg:primal-phase:set-x}
\end{align}
Here, $r(i)^+ = \max\{r(i), 0\}$ is the positive part of the rank.
Our primal-dual greedy algorithm for weighted covering and packing problems of type (\ref{LP:P}) and (\ref{LP:D}) consists of the concatenation of the dual and primal phase as described above.
In this paper, we discuss properties of system $(A,r)$ that ensure the following two properties of the primal-dual greedy algorithm:
 (1) the primal greedy solution $x^*$ is feasible and
 (2) has a bounded performance guarantee.
We also provide complementing complexity theoretical results and lower bounds on integrality gaps under the assumed properties.

\sectionheadline{Related work}

Integrality of polyhedra described by systems $(A,r)$ plays an important role in combinatorial optimization.
Probably one of the most-famous conditions of a system $(A,r)$ in order to guarantee integrality of the polyhedron is totally-unimodularity of matrix $A$ and integrality of $r$, which was discussed by Hoffmann \cite{hoffman1976total}.
The same effect appears if $(A,r)$ is totally dual integral, a condition introduced by Giles and Pulleybank \cite{giles1979total}.

Instead of solving a linear program to optimality via a general purpose linear programming algorithm, the primal-dual method was widely used in order to obtain optimal solutions.
Many classical algorithms can also be cast as primal-dual methods, see e.g.\ Williamson and Shmoys \cite{williamson2011design} or Papadimitriou and Steiglitz \cite{papadimitriou1982combinatorial} for an overview of such connections.
For matrices $A$ with coefficients in $\{-1,0,1\}$, many structural properties ensuring optimality of the primal-dual method were studied.
Optimization over polymatroids is probably one of the most famous results of this type due to Edmonds \cite{edmonds1970submodular}.
Following this result, lots of generalizations were established, such as \cite{faigle2009general,faigle2000order,faiglegreedy,faigle2012ranking,faigle2010two,frank1999increasing,fujishige1984note,fujishige2004dual}, to mention just a few.
Sub- or supermodularity of $A$ (and/or $r$) usually plays an important role in these optimality results.

Almost four decades ago, the primal-dual method was first used in order to obtain approximation algorithms for integer programs.
Bar-Yehuda and Even obtained a primal-dual approximation algorithm for vertex cover \cite{bar1981linear}.
Following this work, network design problems were considered e.g.\ by Agrawal et al. \cite{agrawal1995trees} or Goemans and Williamson \cite{goemans1995general}.
The latter introduced a fairly general framework which applies to lots of network design problems modelled via so-called proper functions.
See e.g.\ Bertsimas and Tao \cite{bertsimas1998valid}, Williamson and Shmoys \cite{williamson2011design} or Vazirani \cite{vazirani2013approximation} for surveys.

Still, most approximation results in this direction consider only matrices $A$ with coefficients in $\{-1,0,1\}$.
Carnes and Shmoys \cite{carnes2008primal} considered the knapsack cover problem and showed that a formulation with strengthened inequalities can be solved via the primal-dual method.
The strengthening is due to Carr et al. \cite{carr2000strengthening}.
Bar-Noy et al.\ \cite{bar2001unified} solved the flow cover on a line problem via a local-ratio technique.
This technique can equivalently be seen as a primal-dual approach.

\sectionheadline{Our contribution and structure of the paper}

\begin{table}[tb]
 \begin{center}
 \begin{tabular}{l|l l}
          & \multicolumn{2}{c}{Approximation factor} \\
  Problem & Best known & Our bound \\ \hline
  Optimization over contra-polymatroids & 1 \cite{edmonds1970submodular} & 1 $^*$ \\
  Knapsack cover & 2 \cite{carnes2008primal,mccormick2016primal} & 2 $^*$ \\
  Subset cover & $\log (\max_i |T_i|)$ \cite{chvatal1979greedy} & $\max_i |T_i|$ $^*$ \\
  $p$-Contra-polymatroid intersection & $p$ \cite{jenkyns1976efficacy} & $p$ $^\dagger$ \\
  Flow cover on $k$ lines & $k = 1: 4$ \cite{bar2001unified} & $4k$ $^\dagger$ \\
                          & $k > 1:$ none & \\
  Knapsack cover with precedence constraints & $w$ \cite{mccormick2016primal} & $w$ $^\dagger$ \\
  Generalized steiner tree & $2$ \cite{goemans1997primal} & $2$ $^\dagger$ \\
  Minimum multicut on trees & $2$ \cite{garg1997primal} & $2$ $^\dagger$ \\
 \end{tabular}
 \caption{Exemplary results derivable from this work. $^*$ via Theorem \ref{thm:sv:greedy-approximation}, $^\dagger$ via Theorem \ref{thm:ps:greedy-approximation}.}
 \label{table:intro:result-summary}
 \end{center}
\end{table}

We call a system $(A,r)$ a \emph{greedy system}, if it satisfies certain properties which are formalized in Section \ref{sec:simple-version}.
Intuitively, these properties can be seen as generalizations of properties that define matroids.
In contrast to matroids, however, our results do not necessarily provide optimum solutions but only approximation guarantees.
Despite the fact that greedy systems can have an unbounded integrality gap, we can show that a careful truncation of coefficients of matrix $A$ obtains strong approximation results.
We provide an approximation factor of $(2\delta + 1)$ or $2 \delta$, if $r$ is non-negative.
Under certain conditions, the additional factor of $2$ vanishes.
The characteristic $\delta$ depends on the range of coefficients in the truncated matrix, which is small in many applications.
We also show that there are greedy systems such that the ratio between an optimum dual solution and the dual solution constructed by the greedy algorithm is $\delta$.
This implies that the dependency on $\delta$ in our approximation ratio can not be improved (up to constant factors) unless the type of solution computed in the dual phase is changed.
Finally, we show that the properties required for a greedy system are necessary in order to ensure that the discussed greedy algorithm always obtains a feasible solution.

To be able to further increase our modelling power, we provide a generalization in Section \ref{sec:product-version}.
The generalization can be seen as a composition of system $(A,r)$ of multiple greedy systems on the same column set.
We call such a system a greedy product system.
For greedy product systems, we can obtain similar results, proving approximation guarantees of $k (\delta + 1)$, or $k \delta$ if the truncation is a binary matrix.
Here, $k$ is a characteristic that is problem specific, but small in the discussed applications.
Again, we can show that the dependency on $k$ can not be improved unless the type of solution constructed in the dual phase is changed.

Table \ref{table:intro:result-summary} contains some exemplary results derivable from this paper.
A detailed discussion of the table including a description of the problems and all proofs can be found in the appendix.
Although the result regarding subset cover does not coincide with the best-known result, this discrepancy is possibly expected.
To the best of our knowledge, no logarithmic approximation guarantee based on a primal-dual analysis for subset cover is known.
Instead, primal averaging arguments are commonly used.
The result for flow cover on $k$ lines was not known before and looks like a promising direction of further modelling applications.

\section{Sufficient conditions for feasibility and bounded performance}
\label{sec:simple-version}

In this section, we will discuss sufficient conditions for system $(A,r)$ in order to ensure that the primal solution obtained by the primal-dual greedy algorithm (from the previous section) is always feasible and has a bounded performance guarantee.
Throughout this section, we assume that some given partial order $(\mathcal{L},\preceq)$ is fixed which is used in order to choose a dual variable to be increased.

We call a system $(A,r)$ a \emph{greedy system} (with respect to $(\mathcal{L},\preceq)$), if it satisfies the following properties.
Whenever we talk about a greedy system in the remainder of this work, we always assume that this is with respect to this fixed partial order.
In order to simplify notation, we will use $S \in \mathcal{L}$ to denote the row $i$ with $supp(i) = S \subseteq E$.
One of the subsequent properties will ask for the support of rows to be unique, hence, we can talk about \emph{the} rows.
\begin{enumerate-prop}
 \item $r$ is monotone non-decreasing on $(\mathcal{L}, \preceq)$: $r(S) \leq r(T)$ for all  $S \preceq T$. \label{prop:first} \label{prop:r-monotone}
 \item For each element $e \in E$, $a_{*,e}$ is monotone non-decreasing on $(\mathcal{L}, \preceq)$: $a_{S,e} \leq a_{T,e}$ for all $S \preceq T$. \label{prop:A-monotone}
 \item $(\mathcal{L},\preceq)$ is a modular lattice with join $\vee$ and meet $\wedge$ such that $i,j \in \mathcal{L}$ with $i \neq j$ implies $supp(i) \neq supp(j)$.
 Moreover, we require that for all $i,j \in \mathcal{L}$ and $e \in E$ it is true that $e \not\in supp(i) \cup supp(j) \Rightarrow e \not\in supp(i \vee j)$. \label{prop:lattice-modular}
 \item The system $(A,r)$ is \emph{weighted supermodular} on $(\mathcal{L},\preceq)$: \label{prop:A-r-coupling}
 $$ \frac{r(T) - r(S \wedge T)}{a_{T,e}} \leq \frac{r(S \vee T) - r(S)}{a_{S \vee T,e}} \quad \forall S,T \in \mathcal{L}, e \in T \setminus (S \wedge T).$$ \label{prop:last}
\end{enumerate-prop} \vspace*{-0.5cm}

In this paper, we will often talk about matrix $A$ being monotone.
By this, we mean that \ref{prop:A-monotone} holds.
A partial order $(\mathcal{L}, \preceq)$ is called a lattice if for any two elements $i,j \in \mathcal{L}$ there is a unique least common upper bound (join $i \vee j = \inf\{k \in \mathcal{L} : i,j \preceq k \}$) and a unique greatest common lower bound (meet $i \wedge j = \sup\{ k \in \mathcal{L} : k \preceq i,j \}$).
In case of the Boolean lattice $(2^E, \subseteq)$, the join and meet are set union and intersection, respectively.
A lattice is called modular, if for all $i,j,k \in \mathcal{L}$ with $i \preceq k$ the following holds: $i \vee (j \wedge k) = (i \vee j) \wedge k$.
A subset $I \subseteq \mathcal{L}$ is called a \emph{chain} if the ordering relation $\preceq$ yields a total order of $I$.
A chain is called \emph{dense}, if there is no $k \in \mathcal{L} \setminus I$ which can be added to $I$ without violating the chain property.
Modularity will be an important property as it implies the following:
Let $i,j \in \mathcal{L}$ and consider any dense chains $I_i \subseteq \{k \in \mathcal{L} : i \preceq k \preceq i \vee j \}$ and $I_j \subseteq \{ k \in \mathcal{L} : i \wedge j \preceq k \preceq j \}$.
Modularity implies that there is an isomorphism $\psi: I_i \rightarrow I_j$ (c.f.\ Theorem 13 in \cite{birkhoff1948lattice}).
The fact that this isomorphism exists will help in order to prove feasibility and a bounded approximation factor of our constructed primal solution.
For more information on lattice theory, see e.g.\ \cite{birkhoff1948lattice}.

Note that \ref{prop:A-monotone} and the fact that the support of row $S \in \mathcal{L}$ equals $S$ implies that the support on chains is monotonically increasing, that is, $S \preceq T$ implies $S \subseteq T$.
For a row $S \in \mathcal{L}$ and element $e \in E$, we use the notation $\phi_e(S) = \max \{ T \in \mathcal{L} \mid T \preceq S, e \not\in T \}$ to denote the maximum row $T \preceq S$, which does not contain element $e$ in its support.
Observe that $\phi$ always returns a unique element due to Lemma \ref{lem:sv:sublattice}.
The removal of an element from the lattice obtains a sublattice of $\mathcal{L}$.
Note that $\phi_e(S) = S$, if $e \not\in S$.
The function $\phi_e(S)$ has a strong connection with Line~\ref{alg:dual-phase:lattice-subtract}) in the dual phase of the greedy algorithm:
If $S_{\ell+1} \prec \dots \prec S_1$ is the support of the dual solution obtained with bottleneck elements $e_i, 1 \leq i \leq \ell$, then $S_{i+1} = \phi_{e_i}(S_i)$.
In order to make this observation, it is helpful to realize that the restriction in Line~\ref{alg:dual-phase:lattice-subtract}) describes a sublattice of $\mathcal{L}$.
Since $\mathcal{L}$ is a lattice, it always contains a unique maximum element (the join of all elements).
Hence, also the row chosen in Line~\ref{alg:dual-phase:select-row}) will always be unique.
And due to \ref{prop:r-monotone}, the maximum row will also have the maximum rank value.
In other words, instead of computing the sublattice in iteration $i$ of Line~\ref{alg:dual-phase:lattice-subtract}) explicitly, we can select $S_{i+1} = \phi_{e_i}(S_i)$ in order to choose the row to be considered in the subsequent iteration.

We will use two observations in the following sections.
These will ensure that the step in Line \ref{alg:dual-phase:lattice-subtract}) maintains the properties of a greedy system.

\begin{restatable}{lemma}{svObsPhiOrderPreserving}
 Let $S,T \in \mathcal{L}$ and $e \in E$, then $\phi_e(S) \preceq \phi_e(T)$.
\end{restatable}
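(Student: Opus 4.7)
The statement as printed omits a hypothesis; in view of the name \textsf{OrderPreserving} I read it as: if $S \preceq T$, then $\phi_e(S) \preceq \phi_e(T)$. Granted this, the claim is a one-line unpacking of the definition together with the sublattice observation referred to as Lemma~\ref{lem:sv:sublattice}, so my plan is to make those two ingredients explicit and then conclude.

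First I would fix the notation $\mathcal{L}_e := \{R \in \mathcal{L} : e \notin \mathrm{supp}(R)\}$ and record that by Lemma~\ref{lem:sv:sublattice} this is a sublattice of $\mathcal{L}$. This is exactly what ensures $\phi_e$ is well-defined: for any $W \in \mathcal{L}$ the set $\{R \in \mathcal{L}_e : R \preceq W\}$ is closed under joins (joins of elements $\preceq W$ are $\preceq W$, and joins of elements in $\mathcal{L}_e$ remain in $\mathcal{L}_e$), so it contains a unique maximum element, which is $\phi_e(W)$.

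Next I would let $U := \phi_e(S)$. By definition $U \in \mathcal{L}_e$ and $U \preceq S$. Using the hypothesis $S \preceq T$ and transitivity of $\preceq$, this yields $U \preceq T$. Thus $U$ lies in the set $\{R \in \mathcal{L}_e : R \preceq T\}$, whose maximum element is $\phi_e(T)$. Hence $\phi_e(S) = U \preceq \phi_e(T)$, which is the claim.

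The argument is essentially formal and I do not expect any obstacle: the only non-trivial ingredient is Lemma~\ref{lem:sv:sublattice}, which I am allowed to cite and which guarantees both existence and uniqueness of the maxima $\phi_e(S)$ and $\phi_e(T)$. Once those are in hand, monotonicity is immediate from the fact that every candidate for $\phi_e(S)$ is, under $S \preceq T$, also a candidate for $\phi_e(T)$.
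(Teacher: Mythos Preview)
Your reading of the missing hypothesis $S \preceq T$ is correct, and your argument is valid: once $\phi_e$ is well-defined (via Lemma~\ref{lem:sv:sublattice}), the observation that $\phi_e(S)$ is itself a candidate in the set $\{R \in \mathcal{L}_e : R \preceq T\}$ immediately gives $\phi_e(S) \preceq \phi_e(T)$.

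The paper's own proof is in the same spirit but establishes the slightly stronger identity $\phi_e(S) = S \wedge \phi_e(T)$ by unfolding the definition of the meet; the desired inequality then drops out from $S \wedge \phi_e(T) \preceq \phi_e(T)$. Your route is the more direct half of that computation. The extra content in the paper's version---namely the exact formula $\phi_e(S) = S \wedge \phi_e(T)$---is implicitly invoked later (for instance in the proof of Lemma~\ref{lem:sv:marginal-increase-supermodularity}, where modularity is used to identify $\bar S \wedge \bar T$ with $\phi_e(S)$), so if you intend to rely on that identity downstream you should make it explicit; otherwise your shorter argument suffices for the lemma as stated.
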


\begin{proof}
 See appendix for the proof.
\end{proof}

\begin{restatable}{lemma}{svObsSublattice}
 Let $(A,r)$ be a greedy system and $e \in E$.
 Then the system restricted to $\mathcal{L}' = \{ L \in \mathcal{L} : e \not\in L \}$ is a greedy system.
 \label{lem:sv:sublattice}
\end{restatable}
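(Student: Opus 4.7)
The plan is to check each of the four defining properties of a greedy system for the restricted system $(A,r)$ on $\mathcal{L}' = \{L \in \mathcal{L} : e \not\in L\}$, equipped with the restriction of $\preceq$. The structural core is the verification that $\mathcal{L}'$ is a sublattice of $\mathcal{L}$, i.e., that it is closed under the join and meet of $\mathcal{L}$. Once this closure is in place, the lattice operations on $\mathcal{L}'$ coincide with those on $\mathcal{L}$, and the remaining properties transfer almost immediately from the ambient greedy system.

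First I would verify closure under $\wedge$ and $\vee$. Given $S, T \in \mathcal{L}'$, we have $S \wedge T \preceq S$, and monotonicity of the $e$-column (property \ref{prop:A-monotone}) gives $a_{S \wedge T, e} \leq a_{S, e} = 0$, hence $e \not\in S \wedge T$. For the join, since $e \not\in \text{supp}(S) \cup \text{supp}(T)$, the second clause of property \ref{prop:lattice-modular} gives $e \not\in \text{supp}(S \vee T)$. Thus $S \wedge T, S \vee T \in \mathcal{L}'$, so $\mathcal{L}'$ is closed under both lattice operations.

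With closure established, the remaining verifications are routine. Modularity is inherited by any sublattice, since the identity $i \vee (j \wedge k) = (i \vee j) \wedge k$ for $i \preceq k$ holds in $\mathcal{L}$ and involves only elements of $\mathcal{L}'$. Support uniqueness is automatic since $\mathcal{L}' \subseteq \mathcal{L}$, and the support clause of \ref{prop:lattice-modular} for an arbitrary $f \in E$ carries over because the join in $\mathcal{L}'$ coincides with that in $\mathcal{L}$. Monotonicity of $r$ and of each column $a_{*,f}$ (properties \ref{prop:r-monotone} and \ref{prop:A-monotone}) restricts trivially. For weighted supermodularity \ref{prop:A-r-coupling}, given $S, T \in \mathcal{L}'$ and $f \in T \setminus (S \wedge T)$, both $S \vee T$ and $S \wedge T$ lie in $\mathcal{L}'$, so the ambient inequality applies verbatim. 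The only step invoking a genuine property of the greedy system is the closure of $\mathcal{L}'$ under meet, which relies on monotonicity of $A$ to propagate the vanishing of the $e$-column from $S$ down to $S \wedge T$; this is the main (and fairly modest) obstacle, and without \ref{prop:A-monotone} the removal operation need not preserve the lattice structure at all.
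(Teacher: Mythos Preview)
Your argument is correct and, in fact, cleaner than the paper's own proof. You directly establish that $\mathcal{L}'$ is a \emph{sublattice} of $\mathcal{L}$ by checking closure under $\wedge$ (via monotonicity of the $e$-column, \ref{prop:A-monotone}) and under $\vee$ (via the support clause in \ref{prop:lattice-modular}). From there, modularity is inherited for free, and properties \ref{prop:r-monotone}, \ref{prop:A-monotone}, \ref{prop:A-r-coupling} restrict verbatim because the lattice operations on $\mathcal{L}'$ agree with those on $\mathcal{L}$.

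The paper takes a more circuitous route: it first argues that $\mathcal{L}'$ has a unique maximal element (essentially your join-closure argument applied only to maximal elements), and then proves modularity by contradiction via Birkhoff's $N_5$ characterization, deriving contradictions with \ref{prop:A-monotone} in each configuration. That detour is unnecessary once one observes, as you do, that closure under both operations already makes $\mathcal{L}'$ a sublattice; modularity then needs no separate argument. Your approach also makes it transparent why \ref{prop:A-r-coupling} carries over (the join and meet appearing in that inequality are computed in $\mathcal{L}$ and stay inside $\mathcal{L}'$), a point the paper leaves implicit under ``the remaining properties will clearly hold.'' One small quibble: your closing remark singles out meet-closure as the only step needing a genuine greedy-system hypothesis, but join-closure equally relies on the support clause of \ref{prop:lattice-modular}; both are essential, and neither is harder than the other.
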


\begin{proof}
 See appendix for the proof.
\end{proof}

\sectionheadline{Feasibility}
Let us assume that the greedy algorithm returns the dual solution $y^*$ with support (in the order the sets occurred during the dual phase) $S_1, \dots, S_{\ell+1}$ and bottleneck elements $e_1,\dots,e_\ell$, and let $x^*$ be the corresponding primal vector.
Then $r(S_{\ell+1}) \leq 0 < r(S_\ell)$.

First, note that \ref{prop:r-monotone} and \ref{prop:A-monotone} and the choice of $S$ in every iteration ensures that $S_{\ell+1} \prec S_\ell \prec \dots \prec  S_1$ forms a chain in $(\mathcal{L},\preceq)$.
Moreover, the rank differences considered in order to construct $x^*$ are always non-negative.
Hence, $x^*$ will be a non-negative vector.
The following Lemma \ref{lem:sv:set-feasible} shows that it will also be a feasible primal solution.

Before we prove feasibility of $x^*$, we will obtain two simple observations.
The first is regarding the marginal increase version of supermodularity, the second is regarding the behavior of elements just before the rank becomes negative.

\begin{restatable}{lemma}{obsSvMarginalIncreaseSupermodularity}
 Let $S \preceq T \in \mathcal{L}$ and let $e \in S$.
 Then $\frac{r(S) - r(\phi_e(S))}{a_{S,e}} \leq \frac{r(T) - r(\phi_e(T))}{a_{T,e}}$.
 \label{lem:sv:marginal-increase-supermodularity}
\end{restatable}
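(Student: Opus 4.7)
The plan is to apply the weighted supermodularity axiom (Prop~\ref{prop:A-r-coupling}) exactly once, with the judicious choice $S_0 = T' := \phi_e(T)$ and $T_0 = S$. First I check legality: by hypothesis $e \in S$, and by definition of $\phi_e$ we have $e \notin T'$. Since $T' \wedge S \preceq T'$, Prop~\ref{prop:A-monotone} gives $a_{T' \wedge S,e} \le a_{T',e} = 0$, so $e \notin T' \wedge S$, i.e.\ $e \in S \setminus (T' \wedge S)$. The axiom then yields
\begin{align*}
\frac{r(S) - r(T' \wedge S)}{a_{S,e}} \;\le\; \frac{r(T' \vee S) - r(T')}{a_{T' \vee S,e}}.
\end{align*}

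Two identifications then convert this into the desired inequality. First, I want $T' \wedge S = \phi_e(S) =: S'$: the direction $T' \wedge S \preceq S'$ is immediate from $T' \wedge S \preceq S$ together with $e \notin T' \wedge S$ and the maximality in the definition of $\phi_e(S)$; the reverse direction uses the preceding lemma, which gives $S' \preceq T'$, combined with the trivial $S' \preceq S$. Second, I want $T' \vee S = T$. The direction $T' \vee S \preceq T$ is immediate from $T', S \preceq T$. For the reverse I plan to combine the support-subadditivity clause of Prop~\ref{prop:lattice-modular} with the support-monotonicity derived from Prop~\ref{prop:A-monotone} to obtain $\mathrm{supp}(T' \vee S) = \mathrm{supp}(T') \cup \mathrm{supp}(S)$, and then leverage the unique-support clause of Prop~\ref{prop:lattice-modular} together with the maximality built into the definition of $\phi_e(T)$ to conclude $T' \vee S = T$.

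The main obstacle is precisely this identification $T' \vee S = T$. In the Boolean case $T' = T \setminus \{e\}$ forces $T' \cup S = T$ whenever $e \in S \subseteq T$ for free, but in a general modular lattice the join is not the set-theoretic union, so one must genuinely use the subadditivity and uniqueness clauses of Prop~\ref{prop:lattice-modular} in concert with the maximality definition of $\phi_e$. Once both identifications are in place, substituting them into the displayed inequality yields $\frac{r(S) - r(S')}{a_{S,e}} \le \frac{r(T) - r(T')}{a_{T,e}}$, which is exactly the statement of the lemma.
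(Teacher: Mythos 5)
Your route is exactly the paper's: apply \ref{prop:A-r-coupling} to the pair $\bar S=\phi_e(T)$, $\bar T=S$ and identify $\bar S\wedge\bar T=\phi_e(S)$ and $\bar S\vee\bar T=T$. Your verification of the legality condition $e\in S\setminus(\phi_e(T)\wedge S)$ and your two-sided proof of the meet identity are correct, and in fact more careful than the paper, which disposes of both identities with the single phrase ``due to modularity.''

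The join identity $\phi_e(T)\vee S=T$ is where your plan does not close, and you are right to single it out as the crux. The argument you sketch would give $\mathrm{supp}(\phi_e(T)\vee S)=\mathrm{supp}(\phi_e(T))\cup\mathrm{supp}(S)$, but nothing in \ref{prop:first}--\ref{prop:last} forces this union to equal $\mathrm{supp}(T)$: the lattice need not contain a row of support $T\setminus\{e\}$, so $\phi_e(T)$ can sit far below $T$. Concretely, let $\mathcal L$ be the chain $\emptyset\prec\{e\}\prec\{e,f\}\prec\{e,f,g\}$; this satisfies \ref{prop:lattice-modular}, and \ref{prop:A-r-coupling} is vacuous on a chain because $e\in T\setminus(S\wedge T)$ forces $S\wedge T=S$ and $S\vee T=T$, reducing the axiom to a tautology. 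Taking $S=\{e\}$, $T=\{e,f,g\}$ gives $\phi_e(T)=\emptyset$ and $\phi_e(T)\vee S=S\ne T$, so your identification fails and the invoked instance of \ref{prop:A-r-coupling} degenerates to $\tfrac{r(S)-r(\emptyset)}{a_{S,e}}\le\tfrac{r(S)-r(\emptyset)}{a_{S,e}}$. Worse, choosing on this chain the monotone data $r(\emptyset)=0$, $r(\{e\})=r(\{e,f\})=r(T)=10$, $a_{\{e\},e}=1$, $a_{T,e}=100$ satisfies all of \ref{prop:first}--\ref{prop:last}, yet the lemma's conclusion would read $10\le\tfrac1{10}$. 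So the gap is not merely in your write-up: the join identity (and hence the lemma) needs an additional hypothesis on $\mathcal L$ --- for instance that $\phi_e(T)$ is always covered by $T$, as in the Boolean case where $\phi_e(T)=T\setminus\{e\}$ --- which the paper assumes silently. Your proof is therefore as complete as the paper's, but neither is complete as written; isolating and stating that missing condition is what would turn your plan into a proof.
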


\begin{proof}
 See appendix for the proof.
\end{proof}

\begin{restatable}{lemma}{obsSvMonotoneAroundZero}
 Let $S \preceq T \in \mathcal{L}$ and let $e \in S$ such that $r(S), r(T) \geq 0$ and $r(\phi_e(S)), r(\phi_e(T)) \leq 0$.
 Then $\frac{r(S)}{a_{S,e}} \leq \frac{r(T)}{a_{T,e}}$
 \label{lem:sv:monotone-around-zero}
\end{restatable}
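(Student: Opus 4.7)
The plan is to deduce the claim by combining the marginal-increase inequality of Lemma~\ref{lem:sv:marginal-increase-supermodularity} with the sign information about $r(\phi_e(\cdot))$. Applying that lemma to $S \preceq T$ and $e \in S$ gives
\begin{equation*}
\frac{r(S) - r(\phi_e(S))}{a_{S,e}} \;\leq\; \frac{r(T) - r(\phi_e(T))}{a_{T,e}}.
\end{equation*}
Moving the $r(\phi_e(\cdot))$ terms to the other side, this rearranges to
\begin{equation*}
\frac{r(S)}{a_{S,e}} - \frac{r(T)}{a_{T,e}} \;\leq\; \frac{r(\phi_e(S))}{a_{S,e}} - \frac{r(\phi_e(T))}{a_{T,e}},
\end{equation*}
so it suffices to show that the right-hand side is non-positive, i.e.\ that $\tfrac{r(\phi_e(S))}{a_{S,e}} \leq \tfrac{r(\phi_e(T))}{a_{T,e}}$.

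To establish this remaining inequality I would use three ingredients. The order-preserving property of $\phi_e$ stated just above yields $\phi_e(S) \preceq \phi_e(T)$, so \ref{prop:r-monotone} gives $r(\phi_e(S)) \leq r(\phi_e(T))$. Property \ref{prop:A-monotone}, together with $e \in S$, gives $0 < a_{S,e} \leq a_{T,e}$. Finally, the hypothesis $r(\phi_e(T)) \leq 0$ allows me to conclude the two-step chain
\begin{equation*}
\frac{r(\phi_e(S))}{a_{S,e}} \;\leq\; \frac{r(\phi_e(T))}{a_{S,e}} \;\leq\; \frac{r(\phi_e(T))}{a_{T,e}},
\end{equation*}
whose first step is monotonicity of $r$ divided by a positive number, and whose second step is the elementary observation that a non-positive quantity divided by a larger positive denominator moves closer to zero.

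Chaining these displays yields $\tfrac{r(S)}{a_{S,e}} \leq \tfrac{r(T)}{a_{T,e}}$, as desired. The only mildly delicate point is the sign manipulation in the last display: if $r(\phi_e(T))$ were strictly positive the second inequality would reverse, which is precisely why the hypothesis $r(\phi_e(T)) \leq 0$ is indispensable. I expect no further obstacle: the hypothesis $r(\phi_e(S)) \leq 0$ is actually automatic from $r(\phi_e(S)) \leq r(\phi_e(T)) \leq 0$, and the hypotheses $r(S), r(T) \geq 0$ are not needed for the inequality itself — they presumably appear in the statement because this lemma is invoked in the feasibility analysis exactly in that regime.
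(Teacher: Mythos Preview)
Your proof is correct and follows essentially the same route as the paper: both start from Lemma~\ref{lem:sv:marginal-increase-supermodularity}, rearrange to isolate $\tfrac{r(\phi_e(S))}{a_{S,e}} - \tfrac{r(\phi_e(T))}{a_{T,e}}$, and then bound this using $\phi_e(S) \preceq \phi_e(T)$, \ref{prop:r-monotone}, \ref{prop:A-monotone}, and the sign hypothesis on $r(\phi_e(\cdot))$. The only cosmetic difference is that the paper clears denominators to the equivalent product form $\tfrac{a_{T,e}}{a_{S,e}}\, r(\phi_e(S)) \leq r(\phi_e(T))$ before invoking the sign, whereas you keep the two-step fraction chain; your closing remark that $r(S), r(T) \geq 0$ are not actually used in the argument is also accurate.
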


\begin{proof}
 See appendix for the proof.
\end{proof}

\begin{restatable}{lemma}{lemSvFeasible}
 Let $(A,r)$ be a greedy system and let $(x^*,y^*)$ be obtained by the primal-dual greedy algorithm.
 Then $x^*$ is feasible for $(\ref{LP:P})$.
 \label{lem:sv:set-feasible}
\end{restatable}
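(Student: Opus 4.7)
The plan is to fix an arbitrary $T \in \mathcal{L}$ and show $\sum_{j=1}^{\ell} a_{T,e_j}\, x^*_{e_j} \ge r(T)$. If $r(T) \le 0$ this is immediate from $x^* \ge 0$, so assume $r(T) > 0$. First I would build a chain $T = T_1 \succeq T_2 \succeq \cdots$ inside $(\mathcal{L},\preceq)$ that mimics on $T$ what the dual phase does on $S_1 \succ \cdots \succ S_{\ell+1}$: set $T_{j+1} = \phi_{e_j}(T_j)$ whenever $e_j \in T_j$, and $T_{j+1} = T_j$ otherwise. Because $\phi$ is order-preserving (the first auxiliary lemma) and $S_j$ is the maximum of the $j$-th sublattice (Lemma \ref{lem:sv:sublattice} together with \ref{prop:r-monotone}), a straightforward induction yields $T_j \preceq S_j$ for every $j$. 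In particular $r(T_{\ell+1}) \le r(S_{\ell+1}) \le 0$, so there is a smallest $m \in \{2,\dots,\ell+1\}$ with $r(T_m) \le 0$, and necessarily $e_{m-1} \in T_{m-1}$ (else $T_m = T_{m-1}$ would force $r(T_m) > 0$).

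Next I would telescope as $r(T) = \sum_{j=1}^{m-2}\bigl(r(T_j) - r(T_{j+1})\bigr) + r(T_{m-1})$ and bound each piece by the matching $a_{T,e_j}\, x^*_{e_j}$. For $j \le m-2$, terms with $e_j \notin T_j$ vanish, while for $e_j \in T_j$ Lemma \ref{lem:sv:marginal-increase-supermodularity} applied to $T_j \preceq S_j$ gives $r(T_j) - r(T_{j+1}) \le a_{T_j,e_j}\,(r(S_j) - r(S_{j+1}))/a_{S_j,e_j}$; monotonicity \ref{prop:A-monotone} replaces $a_{T_j,e_j}$ with the larger $a_{T,e_j}$, and since $j+1 \le m-1 \le \ell$ we have $r(S_{j+1}) > 0$, so the ratio equals $(r(S_j)^+ - r(S_{j+1})^+)/a_{S_j,e_j} \le x^*_{e_j}$ by the very definition of $x^*$.

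The real obstacle is the boundary term $r(T_{m-1})$, which splits into two subcases according to whether the dual phase has already hit the STOP criterion by step $m$. If $m \le \ell$, the same application of Lemma \ref{lem:sv:marginal-increase-supermodularity} bounds $r(T_{m-1}) - r(T_m) \le a_{T,e_{m-1}}\, x^*_{e_{m-1}}$, and since $r(T_m) \le 0$ the same bound then holds for $r(T_{m-1})$. If instead $m = \ell+1$, Lemma \ref{lem:sv:marginal-increase-supermodularity} would overshoot, because the drop $r(S_\ell) - r(S_{\ell+1})$ can exceed $r(S_\ell)^+ $ and need not fit inside the ceiling defining $x^*_{e_\ell}$; here I would invoke Lemma \ref{lem:sv:monotone-around-zero}, which is precisely tailored to this sign-crossing configuration, since $r(T_{m-1}), r(S_{m-1}) \ge 0$ and $r(T_m), r(S_m) \le 0$. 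It yields $r(T_{m-1})/a_{T_{m-1},e_{m-1}} \le r(S_{m-1})/a_{S_{m-1},e_{m-1}} \le x^*_{e_{m-1}}$, and hence $r(T_{m-1}) \le a_{T,e_{m-1}}\, x^*_{e_{m-1}}$ after applying \ref{prop:A-monotone} once more. Adding the two pieces of the telescope gives $r(T) \le \sum_{j=1}^{m-1} a_{T,e_j}\, x^*_{e_j} \le \sum_{j=1}^{\ell} a_{T,e_j}\, x^*_{e_j}$, as required.
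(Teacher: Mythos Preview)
Your proof is correct and follows essentially the same route as the paper: the same auxiliary chain $T_j = \phi_{e_{j-1}}(T_{j-1})$ with $T_j \preceq S_j$, the same telescoping via Lemma~\ref{lem:sv:marginal-increase-supermodularity}, and the same two-case treatment of the boundary term using Lemma~\ref{lem:sv:marginal-increase-supermodularity} respectively Lemma~\ref{lem:sv:monotone-around-zero}. The only differences are cosmetic: the paper writes the argument as a lower bound on $a_T x^*$ rather than an upper bound on $r(T)$, and indexes the cutoff by $\alpha = m-1$ (the last index with positive rank) instead of your $m$ (the first with nonpositive rank).
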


\begin{proof}
 Let $S_{\ell + 1} \prec \dots \prec S_1$ be the support of $y^*$.
 Again, we assume that $S_{\ell+1}$ was used during the STOP criterion, that is, $r(S_{\ell+1}) \leq 0 < r(S_{\ell})$.
 Let $T \in \mathcal{L}$ be any row with $r(T) > 0$.
 We will show that $a_T x \geq r(T)$ holds, where $a_T$ denotes the row of $A$ induced by index $T$.
 Let $e_1,\dots,e_\ell$ be the bottleneck elements and let us consider the following chain:
 $T'_1 = T, T'_{i+1} = \phi_{e_k}(T'_i)$ for $i = 1,\dots,\ell$.
 Then $r(T'_{\ell+1}) \leq 0$, as $T'_{\ell+1} \preceq S_{\ell+1}$.
 Moreover, for all $1 \leq i \leq \ell$, $T'_i \preceq S_i$ holds due to modularity of the lattice.

 Note that the chain $T'_i$ may contain the same element multiple times, that is, $T'_i = T'_{i+1}$ may hold for some $i$.
 In this case, $e_{i+1} \not\in T'_i$.
 Let $\alpha = \max\{1 \leq i \leq \ell : r(T'_i) > 0 \}$ be the maximum index such that $T'_\alpha$ has positive rank.
 An important observation is that $a_{T'_\alpha,e_\alpha} > 0$.
 If this was not the case, $e_\alpha \not\in T'_\alpha$, hence, $\phi_{e_\alpha}(T'_\alpha) = T'_\alpha$, that is, $\alpha$ could be increased.
 Moreover, let $I = \{1 \leq i < \alpha : a_{T'_i,e_i} > 0\}$ be the index set of all distinct chain elements, prior to element $T'_\alpha$.

 Then
 \begin{align*}
  a_T x
  &= \sum_{i=1}^{\ell} a_{T,e_i} x^*_{e_i}
  \overset{\ref{prop:A-monotone}}{\geq} \sum_{i=1}^{\ell} a_{T'_i,e_i} x^*_{e_i}
  \geq \sum_{i \in I} a_{T'_i,e_i} \frac{r(S_i)^+ - r(S_{i+1})^+}{a_{S_i,e_i}} + a_{T'_\alpha,e_\alpha} x^*_{e_\alpha} \\
  &\overset{Lem. \ref{lem:sv:marginal-increase-supermodularity}}{\geq} \sum_{i \in I} a_{T'_i,e_i} \frac{r(T'_i) - r(T'_{i+1})}{a_{T'_i,e_i}} + a_{T'_\alpha,e_\alpha} x^*_{e_\alpha}
  = r(T'_1) - r(T'_{\alpha}) + a_{T'_\alpha,e_\alpha} x^*_{e_\alpha}
 \end{align*}
 The second inequality uses the definition of $x^*$ and the definition of index set $I$.
 In the third inequality, we use the fact that only $r(S_{\ell+1})$ is possibly negative, hence, all coefficients in the sum were positive.

 The proof is almost concluded.
 If $\alpha = \ell$, we have
 \begin{align*}
  a_{T'_\alpha,e_\alpha} x^*_{e_\alpha}
  \geq a_{T'_\ell,e_\ell} \frac{r(S_\ell)}{a_{S_\ell,e_\ell}}
  \overset{Lem. \ref{lem:sv:monotone-around-zero}}{\geq} a_{T'_\ell,e_\alpha} \frac{r(T'_\ell)}{a_{T'_\ell,e_\ell}} = r(T'_\ell) = r(T'_\alpha),
  \end{align*}
  which implies $a_T x \geq r(T'_1) = r(T)$.
  Note that Lemma \ref{lem:sv:monotone-around-zero} is applicable, as $r(S_{\ell+1}) \leq 0$ and, moreover, $\phi_{e_\ell}(T'_\ell) \preceq S_{\ell+1}$, which implies by \ref{prop:r-monotone}, that $r(\phi_{e_\ell}(T'_\ell)) \leq r(S_{\ell+1}) \leq 0$.

  If $\alpha < \ell$, then
  \begin{align*}
    a_{T'_\alpha,e_\alpha} x^*_\alpha
    \geq a_{T'_\alpha,e_\alpha} \frac{r(S_\alpha) - r(S_{\alpha+1})}{a_{S_\alpha, e_\alpha}}
    \overset{Lem. \ref{lem:sv:marginal-increase-supermodularity}}{\geq} a_{T'_\alpha,e_\alpha} \frac{r(T'_\alpha) - r(\phi_{e_\alpha}(T'_\alpha))}{a_{T'_\alpha, e_\alpha}} = r(T'_\alpha) - r(\phi_{e_\alpha}(T'_\alpha)),
  \end{align*}
  which concludes the proof as $r(\phi_{e_\alpha}(T'_\alpha)) \leq 0$, by choice of $\alpha$.
\end{proof}

Unfortunately, formulations satisfying \ref{prop:first} - \ref{prop:last} may have an unbounded integrality gap.
The integrality gap of an integer program is the ratio between the value of an optimum fractional solution and the value of an optimum integral solution.
Consider for example the knapsack cover instance
$\min \{ D x_1 + x_2 \mid D x_1 + (D-1) x_2 \geq D, x \in \{0,1\}^2 \}.$
It can be reformulated in such a way that the explicit variable upper bounds are no longer required.
In its reformulated version, the system satisfies \ref{prop:first} - \ref{prop:last} and has an integrality gap of $D$.

In order to obtain a stronger LP relaxation of (\ref{LP:P}), we will truncate the coefficients of matrix $A$.
This will yield a stronger relaxation without cutting off any integer feasible solutions.
\begin{restatable}{definition}{defSvTruncation}
Let $(A,r)$ be a greedy system and define $A' \in \ZZ_+^{|\mathcal{L}| \times |E|}$ with coefficients as follows.
For $S \in \mathcal{L}$ and $e \in E$, set $a'_{S,e} = \min\{ a_{S,e}, r(S)^+ - r(\phi_e(S))^+ \}$.
We call the system $(A',r)$ the \emph{truncation} of $(A,r)$.
\label{def:sv:truncation}
\end{restatable}

The truncation ensures that the coefficient $a'_{S,e}$ of an element $e$ with respect to row $S$ reflects at most the difference between the positive parts of ranks $r(S)$ and $r(\phi_e(S))$.
The row $\phi_e(S)$ is chosen in such a way that it is the first row in which the coefficient of $e$ will vanish with respect to row $S$.
Hence, intuitively, variables $x_f, f \in E$ with a positive coefficient $a'_{\phi_e(S),f}$ have to ensure that the residual rank $r(\phi_e(S))$ is covered.
Hence, larger coefficients $a'_{S,e}$ should not help.

We will now prove that the truncated system
\begin{align*}
 \min_{x \in \ZZ^{|E|}_+}  \{ c^T x \mid A'x \geq r \} \tag{T} \label{LP:T}
\end{align*}
contains the same integer feasible points as the original system (\ref{LP:P}).

\begin{restatable}{lemma}{lemSvTruncationFeasible}
 Let $(A,r)$ be a greedy system with truncation $(A',r)$ and let $x \in \ZZ_+^{|E|}$.
 Then $x \in (\ref{LP:P})$ if and only if $x \in (\ref{LP:T})$.
 \label{lem:sv:truncation-feasible}
\end{restatable}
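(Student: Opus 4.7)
The plan is to prove the ($\Leftarrow$) direction immediately: by definition $a'_{S,e}\le a_{S,e}$ and $x\ge 0$, so whenever $x\in(\ref{LP:T})$ we have $Ax\ge A'x\ge r$. For the non-trivial direction ($\Rightarrow$), I would argue by strong induction on $T\in\mathcal{L}$ (with respect to $\preceq$) that $\sum_e a'_{T,e}\,x_e\ge r(T)$ holds for every integer $x\in(\ref{LP:P})$. Rows $T$ with $r(T)\le 0$ are trivial since $A'$ and $x$ are non-negative.

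For $T$ with $r(T)>0$: if $a'_{T,e}=a_{T,e}$ at every $e$ with $x_e\ge 1$, then the $(\ref{LP:P})$-feasibility of $x$ at $T$ carries over verbatim. Otherwise I would pick any $e^*$ with $x_{e^*}\ge 1$ at which the truncation is strict, noting that necessarily $e^*\in T$ (else $a_{T,e^*}=0$ and nothing would be truncated). If $r(\phi_{e^*}(T))\le 0$, the truncation forces $a'_{T,e^*}=r(T)$, and integrality $x_{e^*}\ge 1$ alone gives $a'_{T,e^*}\,x_{e^*}\ge r(T)$. If $r(\phi_{e^*}(T))>0$, then $a'_{T,e^*}=r(T)-r(\phi_{e^*}(T))$; with $x_{e^*}\ge 1$, the $e^*$-summand already covers this rank gap, so it then suffices to show $\sum_{e\neq e^*}a'_{T,e}\,x_e \geq r(\phi_{e^*}(T))$. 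This latter bound would follow from the induction hypothesis at $\phi_{e^*}(T)\prec T$ (together with $a'_{\phi_{e^*}(T),e^*}=0$, because $a_{\phi_{e^*}(T),e^*}=0$) once I have the ``monotonicity of $A'$'' estimate $a'_{T,e}\ge a'_{\phi_{e^*}(T),e}$ for every $e\neq e^*$.

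Establishing that monotonicity estimate is the main obstacle. Comparing the two defining minima term by term, \ref{prop:A-monotone} immediately yields $a_{T,e}\ge a_{\phi_{e^*}(T),e}$, so the only non-trivial piece is the supermodular inequality $r(T)^+-r(\phi_e(T))^+ \ge r(\phi_{e^*}(T))^+-r(\phi_e(\phi_{e^*}(T)))^+$. Setting $T_1=T$, $T_2=\phi_{e^*}(T)$, $U_1=\phi_e(T_1)$, $U_2=\phi_e(T_2)$, the identities $U_1\wedge T_2=U_2$ and $U_1\vee T_2=T_1$ follow from modularity of $(\mathcal{L},\preceq)$, the support clause in \ref{prop:lattice-modular}, and Lemma~\ref{lem:sv:sublattice} (the second identity via the Birkhoff interval isomorphism $[T_2,T_1]\cong[U_2,U_1]$ referenced in the discussion before this lemma). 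Then weighted supermodularity \ref{prop:A-r-coupling}, applied with $S:=U_1$, $T:=T_2$ and element $e\in T_2\setminus U_2$, combined with $a_{T_1,e}\ge a_{T_2,e}$, yields the plain inequality $r(T_1)-r(U_1)\ge r(T_2)-r(U_2)$; a short case analysis on the signs of the four ranks then lifts this to the required $(\cdot)^+$-version. The corner cases $e\notin T_2$ (in particular $e=e^*$) are immediate, since then $a_{T_2,e}=0$ forces $a'_{T_2,e}=0$. Integrality of $x$ is used crucially to absorb a full $a'_{T,e^*}$-term at each recursion, matching the integer ceiling in \eqref{alg:primal-phase:set-x} and reflecting that the truncation is a proper strengthening of the LP relaxation (hence the corresponding identity would fail for fractional $x$).
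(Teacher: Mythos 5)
Your proof is correct and follows essentially the same route as the paper's: the easy direction via $a'_{S,e}\le a_{S,e}$, and for the hard direction a descent from $T$ to $\phi_{e^*}(T)$ at an element $e^*$ with $x_{e^*}\ge 1$ where the truncation is strict, using integrality to absorb the full gap $r(T)^+-r(\phi_{e^*}(T))^+$ and the monotonicity of $A'$ (itself a $(\cdot)^+$-lift of Lemma~\ref{lem:sv:marginal-increase-supermodularity}, which you re-derive rather than cite) to invoke the statement at the smaller row. The only difference is cosmetic: you phrase the descent as a top-down induction on $(\mathcal{L},\preceq)$, whereas the paper takes a minimum-cardinality violated row and derives a contradiction.
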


\begin{proof}
 See appendix for the proof.
\end{proof}

The truncated system (\ref{LP:T}) no longer necessarily satisfies \ref{prop:A-r-coupling}.
However, $A'$ will still be monotone and both polyhedra describe the same integer points.
Hence, one might ask if we can still apply the greedy algorithm to (\ref{LP:T}) in order to obtain a feasible solution.
And this is indeed true as shown in the following Lemma.

\begin{restatable}{lemma}{lemRfGreedyTruncationFeasible}
 The greedy algorithm applied to the truncation (\ref{LP:T}) of a greedy system $(A,r)$ obtains a feasible primal solution to (\ref{LP:T}) and (\ref{LP:P}).
 \label{lem:sv:greedy-truncation-feasible}
\end{restatable}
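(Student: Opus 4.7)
The plan is to prove that the primal vector $x^*$ produced by the greedy algorithm on the truncation $(A', r)$ is feasible for $(\ref{LP:T})$; feasibility for the original system $(\ref{LP:P})$ then follows immediately from Lemma \ref{lem:sv:truncation-feasible}. The obstruction to simply invoking Lemma \ref{lem:sv:set-feasible} is that truncation typically destroys \ref{prop:A-r-coupling}, so $(A', r)$ is not itself a greedy system. As the text notes, however, $A'$ does remain monotone in the sense of \ref{prop:A-monotone}; this can be verified by a short case analysis on the signs of $r(S), r(T), r(\phi_e(S)), r(\phi_e(T))$ combined with Lemma \ref{lem:sv:marginal-increase-supermodularity} for $A$. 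Consequently, the dual phase on $(A', r)$ still yields a chain $S_{\ell+1} \prec \dots \prec S_1$ with bottleneck elements $e_1, \dots, e_\ell$ satisfying $r(S_{\ell+1}) \leq 0 < r(S_\ell)$, and the defining inequality $a'_{S,e} \leq r(S)^+ - r(\phi_e(S))^+$ forces $x^*_{e_j} \geq 1$ whenever $e_j$ is selected.

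I would then mimic the chain argument from the proof of Lemma \ref{lem:sv:set-feasible}, but with $a'$ in place of $a$. Fix a row $T$ with $r(T) > 0$ and set $T'_1 = T$, $T'_{i+1} = \phi_{e_i}(T'_i)$. Modularity of $(\mathcal{L}, \preceq)$ still gives $T'_i \preceq S_i$ and $r(T'_{\ell+1}) \leq 0$, and monotonicity of $A'$ yields $a'_{T, e_i} \geq a'_{T'_i, e_i}$. The step that does not translate verbatim is the appeal to Lemma \ref{lem:sv:marginal-increase-supermodularity}, which exploited \ref{prop:A-r-coupling} for $A$. The crux of the argument---and what I expect to be the main obstacle---is therefore to establish a truncated marginal inequality of the form
$$
\frac{r(T'_i) - r(T'_{i+1})}{a'_{T'_i, e_i}} \;\leq\; \frac{r(S_i)^+ - r(S_{i+1})^+}{a'_{S_i, e_i}} \qquad \text{for } e_i \in T'_i,
$$
which I would prove by a case analysis on whether the truncation is active at $(S_i, e_i)$. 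In the inactive case one has $a'_{S_i, e_i} = a_{S_i, e_i}$ and the inequality reduces, after accounting for the positive parts, to Lemma \ref{lem:sv:marginal-increase-supermodularity} applied to the original matrix together with $a'_{T'_i, e_i} \leq a_{T'_i, e_i}$; in the active case $a'_{S_i, e_i} = r(S_i)^+ - r(\phi_{e_i}(S_i))^+$, and the same truncation identity applied at $T'_i$, combined with the order-preserving property $\phi_{e_i}(T'_i) \preceq \phi_{e_i}(S_i)$, bounds both sides directly.

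With this marginal inequality in hand, the telescoping calculation in the proof of Lemma \ref{lem:sv:set-feasible} carries over essentially unchanged, and Lemma \ref{lem:sv:monotone-around-zero} handles the transition at the index $\alpha$ where $r(T'_i)$ crosses zero. The conclusion is $\sum_i a'_{T,e_i} x^*_{e_i} \geq r(T)$, so $x^*$ is feasible for $(\ref{LP:T})$; by Lemma \ref{lem:sv:truncation-feasible}, $x^*$ is then also feasible for $(\ref{LP:P})$.
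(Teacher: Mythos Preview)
Your overall plan can be made to work, but it is a detour, and the specific justification you give in the inactive case is incorrect as written. You want to bound
\[
\frac{r(T'_i)-r(T'_{i+1})}{a'_{T'_i,e_i}}\;\le\;\frac{r(S_i)^+-r(S_{i+1})^+}{a'_{S_i,e_i}},
\]
and in the case $a'_{S_i,e_i}=a_{S_i,e_i}$ you appeal to Lemma~\ref{lem:sv:marginal-increase-supermodularity} ``together with $a'_{T'_i,e_i}\le a_{T'_i,e_i}$''. But $a'_{T'_i,e_i}$ sits in the \emph{denominator} of the left-hand side, so replacing $a_{T'_i,e_i}$ by the smaller $a'_{T'_i,e_i}$ moves the inequality the wrong way. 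The inequality is still true, but you need a further split: if $a'_{T'_i,e_i}=a_{T'_i,e_i}$ then Lemma~\ref{lem:sv:marginal-increase-supermodularity} applies verbatim; if instead the truncation is active at $T'_i$ then the left-hand side equals $1$, and ``inactive at $S_i$'' means $a_{S_i,e_i}\le r(S_i)-r(S_{i+1})$, so the right-hand side is $\ge 1$. A similar sub-split, together with Lemma~\ref{lem:sv:marginal-increase-supermodularity} to force the truncation to be active at $T'_i$ whenever it is at $S_i$, is needed in your active case and at index $\alpha$; your one-line justifications do not supply this.

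The paper avoids all of this by reversing the direction of the reduction. Instead of proving feasibility for $(\ref{LP:T})$ directly and then invoking Lemma~\ref{lem:sv:truncation-feasible} to get $(\ref{LP:P})$, it proves feasibility for $(\ref{LP:P})$ first. The key observation is that the dual phase run on $(A',r)$ still produces a chain $S_{\ell+1}\prec\cdots\prec S_1$ with $S_{i+1}=\phi_{e_i}(S_i)$ and $r(S_{\ell+1})\le 0$, and since $a'_{S_i,e_i}\le a_{S_i,e_i}$ the primal values satisfy
\[
x^*_{e_i}\;=\;\Bigl\lceil \tfrac{r(S_i)^+-r(S_{i+1})^+}{a'_{S_i,e_i}}\Bigr\rceil\;\ge\;\Bigl\lceil \tfrac{r(S_i)^+-r(S_{i+1})^+}{a_{S_i,e_i}}\Bigr\rceil.
\]
The proof of Lemma~\ref{lem:sv:set-feasible} never used anything about $y^*$ beyond the chain structure and this lower bound on $x^*$, and it is stated for the \emph{original} greedy system $(A,r)$, where \ref{prop:A-r-coupling} is available. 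So Lemma~\ref{lem:sv:set-feasible} applies without modification and gives $x^*\in(\ref{LP:P})$; Lemma~\ref{lem:sv:truncation-feasible} then yields $x^*\in(\ref{LP:T})$. This route needs no truncated analogue of Lemmas~\ref{lem:sv:marginal-increase-supermodularity} or~\ref{lem:sv:monotone-around-zero} at all.
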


\begin{proof}
 See appendix for the proof.
\end{proof}

Note that, in case of the knapsack cover problem, the truncation coincides with the polyhedron discussed in \cite{carnes2008primal}, which has an integrality gap of at most 2.
One might hope that the integrality gap of the truncation is always bounded by a small constant.
But let us proceed with some negative results before we derive an approximation guarantee.

\sectionheadline{Inapproximability and integrality gaps}
We can construct simple examples with an integrality gap linear in the number of elements.
Moreover, we can show that one can not expect a $(1- o(1)) \log n$ approximation unless $NP \subseteq DTIME(n^{O(\log\log n)})$.

\begin{restatable}{proposition}{prpSvLogInapproximable}
The subset cover problem can be modeled in the form (\ref{LP:P}) as a greedy system $(A,r)$.
Hence, no $(1-o(1)) \log n$ approximation for (\ref{LP:T}) exists unless $NP \subseteq DTIME(n^{O(\log\log n)})$.
\label{prp:sv:log-inapproximable}
\end{restatable}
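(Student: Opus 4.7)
The proposition has two parts, which I would prove in order: first, exhibit a greedy-system modelling of subset cover; second, transfer Feige's classical $(1-o(1))\log n$-inapproximability for set cover to the truncation $(\ref{LP:T})$.

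\emph{Step 1 (modelling).} Given an instance of subset cover with universe $U$, sets $T_1,\dots,T_n\subseteq U$ and costs $c\in\RR_+^n$, I would take $\mathcal{L}$ to be a modular sublattice of $2^U$ ordered by inclusion, set $r(S)=|S|$, and define $0/1$ coefficients $a_{S,e}=[T_e\cap S\neq\emptyset]$, so that the singleton rows $\{u\}$ reproduce the standard per-element covering constraints $\sum_{e:\,u\in T_e} x_e\geq 1$. Properties \ref{prop:r-monotone} and \ref{prop:A-monotone} are immediate from the monotonicity of cardinality and of the indicator $[T_e\cap S\neq\emptyset]$. For \ref{prop:lattice-modular}, $2^U$ is distributive (hence modular), and distinct supports can be enforced by restricting $\mathcal{L}$ to the canonical sublattice of type-closed subsets of $U$. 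The key step is the weighted supermodularity \ref{prop:A-r-coupling}; using distributivity we have $r(T)-r(S\wedge T)=|T\setminus S|=r(S\vee T)-r(S)$, while monotonicity of the $0/1$ matrix forces $a_{T,e}=a_{S\vee T,e}=1$ for every $e\in\text{supp}(T)$, so \ref{prop:A-r-coupling} holds with equality. One then verifies that the integer feasible set of $(\ref{LP:P})$ coincides with the feasible set covers.

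\emph{Step 2 (hardness transfer).} With a greedy-system modelling in place, Lemma~\ref{lem:sv:truncation-feasible} gives that the truncation $(\ref{LP:T})$ and the original system $(\ref{LP:P})$ have exactly the same integer feasible sets, and hence identical integer optima equal to the subset cover optimum. Consequently a polynomial-time $\alpha$-approximation algorithm for $(\ref{LP:T})$ would yield a polynomial-time $\alpha$-approximation algorithm for subset cover. Plugging in Feige's theorem, this rules out $\alpha=(1-o(1))\log n$ unless $NP\subseteq DTIME(n^{O(\log\log n)})$.

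\emph{Main obstacle.} The delicate point is Step 1: the coexistence of \ref{prop:A-r-coupling} with integer-equivalence to subset cover. For two elements $u,u'$ with $\{e:u\in T_e\}\cap\{e:u'\in T_e\}\neq\emptyset$, the naive aggregated constraint $\sum_{e:\,T_e\cap\{u,u'\}\neq\emptyset} x_e\geq 2$ associated with $S=\{u,u'\}$ is strictly stronger than the pair of per-element constraints, and can cut off legitimate set covers that reuse shared elements. The choice of the sublattice $\mathcal{L}$ (type-closed subsets, or an even more restrictive canonical family) together with a matching choice of $r$ and $A$ must be made precisely enough to avoid introducing such spurious constraints while retaining \ref{prop:A-r-coupling}. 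Once Step 1 is carried out, Step 2 is a routine reduction via Lemma~\ref{lem:sv:truncation-feasible} and the citation of Feige's theorem.
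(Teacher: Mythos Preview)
Your Step~2 is fine, but Step~1 contains a real gap that you yourself flag and do not close. With $\mathcal{L}\subseteq 2^U$, $r(S)=|S|$ and $a_{S,e}=[T_e\cap S\neq\emptyset]$, the row for any $S$ with $|S|\ge 2$ that is hit by a common set $T_e$ genuinely cuts off valid covers (your own example $\sum_{e:T_e\cap\{u,u'\}\neq\emptyset} x_e\ge 2$). There is no sublattice of $2^U$ that simultaneously (i) keeps all singleton rows (needed so that the integer feasible set equals the set covers), (ii) is a lattice with a top element (forced by \ref{prop:lattice-modular} and the standing assumption that some row has full support), and (iii) avoids these spurious aggregated constraints: once two singletons are present, their join must be present, and that join is exactly the bad row. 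Restricting to ``type-closed'' subsets does not help, because whenever two universe elements share a covering set $T_e$, the two-element join already over-constrains. So the proposed model is not a valid reformulation of subset cover, and the reduction collapses.

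The paper sidesteps this entirely by indexing the lattice over the \emph{column} side rather than the universe. It takes $E=[n]$ (one column per set $T_i$), the full Boolean lattice $\mathcal{L}=2^{[n]}$, constant-per-column coefficients $a_{S,i}=|T_i|$ for $i\in S$, and the rank $r(S)=|G|-\bigl|\bigcup_{i\notin S} T_i\bigr|$. Because $a_{T,e}=a_{S\vee T,e}$ for every $e\in T$, \ref{prop:A-r-coupling} reduces to ordinary supermodularity of $r$, which is exactly submodularity of the coverage function $S\mapsto\bigl|\bigcup_{i\in S} T_i\bigr|$. One then checks directly that the integer feasible points of this system are precisely the (indicator vectors of) set covers: the forward direction is a counting inequality, and the reverse uses the row $S=E\setminus\{i:x_i=1\}$. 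With this model in hand, your Step~2 goes through verbatim via Lemma~\ref{lem:sv:truncation-feasible} and Feige's bound. The moral: put the lattice on the variable side with weight coefficients $|T_i|$, not on the universe side with $0/1$ coefficients.
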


\begin{proof}
See appendix for the proof.
\end{proof}

\begin{restatable}{proposition}{prpSvLinearGap}
There exists a family of instances such that the truncation (\ref{LP:T}) has an integrality gap linear in the number of elements.
\label{prp:sv:linear-gap}
\end{restatable}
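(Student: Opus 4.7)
The plan is to exhibit an explicit family of greedy systems $(A_n,r_n)$ with $|E_n|=\Theta(n)$ whose truncation $(A'_n,r_n)$ has integrality gap $\Omega(n)$.

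First, I would pick a chain lattice as the underlying modular lattice. A chain $\emptyset = L_0 \prec L_1 \prec \dots \prec L_n$ with $\mathrm{supp}(L_i)=\{e_1,\dots,e_i\}$ has the advantage that \ref{prop:lattice-modular} and \ref{prop:A-r-coupling} both hold trivially (the only non-trivial pairs $S,T$ are comparable, so supermodularity reduces to an equality), so verifying \ref{prop:first}--\ref{prop:last} will reduce to monotonicity of the rank and of the coefficients, both of which are easy to enforce by construction. This gets the greedy-system hypothesis out of the way and lets the design focus entirely on the LP/IP gap.

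Next, I would calibrate $r(L_i)$ and $a_{L_i,e_j}$ so that the truncation formula $a'_{S,e}=\min\{a_{S,e},\,r(S)^+-r(\phi_e(S))^+\}$ essentially leaves the coefficients unchanged on the ``few" rows that carry the hard constraint. Concretely, the goal is to choose coefficients $a_{L_i,e_j}$ that already sit at (or below) the truncation threshold $r(L_i)-r(L_{j-1})$, so no strengthening occurs. The role of the instance is then to have one ``budget" item that can fractionally share its contribution across all rows (yielding LP cost $O(1)$), while every integer feasible solution is forced to activate $\Omega(n)$ variables because, individually, each element contributes below what one of the chain constraints demands. Since by Lemma~\ref{lem:sv:truncation-feasible} the truncated and original integer programs have the same feasible integer points, any lower bound I prove on the IP automatically applies to the truncated IP.

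I would then carry out the two matching computations. For the upper bound I write down a short fractional solution of cost $O(1)$ and verify, row by row along the chain, that it satisfies $A'x\geq r$. For the lower bound, I would show via a simple covering / rounding / parity argument that any $x\in\ZZ_+^{|E|}$ satisfying the most restrictive chain constraint must have $\sum_e x_e=\Omega(n)$ and that the cheapest such integer $x$ costs $\Omega(n)$. Combined with the $O(1)$ fractional cost this yields an integrality gap of $\Omega(|E_n|)$.

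The main obstacle is the calibration: the truncation $(\ref{LP:T})$ is specifically designed to absorb the standard source of gap (single-constraint knapsack-cover-type inflation is capped at a factor of~$2$, as noted after Lemma~\ref{lem:sv:greedy-truncation-feasible}), so the construction must distribute the gap across many rows in such a way that, simultaneously, (a) the minima defining the $a'_{L_i,e_j}$ are attained by $a_{L_i,e_j}$ itself (not by the truncation bound), and (b) supermodularity \ref{prop:A-r-coupling} is not broken by the uneven growth of $r$ along the chain. If the chain construction turns out to be too rigid to satisfy (a) and (b) simultaneously, I would move to a richer distributive lattice (e.g.\ a product of two short chains), where one can decouple ``rank growth'' from ``coefficient growth'' and still get the sublattice property of Lemma~\ref{lem:sv:sublattice} for free.
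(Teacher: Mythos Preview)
Your plan has a genuine structural gap: a chain lattice cannot force $\Omega(n)$ elements into every integer feasible solution. On the chain $L_0\prec\cdots\prec L_n$ the constraints are nested, $\sum_{j\le i}a_{L_i,e_j}x_{e_j}\ge r(L_i)$, so a single variable (say $x_{e_1}$, which appears in every row) can already satisfy all rows once it is large enough. There is no mechanism analogous to ``take the complement of the support'' that would certify a row violated by any sparse $x$. The paper's construction uses the full Boolean lattice $\mathcal{L}=2^E$ precisely for this reason: if $\mathrm{supp}(x)$ has size $<n/2$, the row $S=E\setminus\mathrm{supp}(x)$ has $|S|>n/2$, hence $r(S)>0$, while the left-hand side is zero. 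This is what pins the integer optimum at $\Omega(n)$, and it simply has no analogue on a chain (or on a product of two short chains, your fallback).

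Your calibration goal is also inverted. You aim for $a'_{L_i,e_j}=a_{L_i,e_j}$, i.e.\ the truncation should do nothing. But the paper's gap is produced \emph{by} the truncation: one takes constant coefficients $a_{S,e}=2^n$ and a rank whose marginal differences $r(S)-r(\phi_e(S))$ shrink geometrically, so that $a'_{S,e}=r(S)^+-r(\phi_e(S))^+$ becomes tiny. It is exactly this shrinkage that lets the uniform fractional vector $x\equiv 4/(n-1)$ satisfy $A'x\ge r$ at cost $O(1)$. If you insist on $a'=a$, then whatever makes the LP cheap (one element with a huge coefficient) will make the IP cheap as well, and you never see a linear gap. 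In short, the two design choices you single out --- chain lattice for easy supermodularity, and ``no strengthening'' so the coefficients stay put --- are precisely the choices that prevent the gap from appearing.
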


\begin{proof}
See appendix for the proof.
\end{proof}

\sectionheadline{Approximation guarantee}

The preceding two results might suggest that the truncation does not help in order to obtain general approximation guarantees.
But a deeper look into the required properties of $(A,r)$ reveals a better understanding.
In both constructions, the efficiency of elements dropped drastically before dropping to zero.
The instance considered in Proposition \ref{prp:sv:linear-gap}, is very simple and defined on the Boolean lattice $\mathcal{L} = 2^E$ with $|E| = n$.
Each element $e \in E$ has the same weight $a_{S,e} = 2^{|E|}$ for all $S \subseteq E, e \in E$ and the rank function is symmetric and defined as $r(S) = 2^n(2^{-(n-|S|)} - 2^{-\frac{n}{2}})$.
That is, $r(S)$ for $S \in \mathcal{L}$ is positive if and only if $|S| > \frac{n}{2}$.
Moreover, the marginal differences $r(S) - r(\phi_e(S))$ are exponentially decreasing with decreasing cardinality of set $S$.
This implies that the truncation has large coefficients $a'_{E,e}$ and very small coefficients $a'_{S,e}$ for $S \in \mathcal{L}$ with $|S| = \frac{n}{2} + 1$ (assuming $n$ even).
In particular, $a'_{E,e} \gg a'_{S,e}$ holds for these sets, which will pose as an issue for small approximation factors.
A similar effect may appear if the ratio between coefficients in the matrix $A$ is large, that is, if $a_{E,e} \gg a_{S,e} > 0$ holds for some $S \in \mathcal{L}$ and $e \in E$.

Let us consider the following parameter as a measure of the range of efficiency of elements.
Given $S \in \mathcal{L}$ and $e \in E$, define
\begin{align*}
 \delta_{S,e} = \begin{cases} \frac{a'_{E,e}}{a'_{S,e}} & a'_{S,e} > 0 \text{ and } (r(\phi_e(S)) \geq 0 \text{ or } a'_{S,e} = a_{S,e}), \\ 1 & \text{otherwise}, \end{cases}
\end{align*}
and let $\delta = \max \{ \delta_{S,e} : S \in \mathcal{L}, e \in E \}$.
Then $\delta$ can be used in order to bound the ratio between coefficients in matrix $A'$.
Note that we excluded certain rows and elements from the bound in order to possibly make it smaller.
If the rank is non-negative, this does not have any impact.
However, if the rank is possibly negative, we get better bounds as we will see later.

And indeed, we can estimate the quality of a solution obtained by the greedy algorithm in terms of $\delta$ as the following theorem shows.
The role of $b$ in the theorem can be seen as follows.
Recall how the primal phase constructs the vector $x^*$ in (\ref{alg:primal-phase:set-x}).
For $b = 1$, the rounding does not affect $x^*_e$ for any bottleneck element $e \in E$, that is, the vector will naturally be integral and, in particular, $a'_{S_i,e} x^*_e = r(S_i)^+ - r(S_{i+1})^+$ will hold for all $1 \leq i \leq \ell$.
If $b = 2$, the rounding may have an impact on $x^*_e$.
In this case, it is easy to see that the marginal rank differences are oversubscribed by at most a factor of two, that is, $a'_{S_i,e} x^*_e \leq 2 (r(S_i)^+ - r(S_{i+1})^+)$.

\begin{restatable}{theorem}{thmSvGreedyApproximation}
 Let $(x^*,y^*)$ be a solution returned by the primal-dual greedy algorithm applied to the truncation (\ref{LP:T}) of a greedy system $(A,r)$.
 Then the cost of $x^*$ is no larger than $b \delta \text{OPT}$, if $r$ is non-negative, and $(b \delta + 1) \text{OPT}$, otherwise.
 Here, $b = 1$, if $\frac{r(S)^+ - r(\phi_e(S))^+}{a'_{S,e}} \in \ZZ_+$ for all $S \in \mathcal{L}$ and $e \in S$ with $a'_{S,e} > 0$, and $b = 2$, otherwise.
 \label{thm:sv:greedy-approximation}
\end{restatable}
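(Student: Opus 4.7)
The plan is to run the standard primal-dual approximation template, comparing $c^{T}x^{*}$ against $r^{T}y^{*}$ and invoking weak LP duality. First I would observe that the dual phase returns $y^{*}\geq 0$ satisfying $(A')^{T}y^{*}\leq c$, so by weak duality applied to the LP relaxation of (\ref{LP:T}), together with Lemma \ref{lem:sv:truncation-feasible} ensuring that (\ref{LP:T}) and (\ref{LP:P}) have the same integer feasible solutions, we obtain $r^{T}y^{*}\leq \mathrm{OPT}$. The heart of the argument is then to bound $c^{T}x^{*}$ by $(b\delta)$ or $(b\delta+1)$ times $r^{T}y^{*}$.

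Next I would expand $c^{T}x^{*}$ using tightness of the dual constraints at bottleneck elements. Since each $e_{j}$ becomes tight when $y^{*}_{i_{j}}$ is raised, and since $a'_{i_{k},e_{j}}=0$ for $k>j$ by the chain structure, one gets
\[
c^{T}x^{*} \;=\; \sum_{j=1}^{\ell} x^{*}_{e_{j}}\sum_{k\le j} a'_{i_{k},e_{j}}\,y^{*}_{i_{k}}
\;=\; \sum_{k=1}^{\ell} y^{*}_{i_{k}}\sum_{j=k}^{\ell} a'_{i_{k},e_{j}}\,x^{*}_{e_{j}} .
\]
The task reduces to bounding the inner sum by $b\delta\, r(i_{k})$ (or $(b\delta+1)r(i_{k})$). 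For a single $j$, I would combine two facts: (i) the truncation $a'_{i_{j},e_{j}}\le r(i_{j})^{+}-r(\phi_{e_{j}}(i_{j}))^{+}=r(i_{j})^{+}-r(i_{j+1})^{+}$ forces the argument of the ceiling in (\ref{alg:primal-phase:set-x}) to be at least $1$, so $x^{*}_{e_{j}}\le b\,(r(i_{j})^{+}-r(i_{j+1})^{+})/a'_{i_{j},e_{j}}$ with $b\in\{1,2\}$ as in the theorem; and (ii) monotonicity of $A'$ gives $a'_{i_{k},e_{j}}\le a'_{E,e_{j}}$, so whenever the definition of $\delta_{i_{j},e_{j}}$ uses the ratio we have $a'_{i_{k},e_{j}}/a'_{i_{j},e_{j}}\le\delta$. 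For $j<\ell$ the condition $r(\phi_{e_{j}}(i_{j}))=r(i_{j+1})>0\ge 0$ holds automatically, so the ratio-bound applies, and telescoping the differences $r(i_{j})^{+}-r(i_{j+1})^{+}$ over $j=k,\ldots,\ell$ collapses to $r(i_{k})^{+}-r(i_{\ell+1})^{+}=r(i_{k})$.

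The delicate step, and the main obstacle, is the last bottleneck $j=\ell$ when $r(i_{\ell+1})<0$ and the truncation is active (otherwise the ratio-bound applies there too). In that corner, the definition of $\delta_{i_{\ell},e_{\ell}}$ collapses to $1$, so we cannot dominate $a'_{i_{k},e_{\ell}}/a'_{i_{\ell},e_{\ell}}$ by $\delta$ and must argue directly from the truncation: $a'_{i_{\ell},e_{\ell}}=r(i_{\ell})$ gives $x^{*}_{e_{\ell}}=\lceil r(i_{\ell})/r(i_{\ell})\rceil=1$, while $a'_{i_{k},e_{\ell}}\le r(i_{k})^{+}-r(\phi_{e_{\ell}}(i_{k}))^{+}\le r(i_{k})$. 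Hence the $j=\ell$ term contributes at most $r(i_{k})$, adding the advertised $+r(i_{k})$ slack. Putting the two estimates together yields, for every $k$,
\[
\sum_{j=k}^{\ell}a'_{i_{k},e_{j}}x^{*}_{e_{j}} \;\le\; b\delta\,r(i_{k}) + \varepsilon\, r(i_{k}),
\]
with $\varepsilon=0$ when $r\ge 0$ (the bad case is excluded) and $\varepsilon\le 1$ in general. Summing against $y^{*}_{i_{k}}\ge 0$ gives $c^{T}x^{*}\le (b\delta+\varepsilon)\, r^{T}y^{*}\le(b\delta+\varepsilon)\,\mathrm{OPT}$, which is exactly the claim.
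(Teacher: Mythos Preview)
Your proposal is correct and follows essentially the same route as the paper: expand $c^{T}x^{*}$ via dual tightness, bound each term $a'_{S_{t},e_{j}}x^{*}_{e_{j}}$ by $b\delta(r(S_{j})^{+}-r(S_{j+1})^{+})$ using the definition of $\delta$, telescope, and treat the last index $j=\ell$ separately. Your case split at $j=\ell$ (truncation active versus $a'_{S_{\ell},e_{\ell}}=a_{S_{\ell},e_{\ell}}$) is equivalent to the paper's split on $x^{*}_{e_{\ell}}=1$ versus $x^{*}_{e_{\ell}}>1$, and your direct estimate $a'_{S_{t},e_{\ell}}\le r(S_{t})$ in the bad case replaces the paper's union bound $a'_{S_{t},e_{\ell}}x^{*}_{e_{\ell}}\le b\delta\,r(S_{\ell})+r(S_{t})$ to the same effect.
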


\begin{proof}
In this version, we provide only a brief outline of the proof techniques for $b = 2$ with possibly negative rank.
The full proof can be found in the appendix.

Let $S_{\ell+1} \prec S_\ell \prec \dots \prec S_1$ be the dual chain constructed by the algorithm, where $r(S_{\ell+1}) \leq 0 < r(S_\ell)$ and let $e_1,\dots,e_\ell$ be the bottleneck elements.

Consider the lefthandside coefficients of any index $t$.
For element $e_j$ with index $t \leq j < \ell$, we can use $\delta$ to bound the coefficient in $A'$, as $r(\phi_{e_j}(S_t)) \geq r(S_{j+1}) > 0$.
This is true since $\mathcal{L}$ is modular, hence, $S_{j+1} = \phi_{e_j}(S_j) \preceq \phi_{e_j}(S_t)$.
\begin{align*}
a'_{S_t,e_j} x^*_{e_j}
\leq \delta a'_{S_j, e_j} x^*_{e_j}
= \delta a'_{S_j, e_j} \left\lceil \frac{r(S_j)^+ - r(S_{j+1})^+}{a'_{S_j,e_j}} \right\rceil
\leq 2 \delta \left( r(S_j) - r(S_{j+1}) \right).
\end{align*}
The first inequality is due to the definition of $\delta$.
The subsequent equality is due to the construction of $x^*$ in such a way that it covers the rank differences in each iteration.

Note that this argument does not necessarily hold for the final element $e_\ell$ as the definition of $\delta$ does not cover this element if $r(S_{\ell+1}) < 0$ and $a'_{S_\ell,e_\ell} < a_{S_\ell,e_\ell}$.
If $x^*_{e_\ell} = 1$, then $a'_{S_t,e_\ell} x^*_{e_\ell} = a'_{S_t,e_\ell} \leq r(S_t)$.

But $x^*_{e_\ell} > 1$ implies $a'_{S_{\ell},e_\ell} = a_{S_{\ell},e_\ell} < r(S_{\ell})$.
Hence, we can use $\delta$ and get
\begin{align*}
a'_{S_t,e_\ell} x^*_{e_\ell}
\leq \delta a'_{S_\ell,e_\ell} x^*_{e_\ell}
= \delta a'_{S_\ell,e_\ell} \left\lceil \frac{r(S_{\ell})^+ - r(S_{\ell+1})^+}{a'_{S_{\ell},e_\ell}} \right\rceil
\leq 2 \delta r(S_\ell).
\end{align*}
A simple union bound yields
$$a'_{S_t,e_\ell} x^*_{e_\ell} \leq 2 \delta r(S_\ell) + r(S_t).$$

Hence, for the constraint corresponding to $S_t$, we get:
\begin{align*}
\sum_{e \in S_t} a'_{S_t,e} x^*_e &= \sum_{j=t}^{\ell - 1} a'_{S_t,e_j} x^*_{e_j} + a'_{S_t,e_\ell} x^*_{e_\ell}
\leq \sum_{j=t}^{\ell - 1} 2 \delta \left( r(S_j) - r(S_{j+1}) \right) + 2 \delta r(S_\ell) + r(S_t) \\
&= 2 \delta \left( r(S_t) - r(S_{\ell}) \right) + 2 \delta r(S_\ell) + r(S_t)
= (2 \delta + 1)r(S_t).
\end{align*}
In the first equality in the second row, we use that the sum is telescopic.

This implies the following approximate complementary slackness conditions:
If $y^*_S > 0$, then $r(S) \leq a'_S x^* \leq (2 \delta + 1) r(S)$.
Moreover, the primal solution is constructed in such a way that $x^*_e > 0$ implies $\sum_{S \in \mathcal{L}} a'_{S,e} y^*_S = c_e$.
Hence, standard techniques for primal-dual approximation algorithms can be used to conclude the proof.
The other cases are proven analogously and can be found in the appendix.
\end{proof}

Hence, if $\delta$ is bounded by a small constant, we can show that the greedy algorithm obtains good solutions.
Note that the instance from Proposition \ref{prp:sv:linear-gap} shows that the integrality gap of a truncation can be of order $o(\log \delta)$.
We can also show that the analysis in Theorem \ref{thm:sv:greedy-approximation} is essentially tight.

\begin{corollary}
There exists a family of instances such that the truncation (\ref{LP:T}) of a greedy system $(A,r)$ has integrality gap $o(\log \delta)$.
\end{corollary}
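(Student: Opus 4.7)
The plan is to reuse the explicit family of instances already constructed in Proposition \ref{prp:sv:linear-gap} and then perform a careful accounting of the parameter $\delta$ from Definition \ref{def:sv:truncation} for that family, so that the linear-in-$n$ integrality gap translates into an $\Omega(\log\delta)$ gap (which matches the $o(\log\delta)$ asymptotic order stated). Recall the construction: take $n$ even, ground set $E=[n]$, lattice $\mathcal{L}=2^E$ ordered by inclusion, uniform cost $c \equiv 1$, uniform weights $a_{S,e}=2^n$, and the symmetric rank $r(S)=2^n(2^{-(n-|S|)}-2^{-n/2})$. One checks routinely that $r$ is monotone and supermodular and vanishes at $|S|=n/2$, so properties \ref{prop:first}--\ref{prop:last} hold with respect to inclusion on the Boolean lattice.

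First I would compute the truncation $A'$. For $|S|=k>n/2$ and $e\in S$ one gets $r(S)-r(S\setminus\{e\})=2^{k-1}$, so $a'_{S,e}=\min\{2^n,2^{k-1}\}=2^{k-1}$, whereas $a'_{S,e}=0$ for $|S|\le n/2$ since $r(S)^+=0$ there. The ratio $a'_{E,e}/a'_{S,e}$ is therefore maximised on the sets $S$ with $|S|=n/2+1$ (which satisfy $r(\phi_e(S))=0\ge 0$, so the ``good'' branch in the definition of $\delta_{S,e}$ applies), giving
\[
 \delta \;=\; \frac{a'_{E,e}}{a'_{S,e}} \;=\; \frac{2^{n-1}}{2^{n/2}} \;=\; 2^{n/2-1},
\]
and hence $\log\delta=\Theta(n)$.

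Next I would bound the integrality gap of \eqref{LP:T} for this family. By Lemma \ref{lem:sv:truncation-feasible} the integer feasible sets of \eqref{LP:T} and \eqref{LP:P} coincide, and in \eqref{LP:P} every constraint for $|S|>n/2$ simplifies after dividing by $2^n$ to $\sum_{e\in S} x_e \ge 1$. Any integer feasible $x$ must therefore hit every subset of size strictly greater than $n/2$, which forces $|\operatorname{supp}(x)|\ge n/2$ and hence integer cost $\Omega(n)$. Conversely the uniform fractional vector $x_e=2/n$ is feasible for every truncated constraint (since $a'_{S,e}\cdot (2/n)\cdot |S|$ exceeds $r(S)$ for every $|S|>n/2$) and has total cost $O(1)$. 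Thus the truncated integrality gap is $\Omega(n)=\Omega(\log\delta)$, which establishes the corollary in the stated asymptotic order.

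The main obstacle is bookkeeping rather than conceptual: one has to verify that the specific case in the piecewise definition of $\delta_{S,e}$ is actually the one realised on the extremal sets (so that the tiny coefficients $a'_{S,e}=2^{n/2}$ are not excluded by the ``otherwise''-clause), and then check that the simple fractional solution really is feasible for \emph{truncated} constraints and not just the original ones. Once those two checks are in place the corollary follows from the gap computation above, and no new ideas beyond those already used in Propositions \ref{prp:sv:log-inapproximable} and \ref{prp:sv:linear-gap} are needed.
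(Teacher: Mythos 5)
Your overall route is exactly the paper's: the corollary carries no separate proof and is meant to follow from the instance of Proposition~\ref{prp:sv:linear-gap}, and your computation of $\delta$ for that instance ($a'_{S,e}=2^{|S|-1}$ for $|S|>n/2$, the extremal ratio attained at $|S|=n/2+1$ where $r(\phi_e(S))=0$ puts you in the first branch of the definition, hence $\delta=2^{n/2-1}$ and $\log\delta=\Theta(n)$) is correct and is precisely the bookkeeping the corollary needs on top of Proposition~\ref{prp:sv:linear-gap}. The integer lower bound of $n/2$ is also fine: every constraint with $|S|>n/2$ has strictly positive right-hand side, so an integral solution must have support hitting every such set.

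However, one step fails as written: the uniform fractional vector $x\equiv 2/n$ is \emph{not} feasible for the truncated system. Dividing the truncated constraint for $|S|=k>n/2$ by $a'_{S,e}=2^{k-1}$ gives $\sum_{e\in S}x_e\ge 2-2^{n/2-k+1}$, and for, say, $k=3n/4$ your vector yields $\sum_{e\in S}x_e=\tfrac{3}{2}$ while the right-hand side is $2-2^{1-n/4}>\tfrac{3}{2}$ once $n>8$. The issue is that the binding requirement is essentially ``$\sum_{e\in S}x_e\ge 2$'' for all intermediate cardinalities, not ``$\ge 1$''. The fix is only a constant: with $x\equiv\tfrac{4}{n-1}$ (the paper's choice) one gets $\sum_{e\in S}x_e\ge\tfrac{4(n/2+1)}{n-1}>2\ge 2-2^{n/2-k+1}$ for every $k>n/2$, at total cost at most $\tfrac{4n}{n-1}\le 8$. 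With that substitution the gap is $\Omega(n)=\Omega(\log\delta)$ and your argument goes through; so this is an arithmetic slip rather than a conceptual gap, but as stated the feasibility claim is false and must be repaired.
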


\begin{restatable}{proposition}{prpSvGreedyLowerBound}
 The analysis in Theorem \ref{thm:sv:greedy-approximation} is tight up to constant factors.
 \label{prp:sv:greedy-bad-dual-solution}
\end{restatable}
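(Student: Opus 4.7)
The plan is to exhibit a family of instances, parameterized by $n$ and a coefficient $D \geq n$, on which the primal-dual greedy algorithm applied to the truncation achieves an approximation ratio of $\Omega(\delta)$. First I would fix a chain lattice $S_1 \succ S_2 \succ \cdots \succ S_{n+1}$ over elements $E = \{e_1, \dots, e_n\}$, identifying each $S_i$ with the set $\{e_i, e_{i+1}, \dots, e_n\}$ and $S_{n+1}$ with $\emptyset$. I set $r(S_i) = n + 1 - i$, $a_{S_i, e_j} = D$ when $i \leq j$ and $0$ otherwise, and $c_{e_j} = j + \epsilon j^2$ for a sufficiently small $\epsilon > 0$ that will break ties deterministically. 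Verifying the greedy-system properties \ref{prop:first}--\ref{prop:last} is routine on a chain: rank and coefficient monotonicity are immediate, the supports $S_i$ are distinct, the modularity and support-preservation requirements are trivial since every pair in a chain satisfies $\{S \wedge T, S \vee T\} = \{S, T\}$, and weighted supermodularity reduces to an identity for the same reason.

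Next I would compute the truncation: for $i \leq j$ one has $a'_{S_i, e_j} = \min\{D, r(S_i) - r(S_{j+1})\} = \min\{D, j - i + 1\} = j - i + 1$, giving $\delta_{S_i, e_j} = j/(j - i + 1)$, which is maximized at $i = j = n$ and yields $\delta = n$. An inductive trace of the greedy algorithm shows that at iteration $i$ the residual dual slack for $e_j$ (with $j \geq i$) is proportional to $\epsilon$ times a quantity strictly minimized at $j = i$, so the bottleneck sequence is exactly $e_1, e_2, \dots, e_n$. Because $a'_{S_i, e_i} = 1$ equals the marginal rank gap $r(S_i) - r(S_{i+1}) = 1$, the ceilings in \ref{alg:primal-phase:set-x} evaluate to $x^*_{e_i} = 1$ with no overhead, and the total greedy cost tends to $\sum_{i=1}^n i = n(n + 1)/2$ as $\epsilon \to 0$.

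For the optimum, assigning $x_{e_n} = 1$ and setting every other variable to zero is feasible because $a_{S_i, e_n} = D \geq n \geq r(S_i)$ for all $i$; since $r(S_n) = 1$ forces $x_{e_n} \geq 1$, this choice is optimal with cost $c_{e_n} \to n$. The resulting performance ratio $(n + 1)/2 = (\delta + 1)/2$ matches the bound $b\delta$ of Theorem \ref{thm:sv:greedy-approximation} (with $b = 1$, since the ceilings are integral and $r$ is non-negative) up to a factor of two. The main delicate step in the full proof is the induction computing the slacks and confirming that the bottleneck at iteration $i$ is $e_i$; the perturbation $\epsilon j^2$ is precisely what makes each argmin unique, so this reduces to a direct calculation. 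An analogous construction with a single trailing negative-rank row would handle the $(b\delta + 1)$ branch of the theorem in the same way.
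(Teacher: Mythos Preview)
Your construction is correct and establishes the $\Omega(\delta)$ lower bound, but it proceeds quite differently from the paper's proof.

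The paper instantiates the subset-cover formulation from Proposition~\ref{prp:sv:log-inapproximable} on the Boolean lattice: a ground set of size $M+n$ with $n$ sets $U_i=\{1,\dots,M,M+i\}$, all of cost $M+1$. In that instance $\delta=M+1$, the only feasible primal is $x\equiv 1$ (so the greedy primal is forced to be optimal), and the point is that the greedy \emph{dual} value $r^T y^*=M+n$ falls short of the true LP optimum $n(M+1)$ by a factor $\Theta(\delta)$ when $n=M+1$. Thus the paper pinpoints the loss entirely in the chain-structured dual certificate, and as a bonus shows the truncated LP has no integrality gap.

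Your argument instead uses a chain lattice and engineers the costs so that the greedy \emph{primal} itself is bad: it buys every element, paying $\Theta(n^2)$, while buying $e_n$ alone costs $\Theta(n)$. Both routes legitimately certify tightness of Theorem~\ref{thm:sv:greedy-approximation} up to constants. Your construction is more elementary (a chain, a one-line truncation computation, and a clean induction on the slacks), and it demonstrates that the algorithm---not merely the analysis---can be $\Theta(\delta)$ off. The paper's construction, on the other hand, isolates the weakness as lying in the dual bound rather than the primal choice, and ties the example to a natural combinatorial problem. Your tie-breaking perturbation $\epsilon j^2$ is exactly what is needed to make the bottleneck sequence deterministic; the slack recursion you sketch indeed gives residual slack $\epsilon(j-i+2)(j-i+1)$ before iteration $i$, minimized at $j=i$, so the induction goes through.
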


\begin{proof}
 See appendix for the proof.
\end{proof}

To round up this section, we will see that all properties \ref{prop:first} - \ref{prop:last} are necessary in the sense that the removal of any of them results in a situation where the greedy algorithm does not provide a feasible solution.
While this does not rule out that other greedy algorithms may perform nicely, it points out certain limits of this analysis.

\begin{restatable}{proposition}{prpSvNecessityFeasible}
Suppose that a system $(A,r)$ satisfies \ref{prop:first} - \ref{prop:last} except for any one of the properties.
Then the greedy algorithm does not necessarily terminate with a feasible solution.
\label{prp:sv:greedy-necessity-feasible}
\end{restatable}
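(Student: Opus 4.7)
The plan is to exhibit, for each property in $\{(\text{Prop1}),(\text{Prop2}),(\text{Prop3}),(\text{Prop4})\}$, a small instance $(A,r)$ satisfying all the other three properties but failing the chosen one, on which the primal-dual greedy algorithm produces a primal vector $x^{*}$ violating at least one constraint of~(\ref{LP:P}). The four constructions are essentially independent and can be kept very small (two or three rows, one to three ground elements each).

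For (Prop2), I would take the three-row chain $\emptyset\prec\{e_{1}\}\prec\{e_{1},e_{2}\}$ with monotone ranks $0,2,10$ and coefficients $a_{\{e_{1}\},e_{1}}=10$, $a_{\{e_{1},e_{2}\},e_{1}}=1$, $a_{\{e_{1},e_{2}\},e_{2}}=1$; this violates $A$-monotonicity at $e_{1}$, while (Prop1), (Prop3), (Prop4) are trivially satisfied on a chain. With $c_{e_{1}}=c_{e_{2}}=10$, the greedy picks $\{e_{1},e_{2}\}$ first, and after both dual constraints saturate at $y=10$ the dual variable on $\{e_{1}\}$ cannot be raised because of the large $a_{\{e_{1}\},e_{1}}$; the primal $x^{*}_{e_{2}}=8$, $x^{*}_{e_{1}}=\lceil 2/10\rceil=1$ then fails the $\{e_{1},e_{2}\}$-constraint, whose value is $9<10$. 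For (Prop4), I would use the Boolean lattice over $\{e_{1},e_{2}\}$ with all-ones $A$ and submodular $r(\emptyset)=0$, $r(\{e_{1}\})=r(\{e_{2}\})=2$, $r(\{e_{1},e_{2}\})=3$; with $c_{e_{1}}=c_{e_{2}}=3$ the greedy picks $\{e_{1},e_{2}\}$, then $\{e_{2}\}$ (whose dual cannot be raised because $e_{2}$ is already tight), produces $x^{*}_{e_{1}}=\lceil(3-2)/1\rceil=1$, and violates $\{e_{1}\}$ at value $1<2$.

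For (Prop3) it suffices to violate either sub-clause. For the unique-support clause, I would use the three-row chain $\{\bot,L_{1},L_{2}\}$ with $\text{supp}(L_{1})=\text{supp}(L_{2})=\{e_{1}\}$, $r(L_{1})=r(L_{2})=2$, and $a_{L_{1},e_{1}}=\tfrac{1}{2}$, $a_{L_{2},e_{1}}=1$: the tie-breaking rule picks $L_{2}$, outputs $x^{*}_{e_{1}}=2$, and the $L_{1}$-constraint evaluates to $\tfrac{1}{2}\cdot 2=1<2$. For (Prop1), the construction is more delicate because (Prop4) substantially restricts the way $r$ can fail to be monotone. I would work on the Boolean lattice over $\{e_{1},e_{2},e_{3}\}$ and choose supermodular ranks with several witness rows at sufficiently negative rank, arranged so that an intermediate lattice element uniquely maximizes $r$; after the greedy picks that element and its bottleneck saturates, the sublattice's max-rank row is incomparable to the initial selection, so the sequence of dual selections fails to be a $\preceq$-chain and the first bottleneck's primal variable is under-allocated, violating the constraint of the initially chosen row.

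The main obstacle is (Prop1): weighted supermodularity forces $r(T)\ge r(S)+r(T\setminus S)$ for all $S\subseteq T$, which severely restricts how monotonicity of $r$ can be broken without either breaking (Prop4) as well or pushing the problematic rows into the negative-rank region where the STOP criterion masks the failure. The fine-tuning for (Prop1) must ensure that at least one row with strictly positive rank is not reached by the greedy's chain of selections. The remaining three constructions reduce to routine verifications on two- or three-row instances.
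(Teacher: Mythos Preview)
Your overall strategy matches the paper's: for each property, exhibit a small instance satisfying the remaining three on which the greedy outputs an infeasible $x^*$. Your concrete constructions for (Prop2), (Prop3) and (Prop4) are different from the paper's but all check out. Two minor points: in the (Prop2) example you should force the simultaneous tightness to break towards $e_2$ (e.g.\ take $c_{e_2}<c_{e_1}$), and in the (Prop4) example the algorithm does go on to set $x^*_{e_2}=2$ after the zero-increment on $\{e_2\}$, though this does not affect the violated row $\{e_1\}$. The paper in fact gives no example for (Prop3), so your unique-support counterexample goes beyond what the paper actually proves.

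The genuine gap is (Prop1). You correctly observe that (Prop4) sharply restricts how $r$ can fail monotonicity, but you stop at a plan on the three-element Boolean lattice (``an intermediate element uniquely maximises $r$, the next selection is $\preceq$-incomparable, the first bottleneck is under-allocated'') and never write down the eight rank values, run the greedy, and name the violated row. This sketch is not self-evidently realisable: supermodularity plus the positive-part rounding in the primal phase tends to \emph{over}-allocate early bottlenecks, and in the obvious small attempts the resulting $x^*$ comes out feasible. The paper, by contrast, stays on two elements and simply lets $r$ dip negative on one singleton so that the STOP criterion fires after a single iteration; this is far less delicate than what you propose. Whether you follow that route or your own three-element plan, the (Prop1) case needs an explicit instance with the greedy's output and the violated constraint written out --- as it stands, your proposal for this case describes what an example should achieve rather than giving one.
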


\begin{proof}
See appendix for the proof.
\end{proof}

\section{Generalization to multiple greedy systems}
\label{sec:product-version}

Although greedy systems already capture some well-known problems such as knapsack cover or optimization over contra-polymatroids, the modelling techniques are limited.
In this section, we discuss a generalization towards problems that are composed of multiple greedy systems on the same column set.
A full version of this section can be found in the appendix.
Due to space limitations, we discuss only a brief overview of the main results without proofs.

Again, let $E$ be the index set of columns.
Let $\mathcal{L}$ be a family of subsets of $E$ with some partial order $(\mathcal{L}, \preceq)$ associated.
Moreover, we will assume that $\mathcal{L}$ is actually a lattice with join $\vee$ and meet $\wedge$.
This family will be used similar to the previous section, it will also satisfy the properties elaborated in the previous section.
In particular, we will assume that the sets in $\mathcal{L}$ are pairwise different.
Moreover, let $\mathcal{U}$ be a family of subsets of $E$.
Multiple copies of the same subset are allowed in $\mathcal{U}$.
Let $\mathcal{B} = \mathcal{U} \times \mathcal{L}$,  $A \in \RR_+^{|\mathcal{B}| \times |E|}$ and $r: \mathcal{B} \rightarrow \RR$.
This time, the rows $a_{(U,S)}$ of matrix $A$ are indexed by tuples $(U,S) \in \mathcal{B}$.
The coefficients of matrix $A$ are denoted by $a_{(U,S),e}$ for row $(U,S) \in \mathcal{B}$ and column indexed by $e \in E$.
We will require $\{e \in E : a_{(U,S),e} > 0 \} = U \cap S$ for all $(U,S) \in \mathcal{B}$, that is, the support of each row of matrix $A$ indexed by a tuple $(U,S)$ equals the intersection of $U$ and $S$.
If $\mathcal{U} = \{E\}$, the situation from Section \ref{sec:simple-version} will be recovered.

We are interested in conditions of system $(A,r)$ such that problems of type
\begin{align*} \min_{x \in \ZZ_+^{|E|}} \left\{ c^T x \mid a_{(U,S)} x \geq r(U,S)\; \forall (U,S) \in \mathcal{B} \right\} \tag{P} \end{align*}
admit a bounded approximation guarantee via a simple primal-dual greedy algorithm.

In order to use the primal-dual greedy algorithm from Section \ref{sec:introduction}, we need an ordering on $\mathcal{B}$ which chooses the variables to be increased during the dual phase.
Note that we already assumed that $(\mathcal{L}, \preceq)$ is a partial order on $\mathcal{L}$.
We will assume that some additional partial orders are provided as follows.
For every $S \in \mathcal{L}$, let $(\mathcal{U}, \preceq_S)$ be a partial order of $\mathcal{U}$.
The partial orders are not required to be correlated in any way.
With these orderings, we compose the following lexicographic ordering for $\mathcal{B}$:
\begin{align*}
(U,S) \preceq_{\mathcal{B}} (U',S')\quad &\Leftrightarrow
\begin{aligned}
& \left( S \prec S' \right) \text{ or} \\
& \left( S = S' \text{ and } r(U,S) < r(U',S') \right) \text{ or} \\
& \left( S = S' \text{ and } r(U,S) = r(U',S') \text{ and } U \preceq_{S} U' \right)
\end{aligned}
\end{align*}

We use $(A,r)_{|U}$ to denote the subsystem of $(A,r)$ induced by fixing a set $U \in \mathcal{U}$.
Precisely, we say that $(A,r)_{|U} = (\bar{A}, \bar{r})$, where $\bar{A} \in \RR_+^{|\mathcal{L}| \times |E|}$ with coefficients $\bar{a}_{S,e} = a_{(U,S),e}$ and $\bar{r}: \mathcal{L} \rightarrow \RR, \bar{r}(S) = r(U,S)$.
Note that the ordering of $\mathcal{B}$ restricted to a subsystem is consistent in the way that a chain $(U_\ell,S_\ell) \preceq_{\mathcal{B}} \dots \preceq_{\mathcal{B}} (U_1,S_1)$ in $\mathcal{B}$ will induce a chain $S_\ell \preceq \dots \preceq S_1$ in every subsystem $(A,r)_{|U}$.
For any $e \in E$ we use the notation $\mathcal{B} \setminus \{e\} = \{ (U,S) \in \mathcal{B} : e \not\in S \}$ to denote the restriction of $\mathcal{B}$ to a subsystem of $\mathcal{L}$ that does not contain element $e$ in its support (with respect to the $\mathcal{L}$-component).
Note that the operation is assumed to have no effect on the $\mathcal{U}$ component, that is, we will observe tuples $(U,S) \in (\mathcal{B} \setminus \{e\})$ with $e \in U$.

The previous section elaborated that \ref{prop:first} - \ref{prop:last} are useful in order to prove approximation guarantees for subsystems $(A,r)_{|U}$.
In this section, we will assume that the restricted system $(A,r)_{|U}$ satisfies \ref{prop:first} - \ref{prop:last} for all $U \in \mathcal{U}$.
\begin{definition}
 A system $(A,r)$ on $\mathcal{B}$ (with respect to $(\mathcal{B}, \preceq_{\mathcal{B}})$) is called a \emph{greedy product system}, if for every $U \in \mathcal{U}$, the subsystem $(A,r)_{|U}$ satisfies \ref{prop:first} - \ref{prop:last}.
\end{definition}

Analogously to Section \ref{sec:simple-version}, we consider a truncated version of system $(A,r)$.
Otherwise, the integrality gap may be unbounded.
We apply the truncation from Definition \ref{def:sv:truncation} to each subsystem $(A,r)_{|U}, U \in \mathcal{U}$ individually and call the resulting system the \emph{truncation} of $(A,r)$.

\begin{restatable}{definition}{defPlTruncation}
Let $(A,r)$ be a greedy product system and define $A' \in \RR_+^{|\mathcal{B}| \times |E|}$ with coefficients as follows.
For $(U,S) \in \mathcal{B}$ and $e \in E$, set $a'_{(U,S),e} = \min\{a_{(U,S),e}, r(U,S)^+ - r(U,\phi_e(S))^+ \}$.
We call the system $(A',r)$ the \emph{truncation} of $(A,r)$.
\label{def:pl:truncation}
\end{restatable}

In this section, we will apply a revised version of the primal-dual greedy algorithm to system
\begin{align*}
 \min_{x \in \ZZ^{|E|}_+}  \{ c^T x \mid A'x \geq r \} \tag{T} \label{LP:T:product}
\end{align*}
and prove a bounded approximation guarantee similar to the previous section.

\sectionheadline{The revised primal-dual greedy algorithm}
In order to get results similar to Section \ref{sec:simple-version}, we need to slightly modify the greedy algorithm from Section \ref{sec:introduction}.
This time, we will combine the dual and primal phase in a single algorithm which is given in Figure \ref{alg:product:pseudocode}.

\begin{figure}[tb]
\begin{enumerate}
\item Initially, let $y^* \equiv 0, x^* \equiv 0$.
\item While $\mathcal{B} \neq \emptyset$
\begin{enumerate}
\item Let $B \subseteq \mathcal{B}$ be the maximal tuples in $\mathcal{B}$ with respect to ordering $\preceq_{\mathcal{B}}$. \label{alg:product:select-rows}
\item STOP if $r(U,S) \leq 0$ for $(U,S) \in B$. \label{alg:product:stop}
\item Raise $y^*_{(U,S)}$ for all $(U,S) \in B$ uniformly until some element $e^* \in E \setminus E^*$ becomes tight.
\item Let $S' = \phi_{e^*}(S)$ and set $x^*_{e^*} = \max \left\{ \left\lceil \frac{r(W,S)^+ - r(W,S')^+}{a'_{(W,S),e^*}} \right\rceil : W \in \mathcal{U}, a'_{(W,S),e^*} > 0 \right\}.$ \label{alg:product:set-x}
\item Add $e^*$ to $E^*$ and iterate with $\mathcal{B} = \mathcal{B} \setminus \{e^*\}$. \label{alg:product:lattice-subtract}
\end{enumerate}
\item For bottleneck elements $e^*$ in reverse order: Decrease $x_{e^*}$ as long as the solution remains feasible for all $(U,S) \in \mathcal{B}$. \label{alg:product:cleanup}
\end{enumerate}
\caption{Pseudocode of the revised primal-dual greedy algorithm.}
\label{alg:product:pseudocode}
\end{figure}

In contrast to Section \ref{sec:introduction}, we now increase the dual variable for \emph{all maximal} tuples $(U,S) \in \mathcal{B}$ with respect to $\preceq_{\mathcal{B}}$ uniformly.
Since $(\mathcal{L}, \preceq)$ is a lattice, all variables that are increased simultaneously during a single iteration share the same set $S \in \mathcal{L}$.
Moreover, by definition of the lexicographic order, they share the same rank value $r^*$.
If each partial order $(\mathcal{U}, \preceq_S)$ for $S \in \mathcal{L}$ exposes a single element, $\preceq_{\mathcal{B}}$ will also expose a single element.

We also adapt the construction of $x^*_{e^*}$ for bottleneck elements.
This time, we consider all rank differences $r(W,S)^+ - r(W,S')^+$ of sets $W \in \mathcal{U}$ and set $x^*_{e^*}$ sufficiently large as to cover \emph{all} these differences.
The element $S' = \phi_{e^*}(S) \in \mathcal{L}$ was chosen in such a way that it is the element $S'$ that is considered in the subsequent iteration of the main loop.
This will ensure primal feasibility.

Finally, we add an additional cleanup phase.
This will be beneficial, as variables from later iterations may render variables from previous iterations redundant.
In this case, we may carefully decrease variables in a post-processing step.
In general, deciding if a variable can be decreased by one may be a non-trivial task.
Moreover, determining the maximum in Line~\ref{alg:product:set-x}) is not simple, either.
In Table \ref{table:intro:result-summary} we provided some examples in which this is possible.

\sectionheadline{Approximation guarantee for the revised greedy algorithm}

Similar to Section \ref{sec:simple-version}, we can show that the truncation of a greedy product system does not cut off any integer feasible points.
Moreover, we can show that the greedy algorithm always obtains feasible primal solutions.
Due to space restrictions, we omit all feasibility results and provide only a summary regarding the approximability.

\begin{restatable}{lemma}{lemPlCleanupNecessary}
 Without the cleanup phase in Line~\ref{alg:product:cleanup}, an analysis similar to Theorem~\ref{thm:sv:greedy-approximation} for a greedy product system results in an approximation factor of at least $|E|$.
\end{restatable}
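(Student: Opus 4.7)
The plan is to prove the lemma by exhibiting an explicit family of greedy product systems $(A, r)$, parameterized by $n = |E|$, on which the revised primal-dual greedy algorithm without the cleanup phase produces a solution of cost $n$ while the optimum has cost $1$, thus forcing the approximation ratio to be $|E|$.

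For the construction, let $E = \{e_1, \ldots, e_n\}$ with unit costs, let $\mathcal{L}$ be the chain $\emptyset = T_0 \prec T_1 \prec \ldots \prec T_n = E$ with $T_j = \{e_1, \ldots, e_j\}$, and let $\mathcal{U} = \{W_1, \ldots, W_n\}$ with $W_i = E$ for every $i$. Set $a_{(W_i, T_j), e_k} = 1$ for every $e_k \in T_j$, and $r(W_i, T_j) = 1$ if $j \geq n - i + 1$, otherwise $r(W_i, T_j) = 0$. Equip each $(\mathcal{U}, \preceq_S)$ with the anti-chain order, so that tuples sharing the same $S$ and the same rank are simultaneously maximal in $\preceq_\mathcal{B}$. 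The greedy product system axioms are routine to verify: for each fixed $W_i$, the rank is a monotone indicator in $j$ (Prop1), the coefficient is the monotone indicator of $e_k \in T_j$ (Prop2), $\mathcal{L}$ is a chain and hence a modular lattice with pairwise distinct supports $W_i \cap T_j = T_j$ (Prop3), and Prop4 holds with equality on a chain.

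Next I would trace the algorithm without the cleanup phase. In iteration $k$, the maximal set of $\mathcal{B}$ is $B = \{(W_i, T_{n-k+1}) : i \geq k\}$, all at rank $1$; adversarial tie-breaking picks $e^*_k = e_{n-k+1}$ among the (already or simultaneously) tight elements. The key observation is that for every $i > k$ the rank difference $r(W_i, T_{n-k+1}) - r(W_i, T_{n-k})$ is zero, so the truncated coefficient $a'_{(W_i, T_{n-k+1}), e^*_k}$ vanishes and only $i = k$ contributes to the maximum in Line~\ref{alg:product:set-x}), yielding $x^*_{e^*_k} = \lceil 1/1 \rceil = 1$. After $n$ such iterations the algorithm terminates (the remaining active rank drops to $0$) with $x^*_{e_j} = 1$ for every $j$, of total cost $n$. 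In contrast, setting $x_{e_1} = 1$ and all other variables to $0$ satisfies every non-trivial constraint because $e_1 \in T_j$ for every $j \geq 1$ and all non-zero coefficients equal $1$, so the optimum cost is $1$.

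The main obstacle is that the lower bound relies on a legitimate adversarial realization of ties: several tuples are maximal simultaneously, and several elements become tight at the same time. The anti-chain choice on each $(\mathcal{U}, \preceq_S)$ together with uniform unit coefficients makes both forms of ties genuine, so the bad execution above is a valid run of the algorithm. As a sanity check, a cleanup pass in reverse order would preserve $x^*_{e_1} = 1$ (since the constraint indexed by $(W_n, T_1)$ demands $x_{e_1} \geq 1$) but reduce every $x^*_{e_j}$ with $j > 1$ to zero, recovering the optimum and confirming that the cleanup step is exactly what salvages the approximation guarantee for greedy product systems.
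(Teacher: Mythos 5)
There is a genuine gap: the execution trace you describe is not a valid run of the algorithm on your instance, and in fact the algorithm (even without cleanup) returns the optimum there. The revised algorithm is applied to the truncation, so in Line (c) the dual variables are raised until a dual constraint of the form $\sum_{(U,S)} a'_{(U,S),e}\, y_{(U,S)} \le c_e$ becomes tight. In your first iteration all $n$ tuples $(W_i,E)$ are raised uniformly, and the truncated coefficient $a'_{(W_i,E),e_k}$ equals $1$ exactly when $k \le n-i+1$. Hence the left-hand side for $e_k$ grows at rate $n-k+1$: element $e_1$ accumulates at rate $n$ and becomes tight at $\epsilon = 1/n$, at which point $e_n$'s constraint only has value $1/n$. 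There is no tie among elements, so the adversary cannot pick $e^\ast = e_n$; the algorithm is forced to take $e^\ast = e_1$, set $x^\ast_{e_1} = 1$, and then restrict to $\mathcal{B}\setminus\{e_1\} = \{(W_i,\emptyset)\}$, where all ranks are $0$, so it stops with the optimal solution of cost $1$. The "uniform unit coefficients" you invoke to justify the tie are uniform only in $A$, not in the truncation $A'$, and it is $A'$ that governs tightness.

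For comparison, the paper's proof also proceeds by an explicit bad family, but engineers the instance so that each iteration genuinely forces a \emph{new} bottleneck element: it takes a star graph with edges $\{s,v_i\}$ of cost $c_{\{s,v_i\}} = i$, lets $\mathcal{U}$ be the $s$-$t$-cut family with rank $r(U,S) = 1 - |U\setminus S|$, and observes that the algorithm successively tightens and selects every edge ($x^\ast_e = 1$ for all $e$) although only the single edge incident to $t$ is needed. The factor $|E|$ is then read off from the constraint $(U_0,S_0)=(E,E)$, for which $\sum_e a'_{(U_0,S_0),e} x^\ast_e = n$ against $r(U_0,S_0)=1$, i.e.\ the approximate complementary slackness used in Theorem~\ref{thm:sv:greedy-approximation} is violated by a factor $|E|$ — note the lemma is a statement about the \emph{analysis}, which this certifies directly. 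If you want to salvage your chain-based construction you would need to break the symmetry so that the intended bottleneck element is the one whose truncated dual constraint actually fills up first in each iteration (e.g.\ by scaling costs as the paper does), rather than relying on a tie that does not occur.
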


In order to characterize the influence on elements in terms of the cleanup phase, let us consider a solution $x^* \in \ZZ_+^{|E|}$ obtained by the revised greedy algorithm.
To get an intuition, let us assume for a second that the algorithm increased a single dual variable in each iteration, that is, Line~\ref{alg:product:select-rows}) returned a single maximum tuple in each iteration.
Let $(U_{\ell+1},S_{\ell+1}) \prec \dots \prec (U_1, S_1)$ be the constructed dual chain $e_i \in S_i \setminus S_{i+1}, 1 \leq i \leq \ell$ be the bottleneck elements.
As in the previous section, $r(U_{\ell+1},S_{\ell+1}) \leq 0 < r(U_{\ell},S_{\ell})$.

During the cleanup phase, the value $x^*_{e_i}$ of element $e_i$ was not further reduced because either $x^*_{e_i} = 0$, or there is at least one tuple $(U,S) \in \mathcal{B}$ such that
$$\sum_{f \in U \cap S} a'_{(U,S),f} x^*_f - a'_{(U,S),e_i} < r(U,S) \leq \sum_{f \in U \cap S} a'_{(U,S),f} x_f.$$
We call this tuple $(U,S)$ a \emph{witness} of bottleneck element $e_i$.
Note that this tuple was not necessarily considered in Line~\ref{alg:product:select-rows}).

But let us suppose that some element $e_i$ has a witness $(U_t, S_t)$ on the dual chain ($1 \leq t \leq \ell$).
Then $i \geq t$, otherwise $e_i \not\in S_t$.
The definition of witnesses implies
$$\sum_{f \in U_t \cap S_t} a'_{(U_t,S_t),f} x^*_f \leq 2 r(U_t,S_t).$$
In other words, if \emph{every} tuple $(U_t,S_t)$ of the dual chain was a witness for some element $e_i$, then $x^*$ would be a $2$-approximation for (\ref{LP:T}) by standard primal-dual approximation arguments (c.f.\ proof of Theorem~\ref{thm:sv:greedy-approximation}).
Of course, we can not expect this to happen in general.
But the following observation establishes a strong connection between witnesses and elements on the dual chain.
We will now cover the case that (possibly) multiple dual variables were increased simultaneously.

We define a \emph{multiplicity witness-cover} as follows.
Let $\mathcal{I} \subseteq \mathcal{B}$ be a family of tuples that were increased simultaneously in one iteration of the revised algorithm.
In this case, the rank value $r^*$ of all these tuples equals by definition of $(\mathcal{B}, \preceq_{\mathcal{B}})$.

We call $\mathcal{C} \subseteq \mathcal{B}$ a \emph{multiplicity witness-cover} of $\mathcal{I}$, if each tuple $(U,S) \in \mathcal{C}$ is a witness for some element $e \in E$, $r(U,S) \leq r^*$ and
every element $e \in E$ with $x^*_e > 0$ appears at least as often in $\mathcal{C}$, as it appears in $\mathcal{I}$.
That is, for all $e \in E$,
$$|\{ (U,S) \in \mathcal{I} : a'_{(U,S),e} x^*_e > 0 \}| \leq |\{ (U,S) \in \mathcal{C} : a'_{(U,S),e} x^*_e > 0 \}|.$$
If $\mathcal{C}$ is of small cardinality, we can show that $x^*$ is a good approximation.

We generalize our definition of $\delta$ from Section \ref{sec:simple-version} slightly to cover this case.
Given $(U,\emptyset), (W,S) \in \mathcal{B}, e \in E$, let
\begin{align*}
& \delta_{U,(W,S),e} = \begin{cases} \frac{a'_{(U,\emptyset),e}}{a'_{(W,S),e}} & a'_{(W,S),e} > 0 \text{ and } (r(W,\phi_e(S)) \geq 0 \text{ or } a'_{(W,S),e} = a_{(W,S),e}), \\ 1 & \text{otherwise}, \end{cases}
\end{align*}
and set $\delta = \max_{U,(W,S),e} \left\{ \delta_{U,(W,S),e} \right\}$.
The following Theorem \ref{thm:ps:greedy-approximation} yields bounds on the solution cost, depending on $A$ being binary or a general matrix.
Finally, Proposition \ref{prp:ps:greedy-bad-dual-solution} shows that the dependency on $k$ is inherent in the type of dual solution constructed.

\begin{restatable}{theorem}{thmPsMultiDualGreedyApproximation}
 Let $(A,r)$ be a greedy product system and let $k \in \ZZ_+$.
 Let $(x^*,y^*)$ be the solution obtained by the revised greedy algorithm with dual support $\mathcal{I}_i \subseteq \mathcal{B}$ in iteration $i$.
 If each family $\mathcal{I}_i$ has a witness cover of size at most $k |\mathcal{I}_i|$, then $x^*$ has cost no larger than $k (\delta + 1) OPT$.

 If, additionally, the truncation $A'$ is a binary matrix, then the solution has cost bounded by $k OPT$.
 \label{thm:ps:greedy-approximation}
\end{restatable}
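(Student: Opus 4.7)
The plan is to adapt the primal–dual accounting of Theorem~\ref{thm:sv:greedy-approximation} to the multi-tuple setting, replacing the telescoping chain argument by a charging scheme based on the witness cover. Since the revised algorithm only assigns $x^*_e > 0$ to bottleneck elements and every bottleneck element has a tight dual constraint, complementary slackness gives
\[
c^T x^* \;=\; \sum_{(U,S)\in\mathrm{supp}(y^*)} y^*_{(U,S)}\, a'_{(U,S)} x^*.
\]
Grouping by iteration $i$, all tuples in $\mathcal{I}_i$ share a common increment $\epsilon_i$ and a common rank $r^*_i$ by definition of $\preceq_{\mathcal{B}}$, so it suffices to bound $\sum_{(U,S)\in\mathcal{I}_i} a'_{(U,S)} x^*$ in terms of $|\mathcal{I}_i|\, r^*_i$; weak duality then closes the loop via $\sum_i \epsilon_i |\mathcal{I}_i| r^*_i = y^{*T} r \leq \mathrm{OPT}$.

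The core of the argument uses the witness cover $\mathcal{C}_i$ in two steps. First, for every pair $((U,S),e)$ with $(U,S)\in\mathcal{I}_i$ and $a'_{(U,S),e} x^*_e > 0$, the definition of $\delta$ bounds $a'_{(U,S),e}$ by $\delta$ times the corresponding coefficient at a reference tuple (with the familiar exceptional case of a negative residual rank handled as the last-bottleneck carve-out from the proof of Theorem~\ref{thm:sv:greedy-approximation}). The multiplicity condition of $\mathcal{C}_i$ — each element $e$ with $x^*_e > 0$ occurs in $\mathcal{C}_i$ at least as often as it occurs in $\mathcal{I}_i$ — provides an injection from such pairs into analogous pairs on $\mathcal{C}_i$, yielding
\[
\sum_{(U,S)\in\mathcal{I}_i} a'_{(U,S)} x^* \;\leq\; \delta \sum_{(W,T)\in\mathcal{C}_i} a'_{(W,T)} x^*
\]
up to the exceptional term. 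Second, for each witness $(W,T)\in\mathcal{C}_i$ of some bottleneck $e$, the cleanup phase of Line~\ref{alg:product:cleanup}) forces $a'_{(W,T)} x^* < r(W,T) + a'_{(W,T),e}$, and the truncation bound $a'_{(W,T),e} \leq r(W,T)^+ \leq r^*_i$ together with the witness-cover size bound $|\mathcal{C}_i| \leq k|\mathcal{I}_i|$ turns the sum into $k(\delta + 1)\,|\mathcal{I}_i|\, r^*_i$. Here the ``$+1$'' collects the $a'_{(W,T),e}$ slacks (analogous to the ``$+1$'' in $(2\delta+1)$ of Section~\ref{sec:simple-version}), while the cleanup phase eliminates the $\lceil\cdot\rceil$-induced factor of two present in the simple setting. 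Assembling these estimates and applying weak duality yields the claimed $c^T x^* \leq k(\delta+1)\,\mathrm{OPT}$.

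In the binary case the truncation $A'$ has entries in $\{0,1\}$ so $\delta = 1$, and the ceiling in Line~\ref{alg:product:set-x}) together with cleanup keeps $x^*_e \in \{0,1\}$; coverage at each witness is then tight and the $a'_{(W,T),e}$ slack collapses, so the same charging scheme yields $c^T x^* \leq k\,\mathrm{OPT}$. The main obstacle I anticipate is making the $\mathcal{I}_i \to \mathcal{C}_i$ charging precise — in particular, formalizing the pair-wise injection when a single element may be witnessed by several tuples or when several elements share a witness, and cleanly merging the exceptional-$\delta$ case for the final iteration (where the rank crosses zero) with the iteration-wise bookkeeping.
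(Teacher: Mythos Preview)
Your proposal is correct and matches the paper's proof essentially step for step: complementary slackness, grouping by iteration, passing from $\mathcal{I}_i$ to $\mathcal{C}_i$ via the multiplicity condition and $\delta$, bounding each witness $(W,T)$ by $r(W,T)+a'_{(W,T),e_{(W,T)}}\le (\delta+1)r_i^*$, and closing with $|\mathcal{C}_i|\le k|\mathcal{I}_i|$ and weak duality. Your anticipated obstacle about the ``exceptional-$\delta$'' last iteration is not actually needed here---in the product-system argument the $+1$ in $(\delta+1)$ arises uniformly from the additive slack $a'_{(W,T),e}$ at each witness, so no separate carve-out for the rank-crossing iteration is required.
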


\begin{restatable}{proposition}{prpPsGreedyLowerBound}
 For every $k \in \ZZ_+$ there is a greedy product system $(A,r)$ with truncation $A' \in \{0,1\}^{|\mathcal{B}| \times |E|}$ such that the dual $y^*$ obtained by the revised greedy algorithm has optimality gap $k$.
 \label{prp:ps:greedy-bad-dual-solution}
\end{restatable}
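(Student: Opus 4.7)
My plan is to construct, for each $k \in \ZZ_+$, an explicit greedy product system on which the revised primal-dual algorithm produces a dual $y^*$ whose value is a factor of $k$ smaller than the LP optimum. The construction exploits precisely the mechanism that introduces the factor $k$ into Theorem~\ref{thm:ps:greedy-approximation}: when $k$ tuples $(U_1, S), \dots, (U_k, S) \in \mathcal{B}$ share the same maximal $S \in \mathcal{L}$ and the same rank value, they are simultaneously maximal in $\preceq_{\mathcal{B}}$, so the revised algorithm raises their $k$ dual variables uniformly in a single iteration. If these $k$ tuples all share a single unit-cost element $f$ in their support, the constraint $\sum_i a'_{(U_i,S), f}\, y_{(U_i,S)} \le c_f = 1$ caps each variable at $1/k$, and the contribution to the dual objective is $\sum_i r(U_i, S) \cdot (1/k) = r^\star$, where $r^\star$ is the common rank value, rather than the $k \cdot r^\star$ that a non-uniform dual could achieve.

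Concretely, I would take $E = \{f, e_1, \dots, e_k\}$ with $c_f = 1$ and with $c_{e_i}$ chosen large enough to be non-binding. I set $\mathcal{U} = \{U_1, \dots, U_k\}$ with $U_i = \{f, e_i\}$, and let $\mathcal{L}$ be a small modular lattice (a short chain containing $\emptyset$ and a top element $\top$, augmented with intermediate elements only where needed for feasibility and for the supermodularity check). The rank function $r$ is designed so that each subsystem $(A,r)_{|U_i}$ satisfies \ref{prop:first}--\ref{prop:last}, with $r(U_i, \top) = r^\star > 0$ for all $i$, and with $r(U_i, S) \le 0$ for every $S$ that does not contain $f$. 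All entries of $A$ are in $\{0,1\}$, so the truncation $A'$ is trivially binary. Tracing the revised algorithm is routine: the first iteration raises each $y_{(U_i, \top)}$ to $1/k$ with $f$ as the bottleneck, and once $f$ is removed from $\mathcal{B}$ every remaining row has non-positive rank, so the STOP criterion fires immediately. Greedy's total dual value is therefore $r^\star$.

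For the LP optimum, I exhibit an alternative feasible dual of value $k \cdot r^\star$ by concentrating weight on a single tuple $(U_{i^\star}, \top) = 1$ and \emph{simultaneously} activating auxiliary variables $y_{(U_j, \top)}$ for $j \neq i^\star$ whose dual constraints are not coupled through $f$ (this is possible in the LP even when it is not visible to greedy, by using rows that become non-positive-rank only after the rank function is tuned; alternatively, one can place the alternative rows on a richer sublattice where the $f$-free rows carry sufficient LP value). By LP weak duality this forces $\mathrm{OPT} \ge k \cdot r^\star$, so the optimality gap of $y^*$ is at least $k$. The main obstacle is simultaneously enforcing three competing requirements: (a) the $k$ top tuples must tie in $\preceq_{\mathcal{B}}$ so that greedy raises them in parallel; (b) after $f$ is consumed, greedy must be unable to recover dual objective, i.e., every remaining row is rank-non-positive along the greedy chain; and (c) the LP dual must nonetheless admit a concentrated/independent configuration reaching $k \cdot r^\star$. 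Pinning down $r$ so that the $f$-free rows are LP-useful but greedy-irrelevant is the delicate step, and is analogous to the tightness construction of Proposition~\ref{prp:sv:greedy-bad-dual-solution} for the simple setting, lifted to the product structure via $k$-fold symmetric $\mathcal{U}$.
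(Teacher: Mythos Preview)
Your construction does not produce the claimed gap, and the mechanism you describe is actually self-defeating. You arrange for the $k$ maximal tuples $(U_i,\top)$ to all contain the shared element $f$ with $c_f=1$, so that greedy, raising them uniformly, caps each at $1/k$ and obtains dual value $r^\star$. But the LP dual is bounded by the very same constraint: every row $(U_i,S)$ with $f\in U_i\cap S$ contributes to $\sum a'_{(U_i,S),f}\,y_{(U_i,S)}\le 1$, and by your own requirement~(b) every $f$-free row has non-positive rank and hence cannot help the dual objective. So the LP optimum is at most $r^\star$ as well, and there is no gap. Your attempt to rescue this---``concentrating weight on a single tuple $(U_{i^\star},\top)=1$ and simultaneously activating auxiliary variables $y_{(U_j,\top)}$ whose dual constraints are not coupled through $f$''---contradicts the construction, since $f\in U_j$ for all $j$; and the parenthetical alternatives are too vague to constitute an argument.

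The paper's construction works by the opposite mechanism. Take $E=\{1,\dots,k\}$, $\mathcal{U}=\{E\}\cup\{\{i\}:1\le i\le k\}$, $\mathcal{L}=2^E$, $r(U,S)=1-|U\setminus S|$, $c\equiv 1$, and define each $(\mathcal{U},\preceq_S)$ so that $E$ dominates every singleton. Then $(E,E)$ is the \emph{unique} maximal tuple in $\preceq_{\mathcal{B}}$; greedy raises only $y^*_{(E,E)}$ to~$1$, saturating every element simultaneously, for dual value~$1$. The LP, however, can set $y_{(\{i\},E)}=1$ for each $i$: the singletons $\{i\}$ have pairwise disjoint supports, so these $k$ variables are completely decoupled in the dual constraints, giving value~$k$. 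The point is not that greedy raises $k$ coupled variables too slowly; it is that the ordering forces greedy to spend all its budget on one wasteful row whose support is~$E$, while the LP can spread over $k$ disjoint rows that greedy never touches.
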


\section*{Appendix}

\subsection*{Proofs from Section \ref{sec:simple-version}}

\svObsPhiOrderPreserving*

\begin{proof}
 We will show that $\phi_e(S) = S \wedge \phi_e(T)$.
 \begin{align*}
 S \wedge \phi_e(T) &= \sup\{ L \in \mathcal{L} : L \preceq S, \phi_e(T) \} = \sup\{ L \in \mathcal{L} : L \preceq S, T, e \not\in L \} \\
 &= \sup\{ L \in \mathcal{L} : L \preceq S, e \not\in L \} = \phi_e(S).
 \end{align*}
\end{proof}

\svObsSublattice*

\begin{proof}
 In order to prove the lemma, we need to ensure that \ref{prop:lattice-modular} is still satisfied.
 The remaining properties will clearly hold.

 First of all, let us ensure that $\mathcal{L}'$ will be a lattice.
 Suppose this was not the case and let $S,T \in \mathcal{L}'$ be two maximum elements.
 Then $S \vee_{\mathcal{L}} T \not\in \mathcal{L}'$.
 This implies that $e \in S \vee_{\mathcal{L}} T$, which contradicts to the last statement in \ref{prop:lattice-modular}.

 It remains to show that $\mathcal{L}'$ is modular.
 Suppose it was not modular, then by \cite{birkhoff1948lattice} Theorem 12, we know that $\mathcal{L}'$ contains a sublattice isomorphic to $N_5$.
 Note that $N_5$ is a $5$-cycle.
 For ease of notation, let $0,1$ denote the minimum and maximum elements in $N_5$ and let $U,S,T$ be the non-extreme elements such that $0 \prec S \prec T \prec 1$ and $U$ and $S,T$ are incomparable.
 Since $\mathcal{L}$ is modular, there are two cases (with respect to $\mathcal{L}$).
 \begin{itemize}
  \item $0 \prec S \wedge U \preceq T \wedge U \prec T,U$ with $S \wedge U, T \wedge U \not\in \mathcal{L}'$.
  This implies that $e \in S \wedge U$ but $e \not\in T$, contradicting to \ref{prop:A-monotone}.
  \item $S,U \prec S \vee U \preceq T \vee U \prec 1$ with $S \vee U, T \vee U \not\in \mathcal{L}'$.
  Again, $e \in T \vee U$ and $e \not\in 1$.
  Hence, a contradiction to \ref{prop:A-monotone}.
 \end{itemize}
 Hence, $\mathcal{L}'$ is modular.
 The proof is concluded.
\end{proof}

\obsSvMarginalIncreaseSupermodularity*

\begin{proof}
 Let us apply \ref{prop:A-r-coupling} with $\bar{S} = \phi_e(T)$ and $\bar{T} = S$.
 Due to modularity, we have $\bar{S} \wedge \bar{T} = \phi_e(S)$ and $\bar{S} \vee \bar{T} = T$.
 Moreover, $e \in \bar{T} \setminus (\bar{S} \wedge \bar{T}) = S \setminus \phi_e(S)$.
 Hence,

 \begin{align*}
  & \frac{r(\bar{T}) - r(\bar{S} \wedge \bar{T})}{a_{\bar{T},e}} \leq \frac{r(\bar{S} \vee \bar{T}) - r(\bar{S})}{a_{\bar{S} \vee \bar{T},e}}
  \Leftrightarrow \frac{r(S) - r(\phi_e(S))}{a_{S,e}} \leq \frac{r(T) - r(\phi_e(T))}{a_{T,e}}. \qedhere
 \end{align*}
\end{proof}

\obsSvMonotoneAroundZero*

\begin{proof}
 We will derive this property from the statement of Lemma \ref{lem:sv:marginal-increase-supermodularity}.
 We will show the following:
 \begin{align*}
   \frac{ r(S) - r(\phi_e(S)) }{ a_{S,e} } \leq \frac{ r(T) - r(\phi_e(T)) }{ a_{T,e} }
   \Leftrightarrow \frac{ r(S) }{ a_{S,e} } - \frac{ r(T) }{ a_{T,e} } \leq \frac{ r(\phi_e(S)) }{ a_{S,e} } - \frac{ r(\phi_e(T)) }{ a_{T,e} } \overset{!}{\leq} 0.
 \end{align*}
 The proof is almost concluded by reformulating the terms:
 \begin{align*}
   \frac{ r(\phi_e(S)) }{ a_{S,e} } - \frac{ r(\phi_e(T)) }{ a_{T,e} } \leq 0
   \Leftrightarrow \frac{ a_{T,e} }{ a_{S,e} } r(\phi_e(S)) \leq r(\phi_e(T)).
 \end{align*}
 The proof is concluded.
 Note that $\phi_e(S) \preceq \phi_e(T)$ by modularity of the lattice.
 Hence, \ref{prop:r-monotone} implies $r(\phi_e(S)) \leq r(\phi_e(T))$.
 Moreover, by \ref{prop:A-monotone}, $\frac{ a_{T,e} }{ a_{S,e} } \geq 1$.
 Since both, $r(\phi_e(S)), r(\phi_e(T)) \leq 0$, the result follows.
\end{proof}

\lemSvTruncationFeasible*

\begin{proof}
 First, let us show that $A'$ is monotone increasing.
 Therefore, let us consider $S \preceq T \in \mathcal{L}$ with $e \in S$.
 We will show $r(S)^+ - r(\phi_e(S))^+ \leq r(T)^+ - r(\phi_e(T))^+$.
 If $r(\phi_e(S)) \geq 0$, this is certainly true due to Lemma \ref{lem:sv:marginal-increase-supermodularity} and \ref{prop:A-monotone}.
 If both, $r(\phi_e(S)) < 0$ and $r(\phi_e(T)) < 0$, the result is implied by monotonicity of $r$.
 Finally, if $r(\phi_e(S)) < 0 \leq r(\phi_e(T))$, we have
 \begin{align*}
 r(T)^+ - r(\phi_e(T))^+
 &\geq \frac{a_{S,e}}{a_{T,e}} \left( r(T)^+ - r(\phi_e(T))^+ \right)
 = \frac{a_{S,e}}{a_{T,e}} \left( r(T) - r(\phi_e(T)) \right) \\
 & \overset{Lem. \ref{lem:sv:marginal-increase-supermodularity}}{\geq} \frac{a_{S,e}}{a_{S,e}} \left( r(S) - r(\phi_e(S)) \right) > r(S)^+ - r(\phi_e(S))^+.
 \end{align*}
 Since $A'$ is the minimum of two monotone increasing functions, it is also monotone increasing.

 Suppose that $x \in (\ref{LP:T})$.
 By definition of the truncation, we get
 $$r(S) \leq \sum_{e \in S} a'_{S,e} x_e \leq \sum_{e \in S} a_{S,e} x_e$$
 for all $S \in \mathcal{L}$.
 Hence, $x \in (\ref{LP:P})$.

 Now, let us assume that $x \in (\ref{LP:P})$ and let $S$ be a row violated in $(\ref{LP:T})$ that has minimum cardinality among all such rows.
 Let $e \in S$ with $x_e > 0$ and $a_{S,e} > a'_{S,e} = r(S)^+ - r(S')^+$, where $S' = \phi_e(S)$.
 If $r(S')^+ = 0$, then $x_e > 0$ implies $a'_{S,e} x_e \geq r(S)$, hence, we assume that $r(S')^+ > 0$.
 If there was no such element, all coefficients of positive variables in row $S$ in (\ref{LP:P}) and (\ref{LP:T}) coincide, hence, the constraint can not be violated.
 By minimality of $S$, we have $\sum_{f \in S'} a'_{S',f} x_f \geq r(S').$
 Moreover,
 \begin{align*}
 \sum_{f \in S'} a'_{S',f} x_f
 \leq \sum_{f \in S'} a'_{S,f} x_f
 = \sum_{f \in S} a'_{S,f} x_f - \sum_{f \in S \setminus S'} a'_{S,f} x_f
 \leq \sum_{f \in S} a'_{S,f} x_f - a'_{S,e} x_e.
 \end{align*}
 Hence, we can conclude
 \begin{align*}
  & r(S') \leq \sum_{f \in S'} a'_{S',f} x_f
  \leq \sum_{f \in S} a'_{S,f} x_f - a'_{S,e} x_e
  \leq \sum_{f \in S} a'_{S,f} x_f - (r(S) - r(S')) \\
  & \Leftrightarrow r(S) \leq \sum_{f \in S} a'_{S,f} x_f,
 \end{align*}
 which is a contradiction.
\end{proof}

\lemRfGreedyTruncationFeasible*

\begin{proof}
Let us assume that the greedy algorithm terminated with a vector $x^* \in \ZZ_+^{|E|}$ with chain of dual variables $S_{\ell + 1} \prec \dots \prec S_1 = E$ and positive variables $x^*_{e_i}$ for bottleneck elements $e_1,\dots,e_\ell$.
For $1 \leq i \leq \ell$, the algorithm sets $x^*_{e_i}$ such that it satisfies
$$x^*_{e_i} \geq \left\lceil \frac{r(S_i)^+ - r(S_{i+1})^+}{a'_{S_i,e_i}} \right\rceil \geq \left\lceil \frac{r(S_i)^+ - r(S_{i+1})^+}{a_{S_i,e_i}} \right\rceil.$$
Note that the proof of Lemma~\ref{lem:sv:set-feasible} did not rely on the fact that the dual solution $y^*$ was constructed by the greedy algorithm.
In fact, the proof shows feasibility, whenever the chain can be constructed via the function $\phi$, the final element of the chain $S_{\ell+1}$ has non-positive rank and the primal vector $x^*$ was chosen as in the primal phase of the algorithm.
Hence, Lemma~\ref{lem:sv:set-feasible} is applicable and $x^*$ is feasible for (\ref{LP:P}).
By Lemma \ref{lem:sv:truncation-feasible}, the solution is also feasible for (\ref{LP:T}), which concludes the proof.
\end{proof}

\prpSvLogInapproximable*

\begin{proof}
Let us consider a subset cover instance with groundset $G$ and subsets $U_i \subseteq G$ with cost $c_i$, $1 \leq i \leq n$. Let $|G| = M$.
The goal is to find a collection of subsets of minimum cost which covers all elements.
We will show that the following formulation is valid for subset cover:
\begin{align}
\min_{x \in \ZZ_+^n} \left\{ c^T x \mid \sum_{i \in S} |U_i| x_i \geq r(S) \; \forall S \subseteq [n] \right\} \tag{S} \label{LP:set-cover}
\end{align}
with $r(S) = M - | \cup_{i \not\in S} U_i |$.
Moreover, we will see that it satisfies \ref{prop:first} - \ref{prop:last}.

First, let us evaluate the properties with respect to the Boolean lattice.
Since the coefficients of $A$ are constant, \ref{prop:A-monotone} is satisfied.
Moreover, \ref{prop:r-monotone} clearly also holds and, since the coefficients in every column take on a single non-zero value.
Finally, let us evaluate that $r$ is supermodular.
Hence, \ref{prop:A-r-coupling} will hold.
Therefore, let $S, T \subseteq G$ and let $e \in T \setminus (S \cap T)$.
Then $a_{T,e} = a_{S \cup T,e}$ and
\begin{align*}
 r(T) - r(S \cap T) \leq r(S \cup T) - r(S)
 &\Leftrightarrow | \cup_{i \not\in S \cap T} U_i | - | \cup_{i \not\in T} U_i | \leq | \cup_{i \not\in S} U_i |- | \cup_{i \not\in S \cup T} U_i | \\
 &\Leftrightarrow | \cup_{i \not\in S \cap T} U_i | + | \cup_{i \not\in S \cup T} U_i | \leq | \cup_{i \not\in S} U_i | + | \cup_{i \not\in T} U_i |
\end{align*}

Now, let us verify that (\ref{LP:set-cover}) is a valid formulation for subset cover.
Let $x$ be the incidence vector of a feasible subset cover solution and let $S$ be any constraint with positive rank, then
  $$ \sum_{i \in S} |U_i| x_i + | \cup_{i \not\in S} U_i | \geq | \cup_{i : x_i = 1} U_i| \geq M $$
  and for the constraint, we get
  \begin{align*}
    \sum_{i \in S} |U_i| x_i \geq M - | \cup_{i \not\in S} U_i |
    \Leftrightarrow \sum_{i \in S} |U_i| x_i + | \cup_{i \not\in S} U_i | \geq M.
  \end{align*}
Finally, every feasible solution $x$ in (\ref{LP:set-cover}) is also a feasible subset cover.
Suppose not, then $| \cup_{i: x_i = 1} U_i| < M$ and $r(E \setminus \{i : x_i = 1\}) > 0$.
But $x$ has zero lefthandside value for this constraint which contradicts to the feasibility of $x$ in (\ref{LP:set-cover}).

Since both objective functions coincide, an $\alpha$-approximation for the truncation (\ref{LP:T}) yields an $\alpha$-approximation for (\ref{LP:set-cover}).
The proof is concluded as there is no $(1-o(1))\log n$ approximation for subset cover unless $NP \subseteq DTIME(n^{O(\log\log n)})$ \cite{feige1998threshold}.
\end{proof}

\prpSvLinearGap*

\begin{proof}
For a given number $n \in \ZZ_+$, let us define $n$ elements $E = \{1,\dots,n\}$.
We will consider the Boolean lattice $\mathcal{L} = 2^E$ and an according system $(A,r)$ with $A \in \ZZ_+^{|\mathcal{L}| \times |E|}$
with $a_{S,e} = 2^n$, if $e \in S$ and zero otherwise with cost $c \equiv 1$.
Define $r(S) = 2^n(2^{-(n-|S|)} - 2^{-\frac{n}{2}})$ for all $S \subseteq E$.

Let $x^*$ be any feasible solution to (\ref{LP:T}).
Since for any subset of elements of cardinality at least $\frac{n}{2} + 1$, the rank is positive, an optimum solution has cost at least $\frac{n}{2}$.
Now, let us consider the fractional solution $x \equiv \frac{4}{n-1}$.
For any row $S$ with cardinality $k > \frac{n}{2}$, we have:
\begin{align*} \sum_{e \in S} a'_{S,e} x^*_e &= k \frac{4}{n-1} 2^n(2^{-(n-k)} - 2^{-\frac{n}{2}} - 2^{-(n-(k+1))} + 2^{-\frac{n}{2}}) \\
&\geq 2 \cdot 2^n 2^{-(n-(k+1))} \geq 2^n 2^{- (n-k)} \geq r(S)
\end{align*}
If $S$ is of smaller cardinality, $r(S) \leq 0$.
Hence, $x^*$ is feasible.
Moreover, the cost $c^T x^* = \frac{4n}{n-1} \leq 4$.
\end{proof}

\thmSvGreedyApproximation*

\begin{proof}
Let $S_{\ell+1} \prec S_\ell \prec \dots \prec S_1$ be the dual chain constructed by the algorithm, where $r(S_{\ell+1}) \leq 0 < r(S_\ell)$ and let $e_1,\dots,e_\ell$ be the bottleneck elements.

Consider the lefthandside coefficients of any index $t$.
For $b = 2$ and for element $e_j$ with index $t \leq j < \ell$, we can use $\delta$ to bound the coefficient in $A'$, as $r(\phi_{e_j}(S_t)) \geq r(S_{j+1}) > 0$.
This is true since $\mathcal{L}$ is modular, hence, $S_{j+1} = \phi_{e_j}(S_j) \preceq \phi_{e_j}(S_t)$.
\begin{align*}
a'_{S_t,e_j} x^*_{e_j}
\leq \delta a'_{S_j, e_j} x^*_{e_j}
= \delta a'_{S_j, e_j} \left\lceil \frac{r(S_j)^+ - r(S_{j+1})^+}{a'_{S_j, e_j}} \right\rceil
\leq 2 \delta \left( r(S_j) - r(S_{j+1}) \right).
\end{align*}
The first inequality is due to the definition of $\delta$.
The subsequent equality is due to the construction of $x^*$ in such a way that it covers the rank differences in each iteration.

If $b = 1$, we analogously get:
\begin{align*}
a'_{S_t,e_j} x^*_{e_j}
\leq \delta a'_{S_j, e_j} x^*_{e_j}
= \delta a'_{S_j, e_j} \left\lceil \frac{r(S_j)^+ - r(S_{j+1})^+}{a'_{S_j, e_j}} \right\rceil
= \delta \left( r(S_j) - r(S_{j+1}) \right).
\end{align*}

Note that this argument does not necessarily hold for the final element $e_\ell$ as the definition of $\delta$ does not cover this element if $r(S_{\ell+1}) < 0$ and $a'_{S_\ell,e_\ell} < a_{S_\ell,e_\ell}$.
If $x^*_{e_\ell} = 1$, then $a'_{S_t,e_\ell} x^*_{e_\ell} = a'_{S_t,e_\ell} \leq r(S_t)$.

But $x^*_{e_\ell} > 1$ implies $a'_{S_{\ell},e_\ell} = a_{S_{\ell},e_\ell} < r(S_{\ell})$.
Hence, we can use $\delta$ and get
\begin{align*}
a'_{S_t,e_\ell} x^*_{e_\ell}
\leq \delta a'_{S_\ell,e_\ell} x^*_{e_\ell}
= \delta a'_{S_\ell,e_\ell} \left\lceil \frac{r(S_{\ell})^+ - r(S_{\ell+1})^+}{a'_{S_{\ell},e_\ell}} \right\rceil
\leq 2 \delta r(S_\ell).
\end{align*}

And analogously, if $b = 1$, we have:
\begin{align*}
a'_{S_t,e_\ell} x^*_{e_\ell}
\leq \delta a'_{S_\ell,e_\ell} x^*_{e_\ell}
= \delta a'_{S_\ell,e_\ell} \left\lceil \frac{r(S_{\ell})^+ - r(S_{\ell+1})^+}{a'_{S_{\ell},e_\ell}} \right\rceil
= \delta r(S_\ell).
\end{align*}

A simple union bound yields
$$a'_{S_t,e_\ell} x^*_{e_\ell} \leq b \delta r(S_\ell) + r(S_t).$$

Hence, for the constraint corresponding to $S_t$, we get:
\begin{align*}
\sum_{e \in S_t} a'_{S_t,e} x^*_e &= \sum_{j=t}^{\ell - 1} a'_{S_t,e_j} x^*_{e_j} + a'_{S_t,e_\ell} x^*_{e_\ell}
\leq \sum_{j=t}^{\ell - 1} b \delta \left( r(S_j) - r(S_{j+1}) \right) + b \delta r(S_\ell) + r(S_t) \\
&= b \delta \left( r(S_t) - r(S_{\ell}) \right) + b \delta r(S_\ell) + r(S_t)
\leq (b \delta + 1)r(S_t).
\end{align*}
In the first equality in the second row, we use that the sum is telescopic.

Now, if $r$ is non-negative, we have $r(S_{\ell+1}) = 0$, hence, the inequalities we derived for $t \leq j < \ell$ also hold for index $\ell$, that is, for the final element.
Recall that this is due to the fact that, in this case, the definition of $\delta$ also applies for the element $e_\ell$.
Hence, similarly, we have
\begin{align*}
\sum_{e \in S_t} a'_{S_t,e} x^*_e &= \sum_{j=t}^{\ell} a'_{S_t,e_j} x^*_{e_j}
\leq \sum_{j=t}^{\ell} b \delta \left( r(S_j) - r(S_{j+1}) \right)
= b \delta \left( r(S_t) - r(S_{\ell+1}) \right)
= b \delta r(S_t).
\end{align*}

For the remaining part, let $a = 0$, if $r$ is non-negative, and $a = 1$, otherwise.
This implies the following approximate complementary slackness conditions:
If $y^*_S > 0$, then $r(S) \leq a'_S x^* \leq (b \delta + a) r(S)$.
Here, $a'_S$ denotes the row of $A'$ induced by index $S$.
Moreover, the vector $x^*$ has positive coefficients only for tight elements, that is, $x^*_e > 0$ implies $\sum_{S \in \mathcal{L}} a'_{S,e} y^*_S = c_e$.
Hence, the total cost of the solution can be rewritten as:
\begin{align*}
\sum_{e \in E} x^*_e c_e
= \sum_{e \in E} x^*_e \sum_{S \in \mathcal{L}} y^*_S a'_{S,e}
= \sum_{S \in \mathcal{L}} y^*_S \sum_{e \in E} a'_{S,e} x^*_e
\leq (b \delta + a) \sum_{S \in \mathcal{L}} y^*_S r(S)
\leq (b \delta + a) \text{OPT}.
\end{align*}
The latter step is due to $y^*$ being feasible for the dual of the relaxation of $(\ref{LP:T})$.
In other words, it is a lower bound on the optimal objective function value.
\end{proof}

\prpSvGreedyLowerBound*

\begin{proof}
We will show the following result:
For every $\delta \in \ZZ_+$ there is a greedy system $(A,r)$ such that
 \begin{enumerate}
  \item the truncation (\ref{LP:T}) does not have an integrality gap and
  \item the dual solution obtained by the greedy algorithm has an optimality gap of $\delta$.
 \end{enumerate}

 Let us consider the following type of subset cover instance.
 $G = \{1,\dots,M+n\}$ with subsets $U_i = \{1,\dots,M,M+i\}$ for $1 \leq i \leq n$.
 That is, subset $U_i$ contains the first $M$ elements and, additionally, element $M+i$.
 Moreover, let $c_i = M+1$ for all subsets.

 Let us consider the formulation from Proposition \ref{prp:sv:log-inapproximable}, that is,
 $a_{S,i} = |U_i| = M+1$ and $r(S) = M + n - |\cup_{i \not\in S} U_i|$.
 Note that $r(E) = M + n$ and $r(S) = |S|$, otherwise.
 Then the truncation has coefficients
 $$ a'_{S,i} = \begin{cases} M+1, & S = E \\ 1, & |S| < n \text{ and } i \in S \\ 0, & S : i \in S.  \end{cases}$$
 In this instance, we have $\delta = M+1$.
 The greedy algorithm will increase $y_E$ until some element becomes tight.
 In this case, $y_E = 1$ and all elements become tight simultaneously.
 Hence, this is the solution the algorithm computes with dual objective value $M+n$, which will be the lower bound for the primal solution.
 Since the only feasible primal solution is $x \equiv 1$ with cost $n(M+1)$, the gap between the two is of order $n$.

 Note that the optimum dual solution matches this bound.
 This can be reached, if we set $y_S = \frac{M+1}{n-1}$ for all $S = E \setminus \{i\}$, $1 \leq i \leq n$.
 For every element $i$, we will have $\sum_S y_S a'_{S,i} = \sum_{j \neq i} y_{E \setminus \{j\}} = M + 1$.
 Hence, it is feasible.
 The dual objective value is $\frac{M+1}{n-1} n (n - 1) = (M + 1) n$.

 Hence, the instance does not have an integrality gap, but the gap between a dual solution on a chain and an optimum dual solution can be of order $n$.
 if we set $n = M+1$, the gap will be $\delta$.
\end{proof}

\prpSvNecessityFeasible*

\begin{proof}
For each case, we will provide systems $(A,r)$.
\begin{itemize}
 \item Let us suppose that $(A,r)$ satisfies \ref{prop:first} - \ref{prop:last} except for \ref{prop:r-monotone}.
The following system shows that the greedy algorithm does not terminate with a feasible solution.
\begin{align*}
 \min \quad & 2 x_1 + x_2 \\
 \text{s.t.} \quad & x_1 + x_2 \geq 2 \\
 & x_1 \geq -2 \\
 & x_2 \geq 1 \\
 & 0 \geq 0
\end{align*}
The greedy algorithm will set $y^*_E = 1$ at which point element 2 becomes tight.
It will set $x^*_2 = 2$ and terminate, as $r(\{2\}) = -2 \leq 0$.
For $S = \{2\}$, the solution will be infeasible.

\item Next, let us assume that \ref{prop:A-monotone} is not satisfied.
\begin{align*}
 \min \quad & x_1 + 6 x_2 \\
 \text{s.t.} \quad & x_1 + x_2 \geq 10 \\
 & 5 x_1 \geq 5 \\
 & 5 x_2 \geq 5 \\
 & 0 \geq 0
\end{align*}
The algorithm will set $y^*_E = 1$.
At this point, it will set $x^*_1 = 5$.
Next, $y^*_{\{1\}} = 1$ and $x^*_2 = 1$.
The solution vector $(5,1)$ is not feasible for $S = E$.

\item Following, let us assume that $(A,r)$ does not satisfy \ref{prop:A-r-coupling}.
\begin{align*}
 \min \quad & 5 x_1 + 6 x_2 \\
 \text{s.t.} \quad & 5 x_1 + 5 x_2 \geq 10 \\
 & 2.5 x_1 \geq 5 \\
 & 2.5 x_2 \geq 5 \\
 & 0 \geq 0
\end{align*}
The algorithm will start with $y^*_E = 1$ and set $x^*_1 = 1$ in order to cover the rank difference $r(E) - r(\{2\})$.
Afterwards, it will set $y^*_{\{1\}} = 1$.
At this point, we set $x^*_2 = 5$.
Hence, the algorithm terminates with the solution vector $(1,2)$, which is infeasible for $S = \{1\}$.
\end{itemize}
\end{proof}

\subsection*{Generalization to multiple greedy systems (full version)}
\label{sec:product-version-full}

Although greedy systems already capture some well-known problems such as knapsack cover or optimization over contra-polymatroids, the modelling techniques are limited.
In this section, we discuss a generalization towards problems that are composed of multiple greedy systems on the same column set.

Again, let $E$ be the index set of columns.
Let $\mathcal{L}$ be a family of subsets of $E$ with some partial order $(\mathcal{L}, \preceq)$ associated.
Moreover, we will assume that $\mathcal{L}$ is actually a lattice with join $\vee$ and meet $\wedge$.
This family will be used similar to the previous section, it will also satisfy the properties elaborated in the previous section.
In particular, we will assume that the sets in $\mathcal{L}$ are pairwise different.
Moreover, let $\mathcal{U}$ be a family of subsets of $E$.
Multiple copies of the same subset are allowed in $\mathcal{U}$.
Let $\mathcal{B} = \mathcal{U} \times \mathcal{L}$,  $A \in \RR_+^{|\mathcal{B}| \times |E|}$ and $r: \mathcal{B} \rightarrow \RR$.
This time, the rows $a_{(U,S)}$ of matrix $A$ are indexed by tuples $(U,S) \in \mathcal{B}$.
The coefficients of matrix $A$ are denoted by $a_{(U,S),e}$ for row $(U,S) \in \mathcal{B}$ and column indexed by $e \in E$.
We will require $\{e \in E : a_{(U,S),e} > 0 \} = U \cap S$ for all $(U,S) \in \mathcal{B}$, that is, the support of each row of matrix $A$ indexed by a tuple $(U,S)$ equals the intersection of $U$ and $S$.
If $\mathcal{U} = \{E\}$, the situation from Section \ref{sec:simple-version} will be recovered.

We are interested in conditions of system $(A,r)$ such that problems of type
\begin{align*} \min_{x \in \ZZ_+^{|E|}} \left\{ c^T x : a_{(U,S)} x \geq r(U,S)\; \forall (U,S) \in \mathcal{B} \right\} \tag{P} \label{LP:P:product-full} \end{align*}
admit a bounded approximation guarantee via a simple primal-dual greedy algorithm.

In order to use the primal-dual greedy algorithm from Section \ref{sec:introduction}, we need an ordering on $\mathcal{B}$ which chooses the variables to be increased during the dual phase.
Note that we already assumed that $(\mathcal{L}, \preceq)$ is a partial order on $\mathcal{L}$.
We will assume that some additional partial orders are provided as follows.
For every $S \in \mathcal{L}$, let $(\mathcal{U}, \preceq_S)$ be a partial order of $\mathcal{U}$.
The partial orders are not required to be correlated in any way.
With these orderings, we compose the following lexicographic ordering for $\mathcal{B}$:
\begin{align*}
(U,S) \preceq_{\mathcal{B}} (U',S')\quad &\Leftrightarrow
\begin{aligned}
& \left( S \prec S' \right) \text{ or} \\
& \left( S = S' \text{ and } r(U,S) < r(U',S') \right) \text{ or} \\
& \left( S = S' \text{ and } r(U,S) = r(U',S') \text{ and } U \preceq_{S} U' \right)
\end{aligned}
\end{align*}

We use $(A,r)_{|U}$ to denote the subsystem of $(A,r)$ induced by fixing a set $U \in \mathcal{U}$.
Precisely, we say that $(A,r)_{|U} = (\bar{A}, \bar{r})$, where $\bar{A} \in \RR_+^{|\mathcal{L}| \times |E|}$ with coefficients $\bar{a}_{S,e} = a_{(U,S),e}$ and $\bar{r}: \mathcal{L} \rightarrow \RR, \bar{r}(S) = r(U,S)$.
Note that the ordering of $\mathcal{B}$ restricted to a subsystem is consistent in the way that a chain $(U_\ell,S_\ell) \preceq_{\mathcal{B}} \dots \preceq_{\mathcal{B}} (U_1,S_1)$ in $\mathcal{B}$ will induce a chain $S_\ell \preceq \dots \preceq S_1$ in every subsystem $(A,r)_{|U}$.
For any $e \in E$ we use the notation $\mathcal{B} \setminus \{e\} = \{ (U,S) \in \mathcal{B} : e \not\in S \}$ to denote the restriction of $\mathcal{B}$ to a subsystem of $\mathcal{L}$ that does not contain element $e$ in its support (with respect to the $\mathcal{L}$-component).
Note that the operation is assumed to have no effect on the $\mathcal{U}$ component, that is, we will observe tuples $(U,S) \in (\mathcal{B} \setminus \{e\})$ with $e \in U$.

The previous section elaborated that \ref{prop:first} - \ref{prop:last} are useful in order to prove approximation guarantees for subsystems $(A,r)_{|U}$.
In this section, we will assume that the restricted system $(A,r)_{|U}$ satisfies \ref{prop:first} - \ref{prop:last} for all $U \in \mathcal{U}$.
\begin{definition}
 A system $(A,r)$ on $\mathcal{B}$ (with respect to $(\mathcal{B}, \preceq_{\mathcal{B}})$) is called a \emph{greedy product system}, if for every $U \in \mathcal{U}$, the subsystem $(A,r)_{|U}$ satisfies \ref{prop:first} - \ref{prop:last}.
\end{definition}

Analogously to Section \ref{sec:simple-version}, we consider a truncated version of system $(A,r)$.
Otherwise, the integrality gap may be unbounded.
We apply the truncation from Definition \ref{def:sv:truncation} to each subsystem $(A,r)_{|U}, U \in \mathcal{U}$ individually and call the resulting system the \emph{truncation} of $(A,r)$.

\defPlTruncation*

In this section, we will apply a revised version of the primal-dual greedy algorithm to system
\begin{align*}
 \min_{x \in \ZZ^{|E|}_+}  \{ c^T x \mid A'x \geq r \} \tag{T} \label{LP:T:product-full}
\end{align*}
and prove a bounded approximation guarantee similar to the previous section.

\sectionheadline{The revised primal-dual greedy algorithm}
In order to get results similar to Section \ref{sec:simple-version}, we need to slightly modify the greedy algorithm from Section \ref{sec:introduction}.
This time, we will combine the dual and primal phase in a single algorithm which is given in Figure \ref{alg:product:pseudocode}.

\begin{figure}[tb]
\begin{enumerate}
\item Initially, let $y^* \equiv 0, x^* \equiv 0$.
\item While $\mathcal{B} \neq \emptyset$
\begin{enumerate}
\item Let $B \subseteq \mathcal{B}$ be the maximal tuples in $\mathcal{B}$ with respect to ordering $\preceq_{\mathcal{B}}$. \label{appendix:alg:product:select-rows}
\item STOP if $r(U,S) \leq 0$ for $(U,S) \in B$. \label{appendix:alg:product:stop}
\item Raise $y^*_{(U,S)}$ for all $(U,S) \in B$ uniformly until some element $e^* \in E \setminus E^*$ becomes tight.
\item Let $S' = \phi_{e^*}(S)$ and set $x^*_{e^*} = \max \left\{ \left\lceil \frac{r(W,S)^+ - r(W,S')^+}{a'_{(W,S),e^*}} \right\rceil : W \in \mathcal{U}, a'_{(W,S),e^*} > 0 \right\}.$ \label{appendix:alg:product:set-x}
\item Add $e^*$ to $E^*$ and iterate with $\mathcal{B} = \mathcal{B} \setminus \{e^*\}$. \label{appendix:alg:product:lattice-subtract}
\end{enumerate}
\item For bottleneck elements $e^*$ in reverse order: Decrease $x_{e^*}$ as long as the solution remains feasible for all $(U,S) \in \mathcal{B}$. \label{appendix:alg:product:cleanup}
\end{enumerate}
\caption{Pseudocode of the revised primal-dual greedy algorithm.}
\label{appendix:alg:product:pseudocode}
\end{figure}

In contrast to Section \ref{sec:introduction}, we now increase the dual variable for \emph{all maximal} tuples $(U,S) \in \mathcal{B}$ with respect to $\preceq_{\mathcal{B}}$ uniformly.
Since $(\mathcal{L}, \preceq)$ is a lattice, all variables that are increased simultaneously during a single iteration share the same set $S \in \mathcal{L}$.
Moreover, by definition of the lexicographic order, they share the same rank value $r^*$.
If each partial order $(\mathcal{U}, \preceq_S)$ for $S \in \mathcal{L}$ exposes a single element, $\preceq_{\mathcal{B}}$ will also expose a single element.

We also adapt the construction of $x^*_{e^*}$ for bottleneck elements.
This time, we consider all rank differences $r(W,S)^+ - r(W,S')^+$ of sets $W \in \mathcal{U}$ and set $x^*_{e^*}$ sufficiently large as to cover \emph{all} these differences.
The element $S' = \phi_{e^*}(S) \in \mathcal{L}$ was chosen in such a way that it is the element $S'$ that is considered in the subsequent iteration of the main loop.
This will ensure primal feasibility.

Finally, we add an additional cleanup phase.
This will be beneficial, as variables from later iterations may render variables from previous iterations redundant.
In this case, we may carefully decrease variables in a post-processing step.
In general, deciding if a variable can be decreased by one may be a non-trivial task.
Moreover, determining the maximum in Line~\ref{alg:product:set-x}) is not simple, either.
In Table \ref{table:intro:result-summary} we provided some examples in which this is possible.

\sectionheadline{Feasibility}

Similar to Section \ref{sec:simple-version}, we can show that the truncation of a greedy product system does not cut off any integer feasible points.
Moreover, we can show that the greedy algorithm always obtains feasible primal solutions.

\begin{restatable}{lemma}{lemPlTruncationFeasible}
 Let $(A,r)$ be a greedy product system with truncation $(A',r)$ and let $x \in \ZZ_+^{|E|}$.
 Then $x \in (\ref{LP:P:product-full})$ if and only if $x \in (\ref{LP:T:product-full})$.
 \label{lem:pl:truncation-feasible}
\end{restatable}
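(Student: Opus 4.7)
The plan is to reduce this statement to its single-system counterpart, Lemma~\ref{lem:sv:truncation-feasible}, by exploiting the product structure. The key observation is that both the original system $(A,r)$ and its truncation $(A',r)$ decompose cleanly along $\mathcal{U}$: for each fixed $U \in \mathcal{U}$, the rows indexed by $(U,S)$ with $S$ ranging over $\mathcal{L}$ form exactly the subsystem $(A,r)_{|U}$, which is a greedy system by definition of a greedy product system.

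First I would verify that the truncation commutes with the fixing of $U$. Write $(A,r)_{|U} = (\bar A, \bar r)$ with $\bar a_{S,e} = a_{(U,S),e}$ and $\bar r(S) = r(U,S)$. Since the function $\phi_e$ depends only on the shared lattice $\mathcal{L}$ and not on $U$, applying Definition~\ref{def:sv:truncation} to $(\bar A,\bar r)$ yields
\[
\bar a'_{S,e} \;=\; \min\bigl\{\bar a_{S,e},\, \bar r(S)^+ - \bar r(\phi_e(S))^+\bigr\} \;=\; \min\bigl\{a_{(U,S),e},\, r(U,S)^+ - r(U,\phi_e(S))^+\bigr\},
\]
which is exactly $a'_{(U,S),e}$ from Definition~\ref{def:pl:truncation}. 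Hence $(A',r)_{|U}$ is precisely the truncation of the greedy system $(A,r)_{|U}$.

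Second, both feasibility predicates factor through $\mathcal{U}$: an integral $x$ satisfies $(\ref{LP:P:product-full})$ if and only if, for every $U \in \mathcal{U}$, it is feasible for the subsystem $(A,r)_{|U}$; and similarly $x \in (\ref{LP:T:product-full})$ if and only if it is feasible for $(A',r)_{|U}$ for every $U$. Applying Lemma~\ref{lem:sv:truncation-feasible} to each of these greedy systems gives the equivalence of the two feasibility conditions per $U$, and taking the conjunction over all $U \in \mathcal{U}$ yields the lemma.

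I do not anticipate a real obstacle here: the entire content of the argument is the commutation of truncation with the restriction $(\cdot)_{|U}$, which is immediate once one observes that $\phi_e$ is defined purely in terms of the lattice $\mathcal{L}$ common to all subsystems. The only bookkeeping is to make sure that the support condition $\{e : a_{(U,S),e} > 0\} = U \cap S$ is preserved under truncation, but this is automatic since $a'_{(U,S),e} \leq a_{(U,S),e}$ and $a'_{(U,S),e} > 0$ only requires $a_{(U,S),e} > 0$ and a positive marginal rank increment, both of which are inherited from the greedy subsystem $(A,r)_{|U}$.
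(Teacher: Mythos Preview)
Your proposal is correct and follows essentially the same approach as the paper: decompose along $\mathcal{U}$, observe that $(A',r)_{|U}$ is exactly the truncation of the greedy system $(A,r)_{|U}$, apply Lemma~\ref{lem:sv:truncation-feasible} to each subsystem, and intersect over all $U$. Your explicit verification that truncation commutes with restriction to $U$ (via the fact that $\phi_e$ depends only on the common lattice $\mathcal{L}$) is a bit more detailed than the paper's version, but the argument is the same.
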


\begin{proof}
 We will apply Lemma \ref{lem:sv:truncation-feasible} to each subsystem $(A,r)_{|U}$ in order to prove the lemma.
 For $U \in \mathcal{U}$, let us consider the polyhedron described by system $(A,r)_{|U}$, that is,
 \begin{align*}
  P_U = \left\{ x \in \ZZ_+^{|E|} : a_{(U,S)} x \geq r(U,S) \; \forall S \in \mathcal{L} \right\}.
 \end{align*}
 For each $U \in \mathcal{U}$, the set of integer points in $P_U$ is equal to the set of integer points in
 \begin{align*}
  P'_U = \left\{ x \in \ZZ_+^{|E|} : a'_{(U,S)} x \geq r(U,S) \; \forall S \in \mathcal{L} \right\}
 \end{align*}
 by application of Lemma \ref{lem:sv:truncation-feasible}.
 Here, we explicitly use the fact that $(A,r)$ is a greedy product system, hence $(A,r)_{|U}$ is a product system and the lemma is applicable.
 Since a polyhedron described by a matrix can be seen as the intersection of the polyhedra described by any partition of the rows of the matrix, the following holds:
 \begin{align*}
  \{ x \in \ZZ_+^{|E|} : A x \geq r \}
  = \bigcap_{U \in \mathcal{U}} P_U
  = \bigcap_{U \in \mathcal{U}} P'_U
  = \{ x \in \ZZ_+^{|E|} : A' x \geq r \}.
 \end{align*}
 The proof is concluded.
\end{proof}

\begin{restatable}{lemma}{lemPlGreedyTruncationFeasible}
 The greedy algorithm applied to the truncation (\ref{LP:T}) of a greedy system $(A,r)$ obtains a feasible primal solution to (\ref{LP:T:product-full}) and (\ref{LP:P:product-full}).
 \label{lem:pl:greedy-truncation-feasible}
\end{restatable}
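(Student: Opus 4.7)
The plan is to reduce the lemma to the single-system feasibility result by exploiting the product structure. By Lemma~\ref{lem:pl:truncation-feasible}, the integer feasible sets of (\ref{LP:T:product-full}) and (\ref{LP:P:product-full}) coincide, so it suffices to verify that $x^*$ is feasible for (\ref{LP:T:product-full}). Moreover, the cleanup step in Line~\ref{appendix:alg:product:cleanup} preserves feasibility by definition, so it suffices to verify feasibility for the vector obtained immediately before cleanup. Fix an arbitrary constraint $(U,T) \in \mathcal{B}$ with $r(U,T) > 0$; we must show $a'_{(U,T)} x^* \geq r(U,T)$.

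Let $S_1 \succ S_2 \succ \cdots \succ S_{\ell+1}$ be the chain in $\mathcal{L}$ produced by the revised algorithm (via $S_{i+1} = \phi_{e_i}(S_i)$) with bottleneck elements $e_1, \ldots, e_\ell$. First I would observe that the STOP criterion forces $r(U, S_{\ell+1}) \leq 0$ for \emph{every} $U \in \mathcal{U}$: the set $B$ of maximal tuples in the final iteration shares the common component $S_{\ell+1}$ and has maximum rank value among all tuples with this $S$-component, so a non-positive rank for some tuple in $B$ together with \ref{prop:r-monotone} applied to each subsystem $(A,r)_{|U}$ yields the claim. Hence the termination hypothesis of Lemma~\ref{lem:sv:set-feasible} is met within every subsystem simultaneously.

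Now I would lift the proof of Lemma~\ref{lem:sv:set-feasible} to the subsystem $(A',r)_{|U}$, which is a greedy system by hypothesis (up to the truncation, which Lemma~\ref{lem:sv:greedy-truncation-feasible} shows is harmless). Construct the auxiliary chain $T'_1 = T$, $T'_{i+1} = \phi_{e_i}(T'_i)$; by modularity of $\mathcal{L}$, $T'_i \preceq S_i$ for every $i$. The crucial observation is that the revised algorithm sets
\[
  x^*_{e_i} \geq \left\lceil \frac{r(U,S_i)^+ - r(U,S_{i+1})^+}{a'_{(U,S_i),e_i}} \right\rceil
\]
whenever $a'_{(U,S_i),e_i} > 0$, because the max in Line~\ref{appendix:alg:product:set-x} explicitly ranges over all $W \in \mathcal{U}$ including $W = U$. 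For indices $i$ where $e_i \notin U$, the coefficient $a'_{(U,S_i),e_i} = 0$ and these indices contribute nothing to the left-hand side, so they can simply be dropped from the telescoping sum. After discarding such indices, the argument from Lemma~\ref{lem:sv:set-feasible} applies verbatim in the subsystem $(A',r)_{|U}$: Lemma~\ref{lem:sv:marginal-increase-supermodularity} yields $a'_{(U,T'_i),e_i} x^*_{e_i} \geq r(U, T'_i) - r(U, T'_{i+1})$ for the relevant indices, and the sum telescopes to $r(U, T) - r(U, T'_{\alpha})$ where $T'_\alpha$ has non-positive rank by the choice of $\alpha$ (using Lemma~\ref{lem:sv:monotone-around-zero} for the final bottleneck as in the original proof).

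The main obstacle is the bookkeeping around indices with $e_i \notin U$ (or $e_i \notin T'_i$), where the chain $T'_i$ stalls while the $S_i$ chain advances: I need to verify that omitting these indices is harmless because both sides of the telescoping inequality are unchanged (the left-hand coefficient vanishes and $T'_{i+1} = T'_i$). Everything else is a direct transcription of the single-system proof into the $U$-subsystem, using that the revised algorithm over-covers rather than under-covers each subsystem's rank differences.
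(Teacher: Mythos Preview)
Your approach is essentially the same as the paper's: ignore cleanup, fix $U\in\mathcal{U}$, use the max in Line~\ref{appendix:alg:product:set-x} to get the per-subsystem lower bound $x^*_{e_i}\ge\lceil (r(U,S_i)^+-r(U,S_{i+1})^+)/a'_{(U,S_i),e_i}\rceil$, observe $r(U,S_{\ell+1})\le 0$, and invoke the single-system feasibility argument of Lemma~\ref{lem:sv:set-feasible}/Lemma~\ref{lem:sv:greedy-truncation-feasible}. The only organizational difference is that the paper argues feasibility in (\ref{LP:P:product-full}) first (using the untruncated coefficients $a$, for which \ref{prop:A-r-coupling} and hence Lemma~\ref{lem:sv:marginal-increase-supermodularity} actually hold) and then transfers to (\ref{LP:T:product-full}) via Lemma~\ref{lem:pl:truncation-feasible}, whereas you phrase things in terms of $a'$ directly; since $(A',r)_{|U}$ need not satisfy \ref{prop:A-r-coupling}, your invocation of Lemma~\ref{lem:sv:marginal-increase-supermodularity} for $a'$ should, strictly speaking, pass through the untruncated $a$ first as in Lemma~\ref{lem:sv:greedy-truncation-feasible}---but you flag exactly this when you write ``up to the truncation, which Lemma~\ref{lem:sv:greedy-truncation-feasible} shows is harmless.''
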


\begin{proof}
 Let $x^* \in \ZZ_+^{|E|}$ be the solution returned by the algorithm without execution of Line~\ref{alg:product:cleanup}.
 Let $S_{\ell+1} \prec S_\ell \prec \dots \prec S_1$ be the sets $S_i$ considered by the algorithm and let $e_1,\dots,e_\ell$ be the bottleneck elements.
 Then $S_{i+1} = \phi_{e_i}(S_i)$ is satisfied.

 Clearly, the cleanup phase will not render $x^*$ infeasible, if it was feasible before.
 We will show that $x^*$ is feasible for each polyhedron induced by subsystem $(A,r)_{|U}, U \in \mathcal{U}$ individually.
 The proof will be concluded.
 Let $U \in \mathcal{U}$ and consider the polyhedron
 \begin{align*}
  P_U = \left\{ x \in \ZZ_+^{|E|} : a_{(U,S)} x \geq r(U,S) \; \forall S \in \mathcal{L} \right\}.
 \end{align*}
 Note that Line~\ref{appendix:alg:product:set-x}) ensures
 \begin{align*}
  x^*_{e_i} \geq \left\lceil \frac{r(U,S_i)^+ - r(U,S_{i+1})^+}{a'_{(U,S_i),e_i}} \right\rceil \geq \left\lceil \frac{r(U,S_i)^+ - r(U,S_{i+1})^+}{a_{(U,S_i),e_i}} \right\rceil.
 \end{align*}
 Moreover, $r(U,S_{\ell+1}) \leq 0$.
 The proof can be concluded analogously to the proof of Lemma \ref{lem:sv:greedy-truncation-feasible}:
 there, we used the same proof as in Lemma \ref{lem:sv:set-feasible} in order to obtain feasibility in $P_U$.
 Since the polyhedron considered in (\ref{LP:P:product-full}) is the intersection $\cap_{U \in \mathcal{U}} P_U$, feasibility in (\ref{LP:P:product-full}) is implied.
 Feasibility in (\ref{LP:T:product-full}) is concluded by application of Lemma \ref{lem:pl:truncation-feasible}.
\end{proof}

\sectionheadline{Approximation guarantee for the revised greedy algorithm}

In contrast to Section \ref{sec:simple-version}, the cleanup phase is necessary in order to obtain good approximation guarantees for greedy product systems.

\lemPlCleanupNecessary*

\begin{proof}
Let us consider the graph $G = (V,E)$ with vertices $V = \{v_0,\dots,v_n\}$ with $s = v_0$ and $t = v_n$ and edges $E = \left\{ \{s,v_i\} : 1 \leq i \leq n \right\}$.
The edge set $E$ will also be the set of columns in the greedy product system that we will consider.

The partially ordered set $\mathcal{U}$ will be the s-t-cut-lattice consisting of all outgoing edges of cuts in $G$.
That is, $\mathcal{U} = \{ \delta^+(V') : V' \subset V, s \in V', t \not\in V' \}$ with the natural ordering (for all $S \in \mathcal{L}$) for $\delta^+(V_1),\delta^+(V_2) \in \mathcal{U}$, $\delta^+(V_1) \preceq_S \delta^+(V_2)$ if and only if $V_2 \subseteq V_1$.
A visualization can be found in Figure~\ref{fig:pl:cleanup-necessary}.
The lattice $\mathcal{L} = 2^E$ will be the Boolean lattice.

We define the rank function $r(U,S) = 1 - |U \setminus S|$ and $a_{(U,S),e} = 1$, if $e \in U \cap S$ and $a_{(U,S),e} = 0$, otherwise and set $c_{\{s,v_i\}} = i$.
The truncation will have the coefficients $a'_{(U,S),e} = 1$, if $e \in U \cap S$ and $|U \setminus S| = 0$, or zero otherwise.
Let $U_i = \{v_0,\dots,v_i\}$ and $S_i = E \setminus \{ \{s,v_j\} : 1 \leq i \leq j \}$ with $S_0 = E$.

The revised greedy algorithm will set $y^*_{(U_i, S_i)} = 1$ for all $0 \leq i < n$.
Moreover, it will set $x^*_e = 1$ for all $e \in E$.
If we consider the constraint for $(U_0, S_0)$, we have $1 = r(U_0,S_0) \leq \sum_{e \in E} x^*_e = n$.
Hence, if we try to bound the solution cost as in Theorem \ref{thm:sv:greedy-approximation}, the approximation factor would be at least of order $|E|$.

However, the cleanup phase would set $x^*_e = 0$ for all $e \neq \{s,v_n\}$, and an analysis similar to the one in Theorem \ref{thm:sv:greedy-approximation} would yield optimality of this solution.
 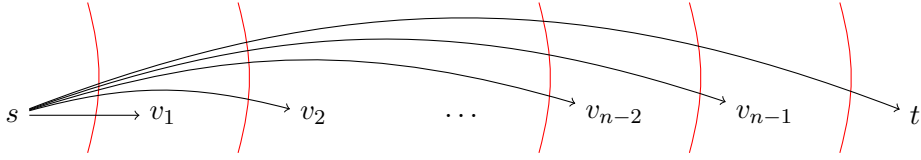
\begin{figure}[tb]
 \centering
 \begin{tikzpicture}
  \node[] (s) at (0,0) {$s$};
  \node[] (v1) at (2,0) {$v_1$};
  \node[] (v2) at (4,0) {$v_2$};
  \node[] (dots) at (6,0) {$\dots$};
  \node[] (v3) at (8,0) {$v_{n-2}$};
  \node[] (v4) at (10,0) {$v_{n-1}$};
  \node[] (t) at (12,0) {$t$};

  \draw[-,bend right=15,draw=red] (1,-0.5) to (1,1.5);
  \draw[-,bend right=15,draw=red] (3,-0.5) to (3,1.5);
  \draw[-,bend right=15,draw=red] (7,-0.5) to (7,1.5);
  \draw[-,bend right=15,draw=red] (9,-0.5) to (9,1.5);
  \draw[-,bend right=15,draw=red] (11,-0.5) to (11,1.5);

  \draw[->,bend left=0] (s) to (v1);
  \draw[->,bend left=15] (s) to (v2);
  \draw[->,bend left=17] (s) to (v3);
  \draw[->,bend left=19] (s) to (v4);
  \draw[->,bend left=21] (s) to (t);
 \end{tikzpicture}
 \caption{An example with bad optimality gap without cleanup. The sets $U_i$ of the dual greedy solution are marked in red.}
 \label{fig:pl:cleanup-necessary}
\end{figure}
\end{proof}

In order to characterize the influence on elements in terms of the cleanup phase, let us consider a solution $x^* \in \ZZ_+^{|E|}$ obtained by the revised greedy algorithm.
To get an intuition, let us assume for a second that the algorithm increased a single dual variable in each iteration, that is, Line~\ref{alg:product:select-rows}) returned a single maximum tuple in each iteration.
Let $(U_{\ell+1},S_{\ell+1}) \prec \dots \prec (U_1, S_1)$ be the constructed dual chain $e_i \in S_i \setminus S_{i+1}, 1 \leq i \leq \ell$ be the bottleneck elements.
As in Section~\ref{sec:simple-version}, $r(U_{\ell+1},S_{\ell+1}) \leq 0 < r(U_{\ell},S_{\ell})$.

During the cleanup phase, the value $x^*_{e_i}$ of element $e_i$ was not further reduced because either $x^*_{e_i} = 0$, or there is at least one tuple $(U,S) \in \mathcal{B}$ such that
$$\sum_{f \in U \cap S} a'_{(U,S),f} x^*_f - a'_{(U,S),e_i} < r(U,S) \leq \sum_{f \in U \cap S} a'_{(U,S),f} x_f.$$
We call this tuple $(U,S)$ a \emph{witness} of bottleneck element $e_i$.
Note that this tuple was not necessarily considered in Line~\ref{alg:product:select-rows}).

But let us suppose that some element $e_i$ has a witness $(U_t, S_t)$ on the dual chain.
Then $i \geq t$, otherwise $e_i \not\in S_t$.
The definition of witnesses implies
$$\sum_{f \in U_t \cap S_t} a'_{(U_t,S_t),f} x^*_f \leq 2 r(U_t,S_t).$$
In other words, if \emph{every} tuple $(U_t,S_t)$ of the dual chain was a witness for some element $e_i$, then $x^*$ would be a $2$-approximation for (\ref{LP:T}) by standard primal-dual approximation arguments (c.f.\ proof of Theorem~\ref{thm:sv:greedy-approximation}).
Of course, we can not expect this to happen in general.
But the following observation establishes a strong connection between witnesses and elements on the dual chain.
We will now cover the case that (possibly) multiple dual variables were increased simultaneously.

We define a \emph{multiplicity witness-cover} as follows.
Let $\mathcal{I} \subseteq \mathcal{B}$ be a family of tuples that were increased simultaneously in one iteration of the revised algorithm.
In this case, the rank value $r^*$ of all these tuples equals by definition of $(\mathcal{B}, \preceq_{\mathcal{B}})$.

We call $\mathcal{C} \subseteq \mathcal{B}$ a \emph{multiplicity witness-cover} of $\mathcal{I}$, if each tuple $(U,S) \in \mathcal{C}$ is a witness for some element $e \in E$, $r(U,S) \leq r^*$ and
every element $e \in E$ with $x^*_e > 0$ appears at least as often in $\mathcal{C}$, as it appears in $\mathcal{I}$.
That is, for all $e \in E$,
$$|\{ (U,S) \in \mathcal{I} : a'_{(U,S),e} x^*_e > 0 \}| \leq |\{ (U,S) \in \mathcal{C} : a'_{(U,S),e} x^*_e > 0 \}|.$$
If $\mathcal{C}$ is of small cardinality, we can show that $x^*$ is a good approximation.

We generalize our definition of $\delta$ from Section \ref{sec:simple-version} slightly to cover this case.
Given $(U,\emptyset), (W,S) \in \mathcal{B}, e \in E$, let
\begin{align*}
& \delta_{U,(W,S),e} = \begin{cases} \frac{a'_{(U,\emptyset),e}}{a'_{(W,S),e}} & a'_{(W,S),e} > 0 \text{ and } (r(W,\phi_e(S)) \geq 0 \text{ or } a'_{(W,S),e} = a_{(W,S),e}), \\ 1 & \text{otherwise}, \end{cases}
\end{align*}
and set $\delta = \max_{U,(W,S),e} \left\{ \delta_{U,(W,S),e} \right\}$.
The following Theorem \ref{thm:ps:greedy-approximation} yields bounds on the solution cost, depending on $A$ being binary or a general matrix.
Finally, Proposition \ref{prp:ps:greedy-bad-dual-solution} shows that the dependency on $k$ is inherent in the type of dual solution constructed.

\thmPsMultiDualGreedyApproximation*

\begin{proof}
 In order to prove the result, we will first reformulate the total cost of the solution in terms of the dual variables.
 Therefore, let $\epsilon_i$ be the value by which the dual variables in iteration $i$ were increased and let $r_i$ be the rank value of all sets $(U,S) \in \mathcal{I}_i$.
 We will show the following:
 \begin{align*}
   \sum_{e \in E} c_e x^*_e &= \sum_{e \in E} x^*_e \sum_{(U,S) \in \mathcal{L}} a'_{(U,S),e} y^*_{(U,S)}
   = \sum_{(U,S) \in \mathcal{L}} y_{(U,S)} \sum_{e \in U \cap S} a'_{(U,S),e} x^*_e \\
   &= \sum_{i=1}^{\ell} \epsilon_i \sum_{(U,S) \in \mathcal{I}_i} \sum_{e \in U \cap S} a'_{(U,S),e} x^*_e
   \leq (\delta + 1) \sum_{i=1}^{\ell} \epsilon_i |\mathcal{C}_i| r_i \\
   &\leq k (\delta + 1) \sum_{i=1}^{\ell} \epsilon_i r_i |\mathcal{I}_i|
   = k (\delta + 1) \sum_{(U,S) \in \mathcal{L}} y^*_{(U,S)} r(U,S)
   \leq k (\delta + 1) \text{OPT}.
  \end{align*}
  Except for the first inequality, the remaining steps are simple calculus.
  Hence, the proof is concluded if we can show that for every iteration $i$, the following holds:
  $$\sum_{(U,S) \in \mathcal{I}_i} \sum_{e \in U \cap S} a'_{(U,S),e} x^*_e \leq (\delta + 1) |\mathcal{C}_i| r_i.$$
  For every $(U,S) \in \mathcal{C}_i$, let $e_{(U,S)} \in U \cap S$ be a witness for $(U,S)$, that is, with $a'_{(U,S),e_{(U,S)}} x^*_{e_{(U,S)}} > 0$ and $$\sum_{f \in U \cap S} a'_{(U,S),f} x_f - a'_{(U,S),e_{(U,S)}} < r(U,S).$$
  By definition of the truncation, this implies that $a'_{(U,S),e_{(U,S)}} \leq r_i$. Then
  \begin{align*}
   \sum_{(U,S) \in \mathcal{I}_i} \sum_{e \in U \cap S} a'_{(U,S),e} x^*_e
   &\leq \sum_{(W,T) \in \mathcal{C}_i} \sum_{e \in W \cap T} a'_{(U,S),e} x^*_e \\
   &= \sum_{(W,T) \in \mathcal{C}_i} \left( \sum_{e \in W \cap T} a'_{(U,S),e} x^*_e - a'_{(U,S),e_{(W,T)}} + a'_{(U,S),e_{(W,T)}} \right) \\
   &\leq \sum_{(W,T) \in \mathcal{C}_i} \left( \delta \left( \sum_{f \in W \cap T} a'_{(W,T),f} x^*_f - a'_{(W,T),e_{(W,T)}} \right) + r_i \right) \\
   &< \sum_{(W,T) \in \mathcal{C}_i} \left( \delta r(W,T) + r_i \right)
   \leq (\delta + 1) |\mathcal{C}_i| r_i.
  \end{align*}
  The final step is due to the monotonicity of $r$ and the choice of the elements to be increased by the algorithm.

  If $A'$ is binary, then the previous step can be modified in order to reflect
  \begin{align*}
   & \sum_{(U,S) \in \mathcal{I}_i} \sum_{e \in U \cap S} a'_{(U,S),e} x^*_e
   \leq \sum_{(W,T) \in \mathcal{C}_i} \sum_{e \in W \cap T} x^*_e
   \leq \sum_{(W,T) \in \mathcal{C}_i} r(W,T)
   \leq |\mathcal{C}_i| r_i. \qedhere
  \end{align*}
\end{proof}

\prpPsGreedyLowerBound*

\begin{proof}
 For $k \in \ZZ_+$, let $E = \{1,\dots,k\}$ and let $\mathcal{U} = \left\{ E \right\} \cup \left\{ \{i\} : 1 \leq i \leq k  \right\}$, $\mathcal{L} = 2^E$ is the Boolean lattice.
 We assume that for all $S \in \mathcal{L}$, the partial order $(\mathcal{U}, \preceq_S)$ is very simple: $U \preceq U'$ if and only if $U' = \{E\}$.
 That is, the partial order exposes the set $\{E\} \in \mathcal{U}$ instead of a singleton.
 Moreover, let us define the rank function $r(U,S) = 1 - | U \setminus S |$.
 That is, $r$ is positive only if $U \setminus S = \emptyset$.
 Let $c_e = 1$ for all $e \in E$.

 The greedy algorithm will set $y^*_{(E,E)} = 1$.
 At this point, all elements become tight.
 Moreover, the greedy algorithm will set $x^*_e = 1$ for all $e \in E$ in order to become feasible.
 The only witness cover for $(E,E)$ is the family $\mathcal{C} = \left\{ (\{i\},\emptyset) : 1 \leq i \leq k \right\}$.

 An optimum dual solution would set $y^*_{(\{i\},E)} = 1$ for all $1 \leq i \leq k$ with objective value $k$, matching the primal solution cost.
 This proves that the instance does not have any gap between an optimum primal and an optimum dual solution.
\end{proof}

\subsection*{Applications}
Note, in many proofs in this section, it is convenient to define the matrix $A$ in terms of the complement.
That is, we consider matrices with respect to a lattice $\mathcal{L}$ with support $supp(a_S) = E \setminus S$ of row $S \in \mathcal{L}$.
Note that the same results hold in this case.
Sometimes it is more convenient to formulate problems of this type.

\begin{restatable}{lemma}{lemAppDelta}
 For a greedy system $(A,r)$, let us define
 \begin{align*}
 \beta = \max_{S,e} \left\{ \frac{a_{E,e}}{a_{S,e}} : a_{S,e} > 0 \right\} \quad \text{and} \quad \gamma = \max_{S,e} \left\{ \frac{a_{S,e}}{r(S) - r(\phi_e(S))} \right\}.
 \end{align*}

 Then $\delta \leq \beta \gamma$.
 \label{lem:app:delta}
\end{restatable}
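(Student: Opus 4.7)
The plan is to bound each term $\delta_{S,e}$ separately by $\beta\gamma$ and then take the maximum. Fix $S \in \mathcal{L}$ and $e \in E$ for which the first branch of the definition of $\delta_{S,e}$ applies, i.e., $a'_{S,e} > 0$ and either $r(\phi_e(S)) \geq 0$ or $a'_{S,e} = a_{S,e}$. I would start with the trivial upper bound $a'_{E,e} \leq a_{E,e}$ (which follows directly from $a'_{E,e} = \min\{a_{E,e}, r(E)^+ - r(\phi_e(E))^+\}$) and then split on which branch of the min defining
\[
a'_{S,e} = \min\{a_{S,e},\; r(S)^+ - r(\phi_e(S))^+\}
\]
is active.

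If the first branch is active, i.e., $a'_{S,e} = a_{S,e}$, then directly $\delta_{S,e} \leq \frac{a_{E,e}}{a_{S,e}} \leq \beta$. If instead the second branch is active, i.e., $a'_{S,e} = r(S)^+ - r(\phi_e(S))^+ < a_{S,e}$, then the defining condition of the nontrivial case forces $r(\phi_e(S)) \geq 0$ (since the alternative $a'_{S,e} = a_{S,e}$ fails). Monotonicity \ref{prop:r-monotone} together with $\phi_e(S) \preceq S$ then gives $r(S) \geq r(\phi_e(S)) \geq 0$, so both positive parts can be dropped and $a'_{S,e} = r(S) - r(\phi_e(S))$. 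A clean factorization then yields
\[
\delta_{S,e} \;\leq\; \frac{a_{E,e}}{r(S) - r(\phi_e(S))} \;=\; \frac{a_{E,e}}{a_{S,e}} \cdot \frac{a_{S,e}}{r(S) - r(\phi_e(S))} \;\leq\; \beta\gamma.
\]

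Combining, each nontrivial $\delta_{S,e}$ is at most $\max\{\beta,\beta\gamma\}$, and the cases where the second branch is the definition of $\delta_{S,e}$ give the value $1$, which is already dominated by $\beta\gamma \geq 1$. Taking the maximum over $(S,e)$ then gives $\delta \leq \beta\gamma$. The main obstacle, such as it is, is purely bookkeeping around the positive-part truncation: one has to verify in the second case that the $^+$ superscripts can be removed, which is where monotonicity of $r$ on the chain $\phi_e(S) \preceq S$ is used. No use of the weighted-supermodularity axiom \ref{prop:A-r-coupling} nor of modularity seems necessary for this estimate.
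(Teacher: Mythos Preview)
Your case split on which term realizes $a'_{S,e}=\min\{a_{S,e},\,r(S)^+-r(\phi_e(S))^+\}$ is exactly the paper's approach, and your second case matches the paper's argument: bound $a'_{E,e}\le a_{E,e}$ in the numerator and factor
\[
\frac{a_{E,e}}{r(S)-r(\phi_e(S))}
=\frac{a_{E,e}}{a_{S,e}}\cdot\frac{a_{S,e}}{r(S)-r(\phi_e(S))}
\le\beta\gamma .
\]

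One point to tighten. In your first case you only reach $\delta_{S,e}\le\beta$, and the passage from $\max\{\beta,\beta\gamma\}$ to $\beta\gamma$ (as well as the claim $\beta\gamma\ge 1$ used to dominate the trivial branch $\delta_{S,e}=1$) tacitly assumes $\gamma\ge 1$. That is not justified and is not automatic: if $a_{S,e}<r(S)-r(\phi_e(S))$ for every pair $(S,e)$ then $\gamma<1$, the truncation is vacuous, $a'=a$, and $\delta=\beta>\beta\gamma$. The paper's own treatment of this case has the same difficulty---its displayed step $a_{E,e}\ge\frac{1}{\gamma}\bigl(r(E)-r(\phi_e(E))\bigr)$ runs opposite to what the definition of $\gamma$ actually yields---so this is a shared gap rather than one specific to your write-up. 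In every application of the lemma in the paper one has $\gamma\ge 1$, so the issue is cosmetic there; if you want the estimate to hold unconditionally, state it as $\delta\le\beta\max\{1,\gamma\}$.
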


\begin{proof}
 For $S \in \mathcal{L}, e \in S, S' = \phi_e(S)$ with $r(S') \geq 0$ and $a'_{S,e} > 0$ and $T = \phi_e(E)$, we have:
 $$a_{S,e} \geq \frac{1}{\beta} a_{E,e} \geq \frac{1}{\beta \delta} \left( r(E) - r(T) \right).$$
 Moreover, we have:
 $$r(S) - r(S') \geq \frac{1}{\delta} a_{S,e} \geq \frac{1}{\beta \delta} a_{E,e}.$$
 Hence, if $a'_{S,e} = a_{S,e}$, we get
 $$ \delta_{S,e} \leq \beta \delta \frac{a'_{E,e}}{r(E) - r(T)} \leq \beta \delta \frac{r(E) - r(T)}{r(E) - r(T)} = \beta \delta$$
 and if $a'_{S,e} = r(S) - r(S')$, we get
 $$ \delta_{S,e} \leq \beta \delta \frac{a_{E,e}}{a_{E,e}} = \beta \delta,$$
 which concludes the proof.
\end{proof}

\begin{restatable}{lemma}{lemAppB}
 If $a_{S,e} \geq r(S) - r(S')$ holds for all $S \in \mathcal{L}, e \in S, S' = \phi_e(S)$ with $r(S') \geq 0$, then $b = 1$ in Theorem \ref{thm:sv:greedy-approximation}.
 \label{lem:app:b}
\end{restatable}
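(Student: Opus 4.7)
The plan is to show that, under the hypothesis, the truncation satisfies
$a'_{S,e}=r(S)^+-r(\phi_e(S))^+$ whenever $a'_{S,e}>0$, which forces the ratio
$\left(r(S)^+-r(\phi_e(S))^+\right)/a'_{S,e}$ appearing in the definition of $b$ in
Theorem~\ref{thm:sv:greedy-approximation} to be identically $1$, hence a non-negative integer;
this places us in the $b=1$ regime.

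Since $a'_{S,e}=\min\{a_{S,e},\,r(S)^+-r(\phi_e(S))^+\}$, the desired equality is equivalent to the single inequality
$a_{S,e}\ge r(S)^+-r(\phi_e(S))^+$, and I would verify this by a case split on the sign of $r(\phi_e(S))$. If $r(\phi_e(S))\ge 0$, monotonicity of $r$ on $(\mathcal{L},\preceq)$ via \ref{prop:r-monotone} gives
$r(S)\ge r(\phi_e(S))\ge 0$, so the right-hand side collapses to $r(S)-r(\phi_e(S))$ and the hypothesis of the lemma applies verbatim. If $r(\phi_e(S))<0$ and $r(S)\le 0$, both positive parts vanish, so $r(S)^+-r(\phi_e(S))^+=0$ and, a fortiori, $a'_{S,e}=0$, so the condition for $b=1$ is vacuous for this pair $(S,e)$.

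The one delicate case is $r(\phi_e(S))<0<r(S)$, where the right-hand side equals $r(S)$ and the hypothesis does not apply directly because $r(\phi_e(S))<0$. Here I would pass to a row $T\succeq S$ large enough that $r(\phi_e(T))\ge 0$, which is available as long as the chain from $S$ to the maximum element of $\mathcal{L}$ eventually reaches a row at which $\phi_e$ lands in the non-negative regime; such a row exists in all applications that invoke this lemma. For such a $T$, the hypothesis yields $a_{T,e}\ge r(T)-r(\phi_e(T))$, and Lemma~\ref{lem:sv:marginal-increase-supermodularity} applied to the pair $S\preceq T$ with $e\in S$ transfers the bound back down, giving
$$ a_{S,e} \;\ge\; a_{T,e}\cdot\frac{r(S)-r(\phi_e(S))}{r(T)-r(\phi_e(T))} \;\ge\; r(S)-r(\phi_e(S)) \;>\; r(S), $$
where the last strict inequality uses $r(\phi_e(S))<0$. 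This yields $a_{S,e}\ge r(S)^+-r(\phi_e(S))^+$ as needed.

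The bulk of the argument is the first two cases, which are essentially bookkeeping with the positive-part function, and the hard part is the remaining case $r(\phi_e(S))<0<r(S)$: the hypothesis is silent about this regime, and the proof has to borrow the inequality from a larger row $T$ via marginal supermodularity. Once that transfer is done, the definition of the truncation forces $a'_{S,e}=r(S)^+-r(\phi_e(S))^+$ in every relevant case, the ratio in Theorem~\ref{thm:sv:greedy-approximation} is $1$, and the conclusion $b=1$ follows.
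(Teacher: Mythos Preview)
Your Cases 1 and 2 are correct and match the paper's one-line argument, which simply observes that the hypothesis forces $a'_{S,e}=r(S)^+ - r(S')^+$ ``for all such sets'' (i.e., those with $r(S')\ge 0$), making the ratio equal to~$1$. The paper does not treat the regime $r(\phi_e(S))<0$ separately at all.

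Your Case 3, however, has a genuine gap: the row $T\succeq S$ with $r(\phi_e(T))\ge 0$ that you need for the transfer via Lemma~\ref{lem:sv:marginal-increase-supermodularity} is not guaranteed to exist. Since $\phi_e(T)\preceq \phi_e(E)$ for every $T$, if $r(\phi_e(E))<0$ then monotonicity of $r$ forces $r(\phi_e(T))<0$ for all $T$, and the transfer step is unavailable. In fact the lemma, read as a fully general statement, is false: take $E=\{e\}$, $\mathcal{L}=\{\emptyset,\{e\}\}$, $r(\emptyset)=-1$, $r(\{e\})=3/2$, $a_{\{e\},e}=1$. This is a greedy system, the hypothesis is vacuous (there is no $S'$ with $r(S')\ge 0$), yet $a'_{\{e\},e}=1$ and the ratio $3/2\notin\ZZ_+$, so $b=2$. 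Your phrase ``such a row exists in all applications that invoke this lemma'' is tacitly conceding this. The paper only invokes the lemma for subset cover, where the rank is non-negative and Case~3 never arises; the clean fix is to restrict to non-negative rank (so Cases 1--2 already cover everything) rather than attempt a general argument for Case~3.
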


\begin{proof}
 $a_{S,e} \geq r(S) - r(S')$ implies that $a'_{S,e} = r(S)^+ - r(S')^+$ holds for all such sets, which implies that $b = 1$.
\end{proof}

\begin{restatable}{lemma}{lemAppOnALine}
 Let $(A,r)$ be a greedy product system with groundset $E$ and incomparable elements $\mathcal{T} = \{T_1,\dots,T_\ell\}$.
 Suppose there are mappings $a,b:E \rightarrow \{1,\dots,\ell\}$ with $a(e) \leq t \leq b(e)$ if and only if $e \in T_t$.
 Then for every element $(T,S) \in \mathcal{L}$ there is a witness cover of size at most $2$.
 \label{lem:app:on-a-line}
\end{restatable}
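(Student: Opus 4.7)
The plan is to construct, for any tuple $(T, S) = (T_t, S) \in \mathcal{B}$, an explicit witness cover of the singleton family $\{(T,S)\}$ of cardinality at most two, by exploiting the interval structure of the $T_i$'s on the line.

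First, I would identify the set $\mathcal{E} = \{e \in E : a'_{(T,S),e}\, x^*_e > 0\}$ of elements that any witness cover of $\{(T,S)\}$ must cover. Each such $e$ lies in $T \cap S = T_t \cap S$, so by hypothesis $a(e) \leq t \leq b(e)$. By the cleanup step (Line~\ref{alg:product:cleanup}) of the revised algorithm, for each $e \in \mathcal{E}$ there exists a witness $(U_e, S_e) \in \mathcal{B}$ with $e \in U_e \cap S_e$, $r(U_e, S_e) \leq r(T,S)$, and whose constraint blocks the further reduction of $x^*_e$. Since the incomparable elements $T_1,\dots,T_\ell$ are the only sets in $\mathcal{U}$ that can appear in such a witness, I write $U_e = T_{t_e}$ with index $t_e \in [a(e), b(e)]$.

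Second, among the finitely many indices $\{t_e : e \in \mathcal{E}\}$, I would pick two extremal ones: set $t_L = \max\{t_e : t_e \leq t\}$ and $t_R = \min\{t_e : t_e \geq t\}$ (omitting either if its defining set is empty), let $(U_L, S_L)$ and $(U_R, S_R)$ be the corresponding witnesses, and put $\mathcal{C} = \{(U_L, S_L),(U_R, S_R)\}$, so $|\mathcal{C}| \leq 2$. The interval geometry then yields the $T$-coverage: for every $e \in \mathcal{E}$ with $t_e \leq t$ one has $a(e) \leq t_e \leq t_L \leq t \leq b(e)$, hence $e \in T_{t_L} = U_L$, and symmetrically for $t_e \geq t$.

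Third, I would establish the $S$-coverage, showing $e \in S_L$ (resp.\ $e \in S_R$) for every $e$ covered by the left (resp.\ right) witness in the $T$-component. This is where the chronology of the algorithm enters: the cleanup phase processes bottleneck elements in reverse order of their selection, and the witness $(U_{e_L}, S_{e_L})$ for a particular blocked element $e_L$ must include $e_L$ together with all elements processed later—using monotonicity~\ref{prop:A-monotone} of the truncated system and the fact that $S_{e_L}$ is a lattice element containing $e_L$. An element $e$ covered by the $T$-component of $(U_L, S_L)$ corresponds to a bottleneck element processed no earlier than $e_L$, and hence lies in $S_L$.

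The main obstacle is this last step: the interval argument for $T$-coverage is almost immediate once the extremal witnesses are chosen, but tying the $S$-components of the two extremal witnesses back to the processing order of bottleneck elements and the lattice structure of $\mathcal{L}$ requires a careful bookkeeping of when each witness becomes tight during cleanup. Everything else should reduce to standard manipulations with the properties \ref{prop:first}--\ref{prop:last} of the subsystems $(A,r)_{|T_i}$.
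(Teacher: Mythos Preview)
Your overall strategy—pick two extremal witnesses on the line and show they cover everything in between—is exactly the paper's idea, and your $T$-coverage argument via the interval inequalities is correct. The gap is in your third step, the $S$-coverage. You commit to using the $S$-components $S_L,S_R$ that come attached to the extremal witnesses $(U_L,S_L),(U_R,S_R)$, and then try to argue that every relevant $e$ lies in $S_L$ (resp.\ $S_R$) by appealing to the reverse processing order in the cleanup phase. That argument does not go through: a witness for $e_L$ is simply any tuple whose constraint becomes violated if $x^*_{e_L}$ is decreased; nothing forces its $S$-component to contain the other bottleneck elements processed later, and monotonicity \ref{prop:A-monotone} does not give you this either. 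So as written, you cannot conclude that $a'_{(U_L,S_L),e}>0$ for the elements $e$ you need.

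The paper avoids this obstacle by a move you are missing: it keeps the \emph{original} $S$ and takes the cover to be $\{(T_{t_1},S),(T_{t_2},S)\}$. With the same $S$-component, every $e$ in the support of $(T,S)$ already satisfies the $S$-condition automatically, so the only thing left to check is membership in $T_{t_1}$ or $T_{t_2}$, which is exactly the interval argument. The paper then closes by contradiction: an uncovered $e$ would satisfy $t_1<a(e)\le t\le b(e)<t_2$, but the cleanup phase guarantees $e$ has \emph{some} witness whose $T$-index lies in $[a(e),b(e)]$, contradicting the extremal choice of $t_1,t_2$. In short, your plan is almost right; the fix is to reuse $S$ rather than $S_L,S_R$, which eliminates the step you yourself flagged as the main obstacle.
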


\begin{proof}
 If $(T,S)$ was a witness, it would also be a witness cover for itself.
 Hence, let us assume it was not a witness cover.
 Let $T = T_t$ and let $a,b$ be mappings such that $e \in T$ if and only if $a(e) \leq t \leq b(e)$.
 With respect to $a$ and $b$, let $t_1 < t < t_2$ be witnesses such that $t_1$ is chosen maximal, $t_2$ is chosen minimal with this property.
 The case that only one of the two exists will be handled later.

 Now, suppose $\left\{ (T_{t_1},S), (T_{t_2},S) \right\}$ was not a witness cover for $(T,S)$.
 Then there is some element $e \in T \setminus S \setminus T_{t_1} \setminus T_{t_2}$ with $x_e > 0$.
 The mapping functions $a$ and $b$ imply that $t_1 < a(e) \leq t \leq b(e) < t_2$.

 But then, either there exists a witness $a(e) \leq t' \leq b(e)$, which contradicts to the choice of $t_1$ or $t_2$.
 Or, if there is no such witness, it contradicts to the cleanup phase.
 In this case, $x_e$ could be reduced by at least one.

 Finally, let us assume that there was just one witness, say $t_1 < t$, but no witness $t < t_2$.
 In this case, the same arguments apply.
 If there is some element with $t_1 < a(e) \leq t \leq b(e)$, either there is another witness $t_2 > t$ or a larger witness $t_1 < t' \leq t$.
 By symmetry of this argument, the proof is concluded.
\end{proof}

\begin{restatable}{corollary}{corAppOnMultipleLine}
 Suppose there is a parameter $k$ and mappings $a^i,b^i:E \rightarrow S_i, 1 \leq i \leq k$, where $S_i = \{s^i_1,\dots,s^i_{|S_i|}\} \subseteq \{1,\dots,\ell\}$.
 Let $(A,r)$ be a greedy product system with groundset $E$ and incomparable elements $\mathcal{T} = \{T_1,\dots,T_\ell\}$.
 Moreover, let $\pi: E \rightarrow \{1,\dots,k\}$ be an assignment of elements to mappings.

 Suppose that for every $e \in E$ and every $1 \leq j \leq |S_{\pi(e)}|$, $e \in T_{s_j}$ if and only if $a^{\pi(e)}(e) \leq j \leq b^{\pi(e)}(e)$.
 Then for every element $(T,S) \in \mathcal{L}$ there is a witness cover of size at most $2k$.
 \label{cor:app:on-multiple-lines}
\end{restatable}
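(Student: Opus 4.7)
The plan is to reduce to $k$ independent applications of Lemma~\ref{lem:app:on-a-line} via the partition induced by $\pi$. Fix a tuple $(T,S) \in \mathcal{L}$ with $T = T_t$ for which we seek a witness cover. For each $i \in \{1,\ldots,k\}$, define $E^{(i)} = \{e \in T \setminus S : \pi(e) = i,\ x^*_e > 0\}$. Because $\pi$ is single-valued, these sets partition the elements that must be covered, and it suffices to produce for each $i$ at most two witnesses in $\mathcal{B}$ that cover every element of $E^{(i)}$; taking the union then yields a cover of size at most $2k$.

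For a fixed $i$, the elements of $E^{(i)}$ behave precisely as in Lemma~\ref{lem:app:on-a-line} with respect to the line $S_i$: the hypothesis asserts that $e \in T_{s^i_j} \Leftrightarrow a^i(e) \leq j \leq b^i(e)$ for every $j \in \{1,\ldots,|S_i|\}$, which is the line condition of the previous lemma transported to the sub-index-set $S_i$. I would interpret the hypothesis as also implying that an element $e$ with $\pi(e) = i$ cannot lie in $T_{t'}$ for $t' \notin S_i$; under this reading, whenever $t \notin S_i$ the set $E^{(i)}$ is empty and no witness from mapping $i$ is required. Otherwise let $j_0$ be the index with $s^i_{j_0} = t$, and repeat the argument of Lemma~\ref{lem:app:on-a-line}: choose the maximal $j_1 < j_0$ and the minimal $j_2 > j_0$ (whichever exist) such that $(T_{s^i_{j_1}}, S)$ and $(T_{s^i_{j_2}}, S)$ are witnesses. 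If an element $e \in E^{(i)}$ were uncovered by this pair, then $j_1 < a^i(e) \leq j_0 \leq b^i(e) < j_2$, contradicting either the extremality of $j_1, j_2$ (any intermediate $j \in [a^i(e), b^i(e)]$ yields a witness closer to $j_0$) or the cleanup phase (since $x^*_e$ could then be decreased by one).

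Summing over the $k$ mappings produces a witness cover of cardinality at most $2k$, establishing the corollary. The main subtlety is the boundary handling when $t \notin S_i$: once the interpretation that $\pi$ confines elements of mapping $i$ to their own line $S_i$ is fixed, the proof decouples mapping-by-mapping and no interaction between different mappings has to be controlled. Apart from this bookkeeping, the argument is essentially a clean $k$-fold replication of Lemma~\ref{lem:app:on-a-line}, so I do not anticipate further obstacles.
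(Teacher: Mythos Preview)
Your proposal is correct and follows essentially the same route as the paper: partition the elements to be covered by the map $\pi$, apply the single-line argument of Lemma~\ref{lem:app:on-a-line} to each class to obtain at most two witnesses, and take the union. The paper's proof simply invokes Lemma~\ref{lem:app:on-a-line} as a black box on each class $Q_i$, whereas you re-run its argument explicitly and are more careful about the boundary case $t \notin S_i$; this extra care is harmless but not a genuinely different approach.
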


\begin{proof}
 Let $(W,T) \in \mathcal{L}$ and suppose it is not a witness.
 For each mapping $i$, let $Q_i$ be the set of elements $e \in Q_i \subseteq W \setminus T$ with $x_e > 0$ and $\pi(e) = i$.

 Then, for every index $i$, Lemma \ref{lem:app:on-a-line} obtains a witness-cover for all elements in $Q_i$ of size at most $2$.
 The union of covers for each index intersecting $e$ yields a witness-cover of size at most $2k$.
\end{proof}

\begin{restatable}{application}{appContraPolymatroids}[Optimization over contra-polymatroids]
 Given a supermodular, non-negative rank $r$ and elements $E$ with cost $c_e$.
 Find a solution to $\min\{ c^T x : \sum_{e \in S} x_e \geq r(S)\; \forall S \subseteq E \}$.

 $\rightarrow$ Theorem \ref{thm:sv:greedy-approximation} provides an optimum solution, which coincides with \cite{edmonds1970submodular}.
\end{restatable}

\begin{proof}
 The problem naturally fits into the form (\ref{LP:P}) with the Boolean lattice $\mathcal{L} = 2^E$ and $a_{S,e} = 1$, if $e \in S$, and zero, otherwise.
 In this case, $\delta = 1$, $b = 1$ and $a = 0$.
 Moreover, the rank is non-negative.
\end{proof}

\begin{restatable}{application}{appKnapsackCover}[Knapsack cover]
 Given Elements $E$ with weight $u_e$, cost $c_e$ and multiplicity $d_e$, find a multiset of elements, each with multiplicity at most $d_e$, with minimum cost covering $D$.

 $\rightarrow$ Theorem \ref{thm:sv:greedy-approximation} shows a 2-approximation which coincides with \cite{carnes2008primal} (or \cite{mccormick2016primal} for $d \not\equiv 1$).
 Also holds if $c$ is a separable, monotone increasing, convex function and $u$ is a separable, monotone decreasing, concave function.
 In this case, the running time depends linearly on $\max_e d_e$.
\end{restatable}

\begin{proof}
 For this proof, we consider the complement of supports in matrix $A$.
 Let $|E| = n$.
 First, let us consider the case of $d \equiv 1$.
 Define $A$ on the Boolean lattice with $a_{S,e} = \begin{cases} u_e & e \not\in S \\ 0 & e \in S\end{cases}$.
 Moreover, let $r(S) = D - \sum_{e \in E} u_e$.
 Then $a = 1$ (in Theorem \ref{thm:sv:greedy-approximation}) and $b = 1$.
 The latter is due to the linear relation between $r$ and $A$, we have $r(S) - r(S \cup \{e\}) = u_e$.
 For the same reason, $\delta = 1$, as $a'_{S,e} < u_e$, only if $r(S \cup \{e\}) < 0$.

 Now, for item multiplicities, we can not ensure that $b = 1$, as for $r(S') \leq 0$, the truncation may yield that the algorithm still has to perform rounding for the final element chosen.
 This would only get a 3-approximation with Theorem \ref{thm:sv:greedy-approximation}.

 Instead, let us consider the ring-family $\mathcal{L}$ induced by the ideals of the following partial order.
 Let $e_i^j, 1 \leq i \leq n, 1 \leq j \leq d_i$ with $e_i^j \prec e_i^{j+1}$ for all $i,j$.
 That is, the copies of each element form a chain.
 Set $c_{e_i^j} = c_i$ and $u_{e_i^j} = u_i$ for all elements, that is, each element gets the unit cost and unit increase from the knapsack elements.
 We also use the same rank function as before, that is, $r(S) = D - \sum_{e \in S} u_e$ for $S \in \mathcal{L}$.
 The instance has the same parameters as before, hence, yields a 2-approximation with Theorem \ref{thm:sv:greedy-approximation}.

 Unfortunately, the instance has pseudopolynomially many elements.
 But a careful implementation of the algorithm can handle this implicitly.
 There is a bijection between vectors $v \in \ZZ_+^E$ with $0 \leq v_e \leq d_e$ and the ideals in $\mathcal{L}$.

 Since all copies of an element $e_i^j$ are equivalent, all copies will become tight at the same time.
 Let us assume that the copies of element $e$ became tight.
 At this point, instead of adding one copy after another in up to $d_e$ iterations, we may compute $s = \min\{ \frac{r(S)}{u_e}, d_e \}$.
 If $d_e = s$, add all copies to the solution and iterate.
 Otherwise, add only the first $s$ copies of element $e$ to the solution.
 The algorithm terminates by choice of $s$ and $r$.

 Finally, if $c$ and $u$ are separable, monotone and convex and concave, respectively, we can use the same construction as above.
 We set the weight and cost of each copy to the marginal differences of the functions, that is, $c_{e_i^j} = c^i(j) - c^i(j-1)$ and $u_{e_i^j} = u^i(j) - u^i(j-1)$.

 From every solution $x$ to (\ref{LP:P}), we can construct a solution to the knapsack problem.
 If the support of solution $x$ was not an ideal in $\mathcal{L}$, we can replace it to become one in a post-processing step without increasing the cost (as $c$ was supposed to be convex) and without decreasing the weight (as $u$ was supposed to be concave).
 Moreover, there is a bijection from feasible solution vectors of the knapsack problem to feasible solutions in (\ref{LP:P}), which are ideals in $\mathcal{L}$.
 In this case, copies of an element no longer become tight at the same time, however, due to $c$ and $u$, they will become tight in the order of an ideal.
 Hence, the algorithm will find solutions which are an ideal.
 The running time will depend linearly on $\max_e d_e$.
\end{proof}

\begin{restatable}{application}{appSubsetCover}[Subset cover]
 Given a groundset $G$ and subsets $T_i \subseteq G, i \in I$ with cost $c_i$, find a set $S \subseteq I$ with $\cup_{i \in S} T_i = G$ of minimum cost.

 $\rightarrow$ Theorem \ref{thm:sv:greedy-approximation} provides a $\max_i |T_i|$-approximation.
 The best known greedy approximation for this problem yields a $\log(\max_i |T_i|)$-approximation \cite{chvatal1979greedy}.
\end{restatable}

\begin{proof}
 In the construction from Proposition \ref{prp:sv:log-inapproximable}, we have $\beta = 1$ and $\delta = \max_i |T_i|$ in Lemma \ref{lem:app:delta}.
 Hence, $\delta \leq \beta \gamma = \max_i |T_i|$ in Theorem \ref{thm:sv:greedy-approximation}.
 Moreover, $a=0$, as $r$ is non-negative and $b=1$ due to Lemma~\ref{lem:app:b}.
\end{proof}

\begin{restatable}{application}{appMatroidIntersection}[Contra-polymatroid intersection]
 Given $p$ contra-polymatroids $(E,\mathcal{F}_i), 1 \leq i \leq p$, find a vector $x \in \mathcal{F}_1 \cap \dots \cap \mathcal{F}_p$ of minimum cost $c^T x$.

 $\rightarrow$ Theorem \ref{thm:ps:greedy-approximation} yields a $p$-approximation, matching \cite{jenkyns1976efficacy}.
\end{restatable}

\begin{proof}
 The value $k$ in Theorem \ref{thm:ps:greedy-approximation} is always bounded by $|\mathcal{U}|$.
\end{proof}

\begin{restatable}{application}{appFlowCoverOnMultiplePath}[Flow cover on $k$ lines]
 Let $G = (V,F)$ be a graph and let $P_1,\dots,P_k$ be paths in $G$ such that $\bigcup_i P_i = F$.
 Moreover, let $E_i$ be a family of subpaths of $P_i$.
 Let $E = \bigcup_i E_i$ and define a weight $u: E \rightarrow \ZZ_+$, cost $c: E \rightarrow \ZZ_+$ and let $D: F \rightarrow \ZZ_+$ be a demand function.
 Find a subset of $E$ such that every demand is covered and the solution has minimum cost.

 $\rightarrow$ Theorem \ref{thm:ps:greedy-approximation} provides a $4k$-approximation, where $k = \max_{e \in E} |\{i : e \in P_i \}|$ is the maximum number of paths an edge appears in.
 For $k=1$, this matches the best known approximation, obtained by \cite{bar2001unified}. To the best of our knowledge, no result for $k > 1$ was known.
 \label{app:flow-cover-on-k-paths}
\end{restatable}

\begin{proof}
 The problem can be formulated as
 $$\min \left\{ \sum_{P \in E} c_P x_P : \sum_{P : e \in P} u_P x_P \geq r(U,S) \; \forall (U,S) \in \mathcal{L} \right\},$$
 with $r(U,S) = D(U) - \sum_{P \in U \cap S} u_P$.
 Apply Corollary \ref{cor:app:on-multiple-lines}.

 Therefore, for every subpath $Q$ of $P_i$, define $\pi(Q) = i$.
 Moreover, for every path, let $S_i = P_i$ be the ordered set of edges in $P_i$.
 Moreover, define $a^{\pi(Q)}(Q)$ and $b^{\pi(Q)}(Q)$ as the distance on path $P_i$ to the startpoint, respectively, the endpoint of $Q$.
\end{proof}

\begin{restatable}{application}{appKnapsackCoverPrecedenceConstraints}[Knapsack cover with precedence constraints]
Same input as knapsack cover but, additionally, a directed, acyclic graph $G = (E,A)$ is given.
Find an ideal in $G$ with weight at least $D$ of minimum cost.
%  Given Elements $E$ with weight $u_e$, cost $c_e$, a directed, acyclic graph $G = (E,A)$ and a demand $D$, find an ideal in $G$ with weight at least $D$ of minimum cost.

 $\rightarrow$ Theorem \ref{thm:ps:greedy-approximation} provides a $w$-approximation, where $w$ is the width of $G$ (matching \cite{mccormick2016primal}).
 The same bound holds if there are $m$ demands to be covered.
 \label{app:knapsack-cover-precedence-constraints}
\end{restatable}

\begin{proof}
 McCormick et al. have shown that knapsack cover with precedence constraints can be formulated in the following way \cite{mccormick2016primal}.
 For an ideal $A$ in $G$, let $\mathcal{P}(A) = \{ e \in E : |\delta^+(v) \setminus A| = 0 \}$ be the set of vertices without any incoming edges, given that all elements from $A$ were deleted.
 Now, let us define $\mathcal{T} = \{ \mathcal{P}(A) : \sum_{e \in A} u_e < D \}$ as the family of all sets of vertices without incoming edges obtained after removal of an ideal with weight below $D$.
 We use the ordering $\mathcal{P}(A) \preceq \mathcal{P}(A')$ if and only if $A \subseteq A'$.

 Moreover, let us define the rank function $r(U,S) = 1 - |U \cap S|$.
 Let $\mathcal{B} = 2^E$ be the Boolean lattice, then
 $$ \min \left\{ c^Tx : \sum_{e \in U \setminus S} x_e \geq r(U,S) \; \forall (U,S) \in \mathcal{L} \right\} $$
 is the formulation derived in \cite{mccormick2016primal}.

 If $w$ is the width of $G$, then every constraint contains at most $w$ elements.
 Hence, the maximum size of a witness cover is bounded by $w$.
 Note that $A'$ will be a binary matrix, hence, Theorem \ref{thm:ps:greedy-approximation} obtains the desired result.
\end{proof}

\begin{restatable}{application}{appGeneralizedSteinerTree}[Generalized steiner tree]
 Given a graph $G = (V,E)$ with cost function $c: E \rightarrow \ZZ_+$ and $k$ pairs of vertices $(s_i,t_i)$, find a subset of edges that contains an $s_i$-$t_i$ path for every $i$ and is of minimum cost.

 $\rightarrow$ Theorem \ref{thm:ps:greedy-approximation} provides a $2$-approximation, matching \cite{goemans1997primal}.
 \label{app:generalized-steiner-tree}
\end{restatable}

\begin{proof}
 We formulate the problem as discussed in \cite{goemans1997primal}, that is, as the following LP relaxation:
 $$ \min \left\{ \sum_{e \in E} c_e x_e : \sum_{e \in \delta(S)} x_e \geq 1 \; \forall S \subseteq V \text{ with } |\{s_i,t_i\} \cap S | = 1 \right\}.$$
 In our notation, the problem will be formulated with $\mathcal{B} = 2^E$ as the Boolean lattice.
 The set family $\mathcal{T} = \{ \delta(S) : S \subseteq V \text{ with } |\{s_i,t_i\} \cap S | = 1 \}$ and system $(A,r)$ with
 $$a_{(U,S),e} = \begin{cases} 1, & e \in U, |U \cap S| = 0 \\ 0, & \text{else} \end{cases} \quad\quad\quad\quad r(U,S) = 1 - |U \cap S|.$$
 Moreover, the ordering $\preceq_S$ will be the following partition $\mathcal{U}_1 \cup \mathcal{U}_2$ with $U \prec U'$ if and only if $U \in \mathcal{U}_1, U' \in \mathcal{U}_2$.
 Let $T_1,\dots,T_k$ be the maximal connected components in the graph $(V,S)$.
 Then $U \in \mathcal{U}_1$ if and only if there is a component $i$ with $\delta_G(T_i) = U$ and $\max_i |\{s_i,t_i\} \cap T_i| = 1$.

 In other words, the minimal elements in $\min (\mathcal{L} \setminus S)$ are the inclusionwise maximal components with respect to $S$ that contain at most one terminal from each terminal pair, and contain at least one terminal.
 For $S = \emptyset$, $\min \mathcal{L}$ consists of the family of vertex-sets containing single terminals.

 We want to apply Theorem \ref{thm:ps:greedy-approximation}.
 Hence, we need to show that for every iteration of the algorithm, there exists a small enough multiplicity witness-cover.

 Let us fix an iteration for $S \in \mathcal{B}$ and let $U_1,\dots,U_k$ be the maximal connected components in $(V,S)$.
 Let $G' = (V',F)$ be the graph obtained by contraction of the maximal connected components in $(V,S)$.
 Moreover, let $F = \{e \in E : x_e > 0\}$ be the set of edges selected by the algorithm connecting these components.

 By construction of the cleanup-phase, $F$ will be a forest in $G'$.
 If this was not the case, there is a cycle in $G'$, in other words, an edge could be removed in order to reduce the solution cost.

 We construct a multiplicity witness-cover as follows.
 Let $u_1,\dots,u_k$ be the vertices corresponding to the connected components in $(V,S)$.
 For every edge $e = \{u_i,v\}$ in $G'$, that corresponds to a set (U,S) increased by the algorithm in this iteration, we construct a witness $(\delta(U_{i,e}),S)$ as follows.
 Let $U_{i,e} \subseteq V$ be the set of vertices in the connected component of $u_i$ in $G'$ after removal of edge $e$.
 Then $|\delta(U_{i,e}) \cap F| = 1$, as $G'$ is a forest.
 Moreover, $\delta(U_{i,e})$ covers edge $e$ once.

 Any edge $e \in F$ will appear at most twice in $U_1,\dots,U_k$.
 It appears twice, if the edge connects two of the components $U_i$ and $U_j$.
 But in this case, the edge will appear in two witnesses, $(\delta(U_{i,e}),S)$ and $(\delta(U_{j,e}),S)$.
 Hence, the sets constructed are a multiplicity witness-cover.
 The number of sets constructed is at most $2k$.

 Hence, Theorem \ref{thm:ps:greedy-approximation} yields a $2$-approximation.
\end{proof}

\begin{restatable}{application}{appMinimumMulticutOnTrees}[Minimum multicut on trees]
 Given a tree $G = (V,E)$ with edge cost $c: E \rightarrow \ZZ_+$ and $k$ pairs of vertices $(s_i,t_i)$, find a minimum cost subset $E'$ of edges such that the forest $(V,E \setminus E')$ disconnects all $s_i$-$t_i$-pairs.

 $\rightarrow$ Theorem \ref{thm:ps:greedy-approximation} provides a $2$-approximation, matching \cite{garg1997primal}.
 \label{app:minimum-multicut-on-trees}
\end{restatable}

\begin{proof}
 The proof is essentially the same as in Theorem 4.10 in \cite{goemans1997primal}.
 We can formulate the problem with $\mathcal{B} = 2^E$ as the Boolean lattice.
 Let $P_i$ be the edge-set of the unique $s_i$-$t_i$-path in $G$, then $\mathcal{T} = \{P_1,\dots,P_k\}$ and
 $a_{(U,S),e} = 1$, if $e \in U \setminus S$, and zero otherwise.
 The rank is defined as $r(U,S) = 1 - |U \cap S|$.

 For this problem, we need a specific ordering of the sets in order to obtain the desired approximation guarantee.
 Therefore, let us root the tree at an arbitrary vertex $r$.
 We say that vertex $v$ lies at depth $d_v$, if the $r$-$v$-path in $G$ consists of $d_v$ edges.
 For a pair $(s_i,t_i)$, we define the least common ancestor as the vertex $x$ with smallest value $d_x$ which lies on the unique $s_i$-$t_i$ path, call this value $d_i$ for pair $i$.

 Then $P_i \preceq_S P_{i'}$ if and only if $d_i \geq d_{i'}$.
 The revised algorithm applied to the system $(A,r)$ is essentially the algorithm discussed in \cite{garg1997primal}.

 In order to prove the theorem, we need to show that for every $(U,S) \in \mathcal{L}$ increased by the algorithm, there is a witness-cover of size at most $2$.
 We will show the following.
 For every path $P_i$ there are at most two edges $e_1,e_2 \in P_i$ with positive value in the primal solution.
 The cleanup-phase implies that both edges have a witness $(U_j,S_j), j = 1,2$.
 Combination of both yields the desired witness-cover.
 For the sake of this proof, let $F = \{e \in E : x_e = 1\}$ be the support of the solution obtained by the revised greedy algorithm.

 For any given path $P_i$, for which the dual was increased, let $a_i$ be the least common ancestor.
 Let $T_1$ be the path from $s_i$ to $a_i$.
 We will see that $|T_1 \cap F| = 1$ (analogously for $T_2$, the path from $t_i$ to $a_i$).
 Suppose not, then $|T_1 \cap F| \geq 2$.
 Let us suppose that $F'$ is the solution obtained by removing all edges from $T_1 \cap F$ but the one nearest to $r$, call it $e^*$.
 We claim that $F'$ is still a feasible solution.

 Suppose not, let $T_j$ be any path for pair $j$ with $T_1 \cap T_j \neq \emptyset$.
 Then $T_1 \cap T_j$ is a subpath of $T_1$ from some vertex $v \in T_1$ to $a_i$.
 Otherwise, $T_j$ had a least common ancestor of depth $d_j > d_i$.
 This violates the ordering $\preceq_S$ in the iteration $i$ was chosen.
 Hence, if $F \cap T_j \cap T_1 \neq \emptyset$, then $e^* \in T_j$.
\end{proof}

\end{document}